\DeclareFontFamily{OT1}{rsfs}{}
\DeclareFontShape{OT1}{rsfs}{n}{it}{<-> rsfs10}{}
\DeclareMathAlphabet{\mathscr}{OT1}{rsfs}{n}{it}
\newtheorem{theorem}{Theorem}[section]
\newtheorem{prop}[theorem]{Proposition}
\newtheorem{claim}[theorem]{Claim}
\theoremstyle{definition} }
\theoremstyle{remark} \newtheorem{remark}[theorem]{Remark}
\newcommand{\Cbb}{{\mathbb{C}}}
\newcommand{\Pbb}{{\mathbb{P}}}
\newcommand{\cE}{{\mathscr E}}
\newcommand{\cL}{{\mathscr L}}
\newcommand{\cO}{{\mathscr O}}
\newcommand{\csm}{{c_{\text{SM}}}}
\newcommand{\oD}{{\overline D}}
\newcommand{\uD}{{\underline D}}
\newcommand{\uO}{{\underline O}}
\begin{document}

\begin{titlepage}
\begin{center}
\baselineskip=16pt
{\huge
New Orientifold Weak Coupling Limits \\
in F-theory\\
}
\vspace{2 cm}
\vfill
{\Large  Paolo Aluffi$^\clubsuit$  and Mboyo Esole$^\spadesuit$
  } \\
\vspace{.5 cm}

$^\clubsuit$
Mathematics Department, 
Florida State University,
Tallahassee FL 32306, U.S.A.\\

$^\spadesuit$Jefferson Physical Laboratory, Harvard University, Cambridge, MA 02138, U.S.A.\\
\end{center}

\vspace{2cm}
\begin{center}
{\bf Abstract}
\vspace{.5 cm}
\end{center}

{We present new explicit constructions of weak coupling limits of
F-theory generalizing Sen's construction to elliptic fibrations which
are not necessary given in a Weierstrass form. These new constructions
allow for an elegant derivation of several brane configurations that
do not occur within the original framework of Sen's limit, or which
would require complicated geometric tuning or break supersymmetry.
Our approach is streamlined by first deriving a simple geometric
interpretation of Sen's weak coupling limit.  This leads to a natural
way of organizing all such limits in terms of transitions from
semistable to unstable singular fibers.  These constructions provide a
new playground for model builders as they enlarge the number of
supersymmetric configurations that can be constructed in F-theory.  We
present several explicit examples for $E_8$, $E_7$ and $E_6$ elliptic
fibrations.}
 
 \vspace{1 cm}

 \begin{quote}
\begin{flushright}
{\it  He thought he saw a Platypus\\

Descending from a train:\\

He looked again, and found it was\\

A split D7-Brane.\\

`So sad, alas, so sad!' he said,\\

`That all has been in vain!'\footnote{From Matilde Marcolli,
{\em The Mad String Theorist's Song.}}
}
\end{flushright}
\end{quote}
  \vfill

\begin{flushleft}
$^\clubsuit${\tt aluffi@math.fsu.edu}\\ 
$^\spadesuit${\tt esole@physics.harvard.edu}\\
\end{flushleft}
\end{titlepage}

\tableofcontents


\newpage
\section{Introduction}\label{intro}

\bigskip

Type IIB string theory is an interesting framework for model building
in view of the possibility to stabilize moduli.  F-theory was introduced 
by Cumrun Vafa \cite{Vafa:1996xn} and provides an elegant
geometric formulation of non-perturbative IIB string theory in the
presence of 7-branes and D3 branes
\cite{Vafa:1996xn,Morrison:1996na,Morrison:1996pp}.  An F-theory
compactification requires an elliptic fibered space $Y\rightarrow B$
over the compact space $B$ on which the dual type IIB theory is
compactified.  The elliptic fiber is a two-torus whose complex
structure is interpreted as the type IIB axion-dilaton field. The
modular transformations of the elliptic fiber are a geometric
realization of type IIB S-duality transformations.  Around a singular
fiber, the monodromy of the axion-dilaton field characterizes the
non-perturbative $(p,q)$-seven-branes
\cite{Schwarz:1995dk, Schwarz:1995jq, Douglas:1996du,  
Johansen:1996am,   Gaberdiel:1997ud}.  An F-theory approach to 
type IIB model building brings more constraints coming from its
non-perturbative origin.  The type and number of branes, the gauge
group and the D3 tadpole are all determined by the geometry of the
elliptic fibration and particularly by the type of singular fibers.
The location of the 7-branes is given by the {\em discriminant locus}:
the locus of all singular fibers in the base.  Each irreducible
component of the discriminant locus corresponds to a $(p,q)$-brane. In
the case of F-theory compactified on elliptically fibered four-folds,
the D3 tadpole is given by the Euler characteristic of the four-fold
divided by~$24$ \cite{Sethi:1996es}.  For a review on F-theory type
IIB model building we refer to \cite{Denef:2008wq,Blumenhagen:2008zz}.
 
Several realistic features of particle phenomenology seem to emerge
naturally in the context of local models in F-theory where gravity is
decoupled \cite{Beasley:2008dc,Beasley:2008kw,Heckman:2009mn}.  For
reviews on local model building in F-theory, we refer to
\cite{Heckman:2009mn,Heckman:2008rb,Donagi:2008ca,Bourjaily:2009vf}.
The main shortcoming of local model building is that it does not keep
track of all the global consistency conditions required by the full
string theory. For example, the tadpole cancellation conditions are
usually not addressed.  Recently, there have been several attempts to
reconcile the results of local model building with all the global
constraints coming from F-theory on compact manifolds
\cite{Blumenhagen:2008zz,Andreas:2009uf,Blumenhagen:2009up,
Donagi:2009ra, Marsano:2009ym,Marsano:2009gv}.  In this context,
defining the weak coupling limit of F-theory is a crucial step. 
So far, Sen's weak coupling limit of F-theory was the only option
available~\cite{Sen:1997gv}.

The goal of this paper is to develop new ways of taking the weak
coupling limit of F-theory, generalizing Sen's technique to other
families of elliptic fibrations.  The results obtained here can also
be read `backwards' as providing new F-theory lifts for type IIB
orientifold compactifications. Interestingly, the new weak coupling
limits presented here avoid several inherent complications of Sen's
limit of a Weierstrass model.  In particular, we can avoid the
singularities of the D7 divisors that are characteristic of Sen's weak
coupling limit \cite{AlEs,Collinucci:2008pf,Braun:2008ua}.  We also
propose F-theory constructions that satisfy some important topological
relations required by the consistency of the duality between type IIB
and F-theory \cite{AlEs, Collinucci:2008pf}. This is done in
situations where the same relation would inevitably be violated or
would require turning on fluxes that would break supersymmetry in the
context of Sen's limit of a Weierstrass model. For example, we present 
F-theory engineering of brane-image-brane configurations that
do not break supersymmetry through the violation of D-term
constraints in contrast to a typical brane-image-brane configuration
resulting from the usual Sen's limit as explained in \S4.4 of
\cite{Collinucci:2008pf}.

The structure of this paper is as follows.  In the rest of the
introduction, we review the physics of Sen's weak coupling limit of
F-theory and we introduce the main ideas behind the constructions
presented in this paper. In particular, we discuss the geometrization
of Sen's limit and motivate the introduction of new families of
elliptic fibrations. As an illustration, we discuss the physics of
some explicit examples constructed in this paper that would suffer
from several difficulties in the usual Sen's limit.  In section
\ref{Sinfib}, we define the new families of elliptic fibrations we will 
consider in the paper and study systematically their singular fibers. In 
section \ref{Wcl} we derive the new weak coupling limits for the families
introduced in section \ref{Sinfib}. In section \ref{Tadrel} we prove a
compact generalization of Sethi-Vafa-Witten formula to compute the
Euler characteristic of an elliptic fibration in terms of its base,
and we analyze the topological relation induced by the matching
between the D3 tadpole in F-theory and in type IIB in the absence of
fluxes. Inspired by the physics, we can prove more general
relations, valid without any restriction on the dimension of the
elliptic fibration and without imposing the Calabi-Yau condition.  In
section \ref{morelimits} we provide a list of weak coupling limits for
$E_8$, $E_7$ and $E_6$ elliptic fibrations. Finally we conclude with a
discussion in section \ref{Final.Conclusion}.

\subsection{Sen's weak coupling limit}\label{Senswcl}
In the weak coupling limit of F-theory, one obtains a type IIB theory with 
with O7 orientifold planes in order to cancel the D7 tadpole.  
In the weak coupling limit of F-theory, Ashoke Sen
has provided the only systematic way to derive a type IIB orientifold
   limit of F-theory in the familiar case of elliptic fibrations given by
a Weierstrass model \cite{Sen:1997gv}.  For reasons that we will
explain later, we refer to the Weierstrass model as an
$E_8$-fibration. Its defining equation is
\begin{equation}\nonumber
E_8: y^2=x^3+ F x z^6 + G z^6,
\end{equation}
where $F$ and $G$ are respectively sections of ${\mathscr L}^4$ and
${\mathscr L}^6$, where ${\mathscr L}$ is a line bundle over the base
$B$. In the case of a Calabi-Yau elliptic fibration, we have
$c_1(B)=c_1(\mathscr{L})$. A Weierstrass model admits a discriminant
$\Delta=4 F^3+ 27 G^2$ whose vanishing locus determines the location 
of  singular fibers. There are essentially two types of singular
fibers: the general element of the discriminant locus $\Delta=0$ is a  
nodal curve; when $F=G=0$, the nodal curve specializes to a cuspidal
curve. Sen's limit is given by the expression
\begin{equation}\nonumber
F= -3 h^2 + C \eta,\quad  G=-2 h^3+ Ch\eta+ C^2 \chi. 
\end{equation}
At leading order, the discriminant and the $j$-invariant are 
\begin{equation*}
\Delta\sim h^2 (\eta^2+ 12 h\chi), \quad j\sim \frac{h^4 }
{C^2(\eta^2+ 12 h\chi)}.
\end{equation*}
One can construct a Calabi-Yau 3-fold which is a double cover of the base:
$$ 
X: \xi^2=h,
$$
and the involution is given by $\sigma:X\rightarrow X:\xi\mapsto -\xi$.
The orientifold involution is then  $\Omega \sigma (-)^{F_L}$, where 
$\Omega$ is the world-sheet orientation reversal and $F_L$ is the 
space-time fermion number from the left-moving sector of the world-sheet.
The orientifold plane in $X$ is $O:\xi=0$ while in the base it was given by
$\underline{O}:h=0$.  The D7 locus in the base was
  $\underline{D}:\eta^2+12 h \chi=0$ while in~$X$ it is given by
$D:\eta^2+12 \xi^2 \chi=0$.

The physics of Sen's limit has been explored recently in several
papers  \cite{AlEs,Braun:2008ua,Collinucci:2008pf, Brunner:2008bi,Collinucci:2008zs,Collinucci:2009uh}.  
We would like to emphasize the
following properties \cite{Collinucci:2008pf,AlEs,Braun:2008ua}:
\begin{itemize}
\item An orientifold seven plane is a perturbative object. It splits into 
two non-pertur\-bative $(p,q)$-branes as the coupling becomes
stronger \cite{Sen:1997gv}.
\item 
The locus of the D7 brane $D:\eta^2+12 \xi^2 \chi=0$ is singular along
the locus $\eta=\xi=0$ of double points and the singularity worsens on
the pinch locus $\eta=\xi=\chi=0$. Near each of these pinch points, we
can use $\eta,\chi$ and $\xi$ as local coordinate and the surface
wrapped by the D7 brane looks like a Whitney umbrella $D:\eta^2+12
\xi^2 \chi=0$. We will refer to such a singular D7-brane as a {\em Whitney 
D7-brane}. The presence of the singularities forces one to carefully define 
the Euler characteristic used to compute the induced D3 charge 
\cite{AlEs,Collinucci:2008pf}.  
\item
A brane-image-brane pair is just a specialization of a Whitney D7-brane 
$u^2-v^2 w=0$ when $w$ is a perfect square; if $12\chi=\psi^2$, then $D$ 
splits as $D=D_++D_-$ with $D_\pm :\eta\pm
\xi \psi=0$.  In the base, the equation of a Whitney D7-brane is
$\underline{D}: \eta^2+ 12 h\chi=0$, which specializes to
$\underline{D}:\eta^2+ h\psi^2=0$ for a brane-image-brane pair.
Interestingly, in the base, a brane-image-brane pair has the structure
of a Whitney umbrella ($u^2-v^2 w=0$ in $\Cbb^3$) whereas a Whitney
brane has the structure of a cone ($u^2-v w=0$ in $\Cbb^3$).

\item 
The D7 branes appearing in limit always have a double intersection
(or more generally an intersection with even multiplicity) with an  
orientifold seven plane. This double intersection property is a direct 
consequence of the Whitney umbrella geometry in Sen's limit 
\cite{Collinucci:2008pf,Braun:2008ua}. Indeed, as $\xi=0$, we get 
$\eta^2=0$.  More generally, it can be seen as a consequence of 
Dirac quantization in the case of O7$^-$ plane \cite{Collinucci:2008pf}.  
\item 
D3 branes being S-duality invariant, the D3 tadpole does not depend on
the string coupling. Therefore, one would expect to find a matching
between the F-theory and the type IIB prediction of the D3 tadpole. In
F-theory, the D3 tadpole is given by the Euler characteristic of the
elliptic fibration. In type IIB, the D3 tadpole admits contributions
coming from the induced D3 charge of seven branes and orientifold
planes. In the absence of fluxes, this leads to a {\em F-theory/Type
IIB tadpole matching condition} \cite{AlEs,Collinucci:2008pf}:
\begin{equation}\nonumber
2\chi(Y)\overset ?=  4\chi(O)+\chi(D).
\end{equation}
As stated, this condition is not verified in Sen's limit, and in other
similar situations.  In cases where the Euler characteristic of the
four-fold overshoots\footnote{\label{pos}In these considerations we
implicitly assume that $\dim B=3$ and $c_1(B)^3>0$, for simplicity.}
the contribution from orientifold planes and D7 branes, one could
compensate by turning on world-volume fluxes on the compact surfaces
wrapped by the seven branes \cite{Collinucci:2008pf}.  In the case of
Sen's limit, if the D7 were wrapping a smooth locus one would find an
induced D3 brane in type IIB that would on the contrary overshoot the
F-theory prediction. This mismatch could not be corrected by turning
on world volume fluxes, since they would only contribute positively to
the induced D3 tadpole and therefore worsen the discrepancy.  However,
in Sen's limit the D7 brane wraps a singular hypersurface, and the
correct notion of Euler characteristic for such a locus is open to
interpretation. As shown in \cite{AlEs}, the tadpole matching
condition is satisfied in Sen's limit for an appropriate Euler
characteristic $\chi_o$, where the singularities provide a
contribution taking care of the discrepancy.  This Euler
characteristic $\chi_o(D)$ was introduced in 
\cite{AlEs,Collinucci:2008pf}. It satisfies all the physics tests as 
shown in \cite{Collinucci:2008pf}. Mathematically it can be defined as 
the Euler characteristic of a normalization of the singular surface $D$,
corrected by the contribution from the pinch locus \cite{AlEs}. This
can be understood as a generalization of the stringy Euler
characteristic \cite{AlEs}.
\item 
In the context of Sen's weak coupling limit of a Weierstrass model,
there is in general only one D7 brane located on an irreducible
locus. 
There are also smooth specializations, such as the brane-image-brane pairs;
these configurations naturally satisfy the double intersection property of 
7-branes and orientifold since the brane and its image brane always meet 
on the orientifold. However, without fluxes, they always leads to a type IIB
D3 induced tadpole which is lower (for a base $B$ of dimension~$3$,
and $c_1(B)^3>0$) than the one expected from F-theory. Therefore, they
always require the presence of world volume fluxes.  Since the world
volume fluxes also contribute to the D5 brane charges but are odd
under the orientifold involution, they gives different central charges
for the brane on $D_+$ and its image-brane on $D_-$. This leads to a
breaking of supersymmetry since the D-term constraints are not
respected \cite{Collinucci:2008pf}.
\end{itemize}

\subsection{The need for new weak coupling limits}\label{needfornew}
The previous review of the physics of Sen's weak coupling limit of a
Weierstrass model (an $E_8$-elliptic fibration) immediately raises the
question of its uniqueness, for the following reasons. Although it is
well appreciated that F-theory adds severe constraints to type IIB
theory, Sen's limit reduces very drastically the number of allowed
configurations.  It also implies that some very simple configurations
in type IIB (like the brane-image-brane configuration) are generally
non-supersymmetric.  The presence of the Whitney umbrella singularity
makes it more difficult to geometrically engineer specific
configurations of intersecting branes in the context of Sen's limit.
We consider all these limitations as an open invitation to find new
weak coupling limits.  The non-uniqueness of Sen's weak coupling limit
can already be appreciated within the original framework of a
Weierstrass model.  Consider for example the limit
\begin{equation}\nonumber
F=-3 h^2+ C\eta, \quad G=-2 h^3 +C(h\eta+\chi).
\end{equation}
It leads to 
\begin{equation}\nonumber
\Delta\sim  C h^3 \chi, \quad j \sim \frac{h^3}{C\chi}.
\end{equation}
This corresponds to a bound state of an orientifold $O:\xi=0$ with a
brane-image-brane on top of it ($D_\pm :\pm \xi=0$) together with a
brane on a smooth locus $D: \chi=0$. Since $D_\pm $ have the same
Euler characteristic as the orientifold plane, the F-theory-type- IIB
D3 tadpole matching gives in the absence of fluxes:
\begin{equation}\nonumber
2\chi(Y)\overset{?}{=}6\chi(O)+\chi(D).
\end{equation}
This relation does not hold.  We also note that $D$ does
 not have a double intersection with the orientifold plane. Each
time it intersects the orientifold plane, it also intersects the two
D7 branes located on $O$, in total we have an odd number of
D7 contributions.

The study of more sophisticated configurations within the Weierstrass
model quickly leads to complicated expressions that seem completely
unmotivated and arbitrary. This seems to indicate that we might win
some intuition and simplicity by considering new weak coupling limits
that do not arise from smooth Weierstrass models. In the
context of Sen's weak coupling limit of F-theory with Weierstrass
models, new methods with direct applications to model building have
been developed recently by Andres
Collinucci\cite{Collinucci:2008zs,Collinucci:2009uh}.  The same ideas
can be applied to the new weak coupling limits of F-theory we
construct in this paper.

\subsection{Moving away from Weierstrass models}

A smooth elliptic curve is a curve of genus one (with a marked point).
While curves of genus
zero are all isomorphic to a two sphere (a $\mathbb{CP}^1$), curves of
genus one are classified by a complex parameter which controls their
complex structure. This is expressed by the $j$-invariant of the
elliptic curve. Two curves of genus one are isomorphic if and only if
they have the same $j$-invariant.  The study of elliptic curves and
elliptic fibrations is often simplified by the following two
properties:
\begin{itemize}
\item Every smooth elliptic curve is isomorphic to an elliptic curve
written in Weierstrass form.
\item Every elliptic fibration is birationally equivalent to a
Weierstrass model.
\end{itemize}
When we consider an elliptic fibration which is not in Weierstrass
form, it can be useful to know exactly the Weierstrass form of a
birationally equivalent elliptic fibration. This is because there are
important properties of the elliptic fibration that could be computed
more simply in certain birationally equivalent elliptic fibrations,
and particularly in a birationally equivalent Weierstrass model.  
Two important examples are the location of the singular fibers and the
$j$-invariant of an elliptic fiber, which may be computed by using the 
Weierstrass model. Because of this, it is often believed that it is in fact 
enough to consider only Weierstrass models in F-theory. 
However, this can only be done at the price of a desingularization
process that may be difficult to handle. The Weierstrass model {\em per 
se\/} does not preserve all physically relevant quantities of a fibration:
for example, the type of singular fibers of a fibration is only recoverable
from the Weierstrass model through a local analysis of the equation
and of its discriminant (by Tate's algorithm, see \cite{Bershadsky:1996nh}),
not directly from the model itself.

In the case where the elliptic fibration is a four-fold, the topology
of the singular fibers plays a central role in F-theory since it
determines the Euler characteristic of the elliptic fibration and
therefore the F-theory D3 tadpole.  The Euler characteristic of an
elliptic fibration gets contributions only from singular fibers. A
birational transformation can change the type of singular fibers and
therefore change the Euler characteristic of the elliptic
fibration\footnote{While smooth birationally equivalent Calabi-Yau
varieties have the same Euler characteristic, \cite{MR1714818}, and
even commensurable Chern classes, \cite{math.AG/0401167,MR2280127}, we
note that even in the Calabi-Yau case the Weierstrass model of a
smooth elliptic fibration may be singular; the Euler characteristic
should be expected to change in this case.}.  For compactifications on
four-folds, this leads to a different prediction for the D3 tadpole
and therefore it changes the physics.

Once we consider elliptic fibrations which are not in Weierstrass
form, their Euler characteristic can be lower\footnote{See  
footnote~\ref{pos}.}  than the Euler characteristic of a smooth Weierstrass 
model defined on the same basis, with respect to the same line bundle.  
In particular, it is possible to accommodate
brane-image-brane configurations that do not require turning on fluxes
to satisfy the F-theory-type IIB tadpole matching condition. It
follows that such brane-image-brane configurations would not break
supersymmetry in contrast to those obtained as special configurations
of Sen's limit in a Weierstrass model. This illustrates the fact that
restricting ourselves to the Weierstrass model reduces artificially
the number of consistent compactifications allowed in F-theory.  As an
illustration, we will show that for the fibrations that we consider in
this paper there are in general only six possible configurations of
branes wrapping smooth divisors in a Calabi-Yau and satisfying the
F-theory-type-IIB tadpole matching condition. Only one of these six
configurations is in a Weierstrass model, and this configuration is
not yet realized as a weak coupling limit of F-theory. Moreover, the
allowed configurations reduce to three if we do not insist on the
Calabi-Yau condition, and none of these three configurations is
realized in a Weierstrass model. This is reviewed further in the
introduction and proved in \S\ref{Class.smooth}.

\subsection{Geometrization of Sen's weak coupling limit}

In our investigation of weak coupling limits, a geometric picture
quickly emerges.  It is based on the classification of semistable and
unstable singular fibers of an elliptic fibration.  In the case of
curves of genus one, semistable and unstable curves are simply defined
by the two possible behaviors of the $j$-invariant: it can be infinite
or undetermined of type $\frac{0}{0}$. When it is infinite, we have a
(singular) {\em semistable fiber} and when it is undetermined, it
corresponds to an {\em unstable fiber}.

Since the $j$-invariant is a birational invariant, it can be computed
without any loss of generality in a Weierstrass model birationally
equivalent to the elliptic fibration we consider. The singular fibers
are located on the discriminant $\Delta=4 F^3+ 27 G^2$. The smooth
fibers are such that $\Delta\neq 0$.  Since the singular fibers are
characterized by $\Delta=0$ and the j-invariant is
\begin{equation}\nonumber
j\sim \frac{F^3}{\Delta},
\end{equation}
we see that the semistable singular fibers are characterized by
$\Delta=0, F,G\neq 0$ while the singular unstable fibers are given by 
$F=G=0$.  It follows that an unstable fiber is a cuspidal curve, and
the set of unstable fibers ($F=G=0$) is the closure of the cuspidal
locus. Its defining characteristic is that this is the locus of fibers
having an undefined $j$-invariant of type $\frac{0}{0}$.  They consist
of cuspidal curves and their specializations. The general semistable
singular fiber is a nodal curve, but it can specialize to other types
of curves that are not a specialization of a cusp.

\begin{table}
\begin{center}
\begin{tabular}{||c|c|c||}
\hline
\hline
$j$-invariant & (elliptic) curve & condition \\
\hline
\hline
$j< \infty$ & smooth & $\Delta\neq 0$\\
\hline
$j=\infty$ & singular and  semistable & $\Delta=0, F, G\neq 0$\\
\hline
$j=\frac{0}{0}$ &singular and  unstable & $F=G=0$ \\
\hline 
\end{tabular}
\end{center}
\caption{Types of fibers of elliptic fibrations.}
\end{table}

In terms of the complex structure of the elliptic curve, the
$j$-invariant is expressed as
\begin{equation}\nonumber
j(q)=744+ \frac{1}{q}+\sum_n d_n q^n, \quad  q=\exp(2\pi i \tau), 
\quad d_n \in \mathbb{N}.
\end{equation}
Here $\tau=a +i e^{\varphi}$ determines the complex structure of the
elliptic curve. In F-theory, it is the axion-dilaton field. In
particular, its imaginary part, namely $e^\varphi=\frac{1}{g_s}$ is
the inverse of the string coupling $g_s$. The module of $q=\exp(2\pi i
\tau)$ is $|q|=g_s^{2\pi}$.  It follows that when we go to the weak
coupling limit ($g_s\rightarrow 0$), the $j$-invariant is dominated by
its pole $\frac{1}{q}$ and goes to infinity.  However, as we will see
in \S\ref{stratecomm}, as we reach the orientifold plane the
$j$-invariant is undetermined, of the form $\frac{0}{0}$.  It follows
that {\em the weak coupling limit singles out semistable singular
fibers away from the orientifold and specializes to an unstable fiber
on the orientifold locus.}  A weak coupling limit analogous to Sen's
always defines a transition from a semistable singular fiber (away
from the orientifold) to an unstable singular fiber (on the
orientifold plane).

\begin{center}
\begin{tabular}{llll}
{
\begin{tabular}{l}
\\
Sen's limit:\\
\\
\end{tabular}
}
&
{
\begin{tabular}{||c||}
\hline 
semistable fiber\\
\hline
 ($j=\infty$)\\
away from the orientifold\\
\hline
\end{tabular}
}
&

{
\begin{tabular}{c}
\\
\huge{ ${ \longrightarrow}$}\\
\end{tabular}
}

&
{
\begin{tabular}{||c||}
\hline 
unstable fiber\\
\hline
 ($j=\frac{0}{0}$)\\
on  the orientifold\\
\hline

\end{tabular}
}
\end{tabular}
\end{center}

\subsubsection{Geometric classifcation of weak coupling limits }

The observation that a weak coupling limit is characterized by a
 transition from a semistable singular fiber to an unstable one is the
 basis for our strategy to construct weak coupling limit of a given
 elliptic fibration:
\begin{itemize}
\item We choose a family of elliptic fibrations. 
\item We classify all singular fibers of the elliptic fibration. 
\item We determine those that are semistable and those that are unstable.  
\item Each specialization ``semistable $\rightarrow $ unstable" leads
to a family potentially yielding a weak coupling limit. 
\end{itemize}

\subsection{Consistency check}\label{conscheck}

In our investigation of orientifold weak coupling limits of
$F$-theory, we will always check if the following properties are
realized:

\begin{itemize}
\item 
Existence of a smooth Calabi-Yau double cover of the base. This will
be done by restricting the orientifold divisor $h=0$ to be a section
of an appropriate line bundle over the base \cite{Sen:1997gv}.
However, we will work
with elliptic fibrations of arbitrary dimension and we will not
restrict ourself to Calabi-Yau spaces. We will denote by $O$ the
orientifold plane and by $D_j$ the D7 branes\footnote{To avoid an
awkward notation, we will refer to these as {\em D7\/} branes even
when working over a base of arbitrary dimension.}. When we refer to
their location in the base rather than in the double cover, we
underline ($\underline{O}, \underline{D}_j$) their notation.
\item Vanishing of the D7 tadpole: the sum of class of the orientifold
planes is 8 times the sum of all the D7 branes when computed in the
double cover:
\begin{equation}\nonumber
\label{D7tadpole}
8[O]=\sum_i [D_i].
\end{equation}
\item 
F-IIB matching of the D3 tadpole in the absence of fluxes.  As a
non-trivial check of the consistency of the new weak coupling limits,
we study the matching of the F-theory and type IIB D3 tadpole in the
absence of fluxes. This leads to a non-trivial topological relation
among the Euler characteristics of the varieties involved in the
construction \cite{AlEs,Collinucci:2008pf}:
\begin{equation}\nonumber
2\chi(Y)= 4 \chi(O)+\sum_j \chi(D_i).
\end{equation}
In particular, in the presence of brane-image-branes, this condition
will ensure that there is no need to add brane fluxes and therefore we
avoid the usual breaking of supersymmetry by a violation of the D-term
constraints. When a D7 brane wraps a singular variety, we use the
Euler characteristic $\chi_o$, that is the Euler characteristic of its
normalization corrected by a contribution from the pinch locus.
\item 
D7 branes always intersect the orientifold plane with even multiplicity 
in order to avoid a violation of the Dirac quantization \cite{Collinucci:2008pf}.
We will not impose this on the construction, but we will observe that it
always occurs.
\end{itemize}

\subsection{Topological results inspired by tadpole relations }
An interesting aspect of the analysis of the weak coupling limit is
that the Calabi-Yau condition does not play any special role, nor the
dimension of the base of the elliptic fibration.  Moreover,
topological relations like the F-theory-type-IIB D3 tadpole matching
conditions are much more general than what we can expect from its
F-theory origin. Indeed, the relation obtained for Euler
characteristics can be lifted to a relation valid at the level of
Chern classes. For an elliptic fibration $\varphi:Y\rightarrow B$ and
a double cover of the base $\rho: X\rightarrow B$, we have
\begin{equation}\nonumber
2 \varphi_* c(Y)= 4 \rho_* c(O)+\sum_j \rho_* c(D_i),
\end{equation}
In this relation, the push-forwards bring the classes to the homology
of $B$.  This relation was already proven for Sen's weak coupling
limit in \cite{AlEs}.  In the case $D_i$ is a Whitney umbrella,
$\rho_* c(D_i)$ is understood as the Chern class of the normalization
of $D$ corrected by the Chern class of the pinch locus properly
pushed-forward to the homology of $B$.
 
The more refined relations involving Chern classes are actually
easier to prove than the relations at the level of the Euler
characteristic, which end up being direct consequences of the Chern
classes relations.  In this context the Sethi-Vafa-Witten relation
that gives the Euler characteristic of the elliptic fibration in terms
of the intersection theory of the base follows from a simple Chern
class relation.  For an elliptic fibration $\varphi: Y\rightarrow B$
we get
\begin{equation}\nonumber
\varphi_* c(Y)=(a+1) c(Z),
\end{equation}
where $Z$ is a divisor of a suitable class, closely related to the
fibration, and $a$ is the number of distinguished sections of the
fibration (see Remark~\ref{SVWrem}).  This formula is proved in
Theorem~\ref{SVWup} for $E_8$, $E_7$ and $E_6$ elliptic fibrations,
for which we have respectively $a=1,2,3$. The Sethi-Vafa-Witten
relation of the Euler characteristic is retrieved by simply evaluating
the top Chern class of the previous equation.

\subsection{$E_8, E_7, E_6$ families of elliptic fibrations}
To be specific, we will study in this paper the weak coupling limits
of $E_8$, $E_7$ and $E_6$ elliptic fibrations.  Each of these three
families can be defined over an arbitrary smooth base $B$ endowed with
a line bundle~$\cL$.  These are the most common families studied in
the literature and generalize the traditional Legendre, Jacobi and
Hesse family of elliptic fibrations.
A large class of such Calabi-Yau fourfolds was obtained and their
topological numbers computed in \cite{Klemm:1996ts} in the context of
Calabi-Yau hypersurfaces of weighted projective spaces.  They also
appear in the context of type IIB with Calabi-Yau three-folds
\cite{Klemm:1995tj,Andreas:1999ty} and in F-theory with discrete
torsion and reduced monodromy \cite{Berglund:1998va}.  In our
analysis, we do not restrict the dimension of the base nor do we
impose the Calabi-Yau condition.

The $E_8$ elliptic fibration is the usual Weierstrass model written as
a sextic in a weighted projective $\mathbb{P}^2_{1,2,3}$ bundle; the  
$E_7$ and $E_6$ fibrations are written respectively as a quartic in a
$\mathbb{P}^2_{1,1,2}$ bundle and a cubic in a $\mathbb{P}^2_{1,1,1}$
bundle.  They inherit their names from the theory of Del~Pezzo
surfaces \cite{Klemm:1996hh}.  In a strict sense, $E_8$, resp.~$E_7$,
$E_6$ singular fibers (types~$\text{IV}^*$, resp.~$\text{III}^*$,
$\text{II}^*$ in Kodaira's terminology) do not appear in the
corresponding fibrations; they appear in specializations of these
families, after a standard resolution process.  

\begin{remark}\label{fibcontext}
In this paper we are mostly interested in studying
the actual fibers of the maps realizing our fibrations, before any
resolution process: indeed, most of the families we consider will have
a smooth total space, and in any case we find that the strict notion
of fiber that we use suffices for our general strategy producing weak
coupling limits.
See~\S\ref{Sinfib}, and in particular Tables~\ref{table.E7SingFib}
and~\ref{table.E6SingFib}, for a list of the singular fibers that do
appear in the main fibrations studied here.
\end{remark}

Families of elliptic curves have been studied since at least as far
back as the 18th century.  Among
the most famous ones we have the Legendre, Jacobi and Hesse
families. They admits the following affine defining equations (see
chapter 4 of \cite{MR2024529}):
\begin{align*}
&\text{Legendre family}:& y^2 - x(x-1)(x-\lambda ) &=0,\\
&\text{Jacobi family}:& y^2 - x^4 -2 \kappa x^2  -1 &=0,\\
&\text{Hesse family}:&  y^3 +x^3+ 1 -3  \mu x y &= 0.
\end{align*}
These three families are toy models for the elliptic fibrations we
will consider in this paper. They easily illustrate the fact that
elliptic fibrations can have very different singular fibers.  The
Legendre family admits two singular fibers, one at $\lambda=0$ and
another at $\lambda=1$. They are both nodal curves. The Jacobi family
admits two singular fibers at $\kappa=\pm 1$, each corresponding to
the union of two conics intersecting transversally at two different
points.  Finally, the Hesse family has one singular fiber at $\mu=1$
and corresponding to 3 lines intersecting each other at one point and
forming a triangle.

The study of $E_8$, $E_7$ and $E_6$ Del Pezzo surfaces shows that they
admit elliptic fibrations with fibers of type
\cite{Klemm:1996hh,Klemm:1996ts}:
\begin{align*}
& E_8\ \text{fiber}: & y^2 +x^3 + z^6 + \lambda x y z&=0,\\
& E_7\ \text{fiber}:& y^2 + x^4 + z^4 -2\kappa x y z &=0,\\
& E_6\  \text{fiber}:&  y^3 +x^3+ z^3 -3 \mu x y z&= 0.
\end{align*}
The previous equations are respectively written as a sextic in
$\mathbb{P}^2_{1,2,3}$, a quartic in $\mathbb{P}^2_{1,1,2}$ and a
cubic in $\mathbb{P}^2_{1,1,1}$ where the indices refer to the weights
of $[z,x,y]$.

More generally, for us an $E_8$, $E_7$, or $E_6$ elliptic fibration is
a smooth elliptic fibration written respectively as as a sextic in a
$\mathbb{P}^2_{1,2,3}$-bundle, a quartic in a
$\mathbb{P}^2_{1,1,2}$-bundle, or a cubic in a
$\mathbb{P}^2_{1,1,1}$-bundle, and starting respectively as $y^2+x^3$,
$y^2-x^4$, $y^3+x^3$. They can always be put in the form:
\begin{align*}
& E_8\ \text{fibration} :&\quad y^2 +x^3 +  f  x z^4 + g z^6 &=0,\\
& E_7\ \text{fibration} :&\quad y^2 - x^4 - e x^2 z^2 - f x z^3 - g z^4  &=0,\\
& E_6\ \text{fibration}:&\quad  y^3 +x^3-  d x y z -  e x z^2 - f y z^2 
- g z^3 &= 0,
\end{align*}
where in each cases, in order to balance the equation, the
coefficients are sections of appropriate power of the line bundle
${\mathscr L}$ over the base $B$.  We will formalize this more
carefully in~\S\ref{Sinfib}.  The Legendre, Jacobi and Hesse families
are respectively specializations of the $E_8$, $E_7$ and $E_6$
elliptic fibrations.   
As reviewed in  \S\ref{Senswcl}, an $E_8$ elliptic fibration has essentially two types 
of singular fibers.  The general element of the discriminant locus $27 f^2+4 g^3=0$ 
corresponds  to a nodal curve, which specializes to a cuspidal curve when $f=g=0$. 
In Kodaira notation they are respectively singular fibers of type $I_1$ and $II$.   
$E_7$ and $E_6$ elliptic fibrations have a much richer spectrum of singular fibers. 
They are determined in \S\ref{Sinfib}  and summarized in tables \ref{table.E7SingFib} 
and \ref{table.E6SingFib}.  

\begin{center}
\begin{table}[hbt]
\begin{tabular}{|c|l|c|c|}
\hline 
\hline
Nane & \multicolumn{1}{|c|}{Conditions}  & 
$\begin{matrix}
\text{Type}\\
\text{ of singular fiber}\\
\text{(Kodaira notation)}\end{matrix}$ &  \\
\hline
\hline
$\Delta_{112}$ &  
$ 4e^3f^2+27f^4-16e^4g+128e^2g^2-144egf^2-256g^3=0
$
 & $I_1$ & 
$\begin{matrix}
 \includegraphics[scale=.2]{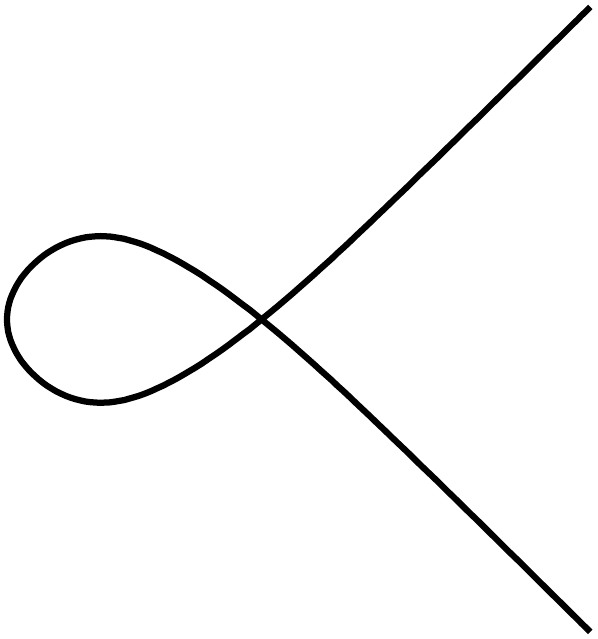}
 \end{matrix}$
 \\
\hline
$\Delta_{13}$ &$ e\ne 0 ,\quad 8e^3+27 f^2 = 0, \quad e^2+12 g=0$ & $II$  &
$\begin{matrix}
 \includegraphics[scale=.2]{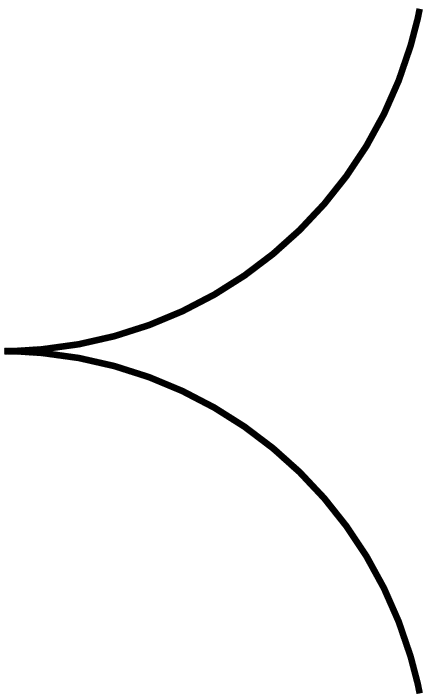}
 \end{matrix}$
  \\
\hline
$\Delta_{22}$ &$g\ne 0 ,\quad f=0 ,\quad e^2=4g$ & $I_2$  & 
$\begin{matrix}
\includegraphics[scale=.2]{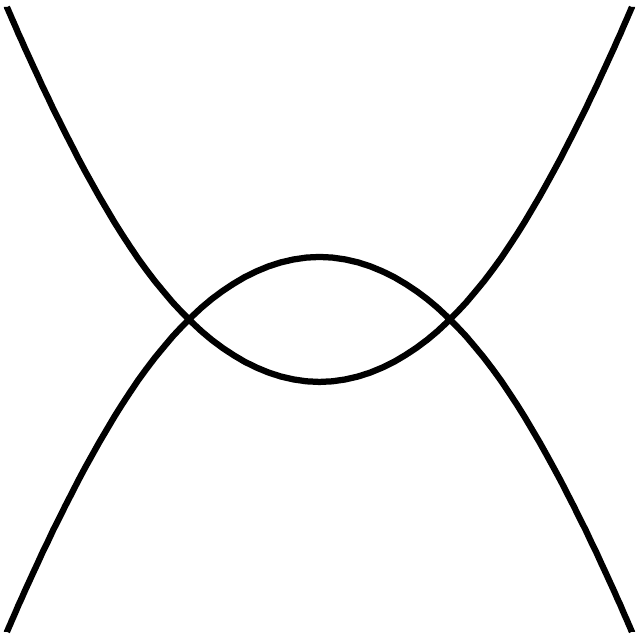}
\end{matrix}$
 \\
\hline
$\Delta_{4}$ &$ e=f=g=0 $ & $III$ & 
$\begin{matrix}
\includegraphics[scale=.2]{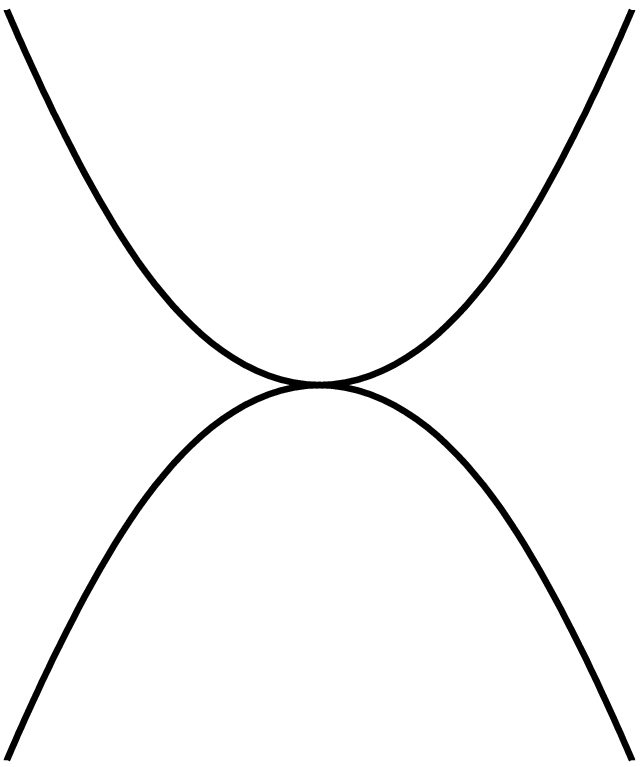}
\end{matrix}$
 \\
\hline
\end{tabular}
\caption{Singular fibers of $E_7$ elliptic fibrations. 
\label{table.E7SingFib}}
\end{table}
\end{center}

\begin{center}
\begin{table}[hbt]

\begin{tabular}{||l|l|c|c||}
\hline 
\hline
Name &\multicolumn{1}{|c|}{Conditions} & 
$\begin{matrix}
\text{Type of singular fiber}\\
\text{(Kodaira notation)}\end{matrix}$ &  \\
\hline
\hline
$\Delta_N$ & $\Delta=0$
 & $I_1$ &
$\begin{matrix}
\includegraphics[scale=.2]{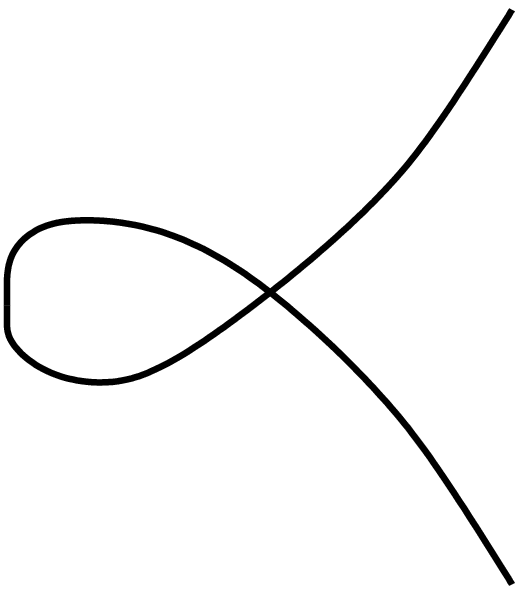}
\end{matrix}$
 \\
\hline
$\Delta_C$ &$
F=G=0$ & $II$ &  
$\begin{matrix}
\includegraphics[scale=.2]{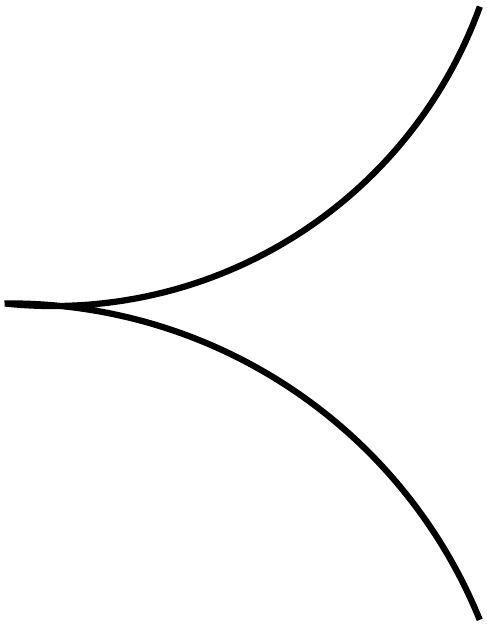}
\end{matrix}$
\\
\hline
$\Delta_Q $ &$ f =\rho e,\quad g =\frac{1}{27}d(9\rho^2 e-d^2)$  & $I_2$  &  
$\begin{matrix}
\includegraphics[scale=.2]{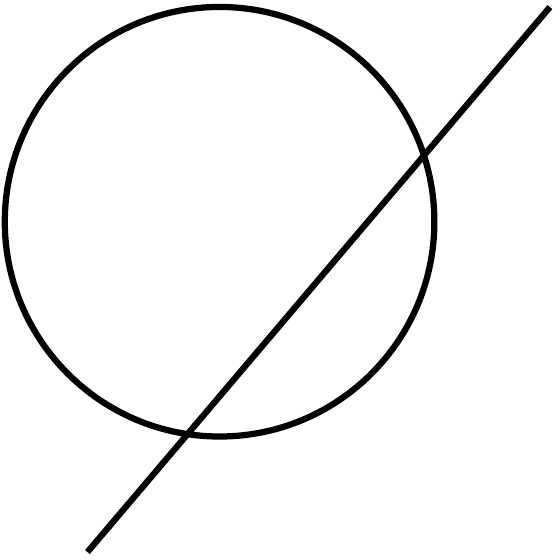}
\end{matrix}$
\\
\hline
$\Delta_P$ &$
f =\rho e, \quad 
e=\frac{1}{4}\rho d^2 ,\quad 
g =\frac{5}{108} d^3
$ & $III$ & 
$\begin{matrix}
\includegraphics[scale=.2]{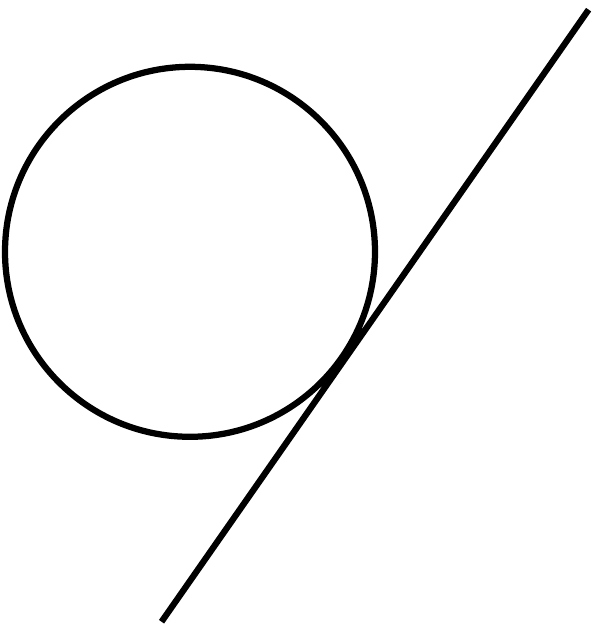}
\end{matrix}$
\\
\hline
$\Delta_T $ &$
e =f=0, \quad 
g =-\frac{1}{27}d^3
$
 & $I_3$ & 
 $\begin{matrix}
 \includegraphics[scale=.2]{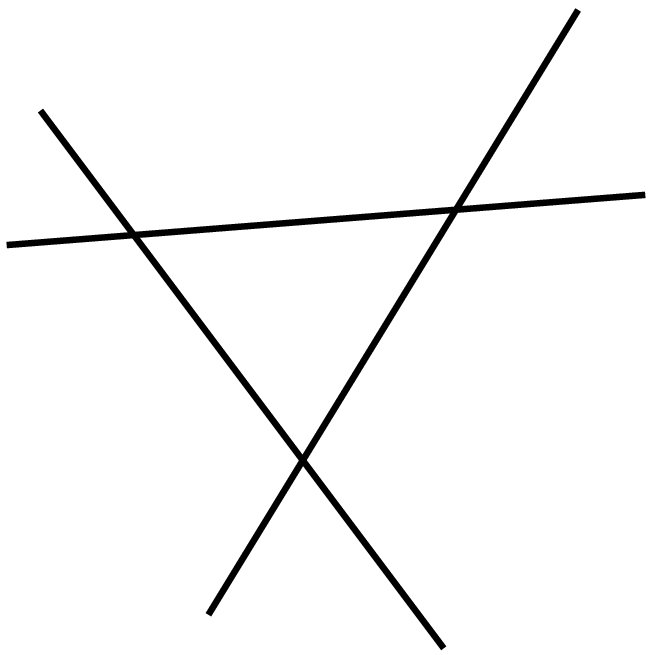}
\end{matrix}$
 \\
\hline
$\Delta_S$ &$\aligned   d=e=f=g=0 \endaligned $ & $IV$ & 
$\begin{matrix}
\includegraphics[scale=.2]{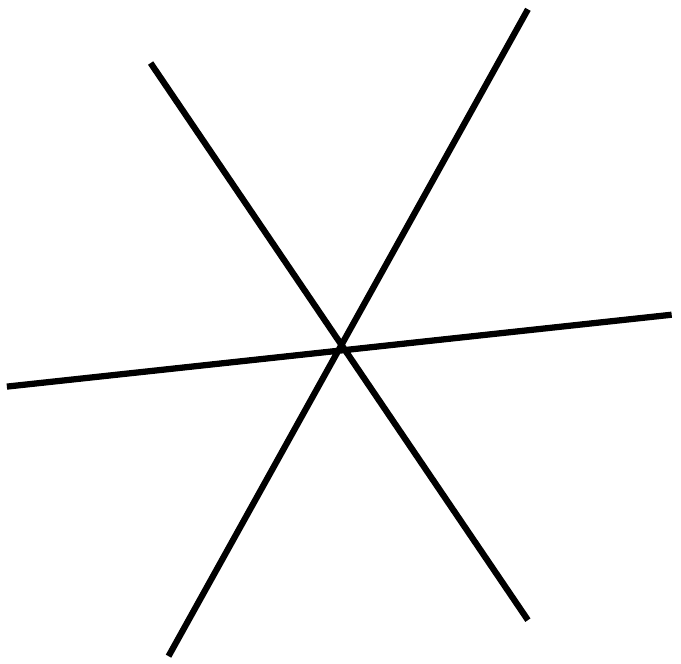}
\end{matrix}$
\\
\hline
\hline
\end{tabular}
\caption{Singular fibers of $E_6$ elliptic fibrations. 
Here $\rho^3=1$, 
$F=-3ef+\frac 92 gd-\frac 1{48} d^4$,
$G=-e^3-f^3+\frac{27}4 g^2+\frac 34 efd^2-\frac 58 gd^3-
\frac 1{864} d^6$, and $\Delta=4 F^3+27 G^2$.  \label{table.E6SingFib}
}
\end{table}
\end{center}

\subsection{Examples of new weak coupling limits}\label{IntroExamples}
We will present here some interesting configurations that would be
non-generic and not supersymmetric in the usual Sen's limit with a
smooth Weierestrass model, but which occur naturally in $E_7$
and $E_6$ fibrations.

\subsubsection{A  supersymmetric brane-image-brane 
configuration\label{IntroE71}}
In the $E_7$ case, we have 4 types of singular fibers.  In view of the
defining equation of an $E_7$ fibration, it is easy to see that the
singular fibers correspond to choices of $e$, $f$, $g$ such that the
quartic polynonial $Q(x)=x^4+ e x^2 z^2 + f x z^3 + g z^4$ admits
multiple roots.  All cases can be expressed in terms of partitions of
$4$. We will see in \S\ref{Sinfib} that there are in total 4 types of
singular fibers among which two are semistable (an ordinary nodal
curve ($\Delta_{112}$), and the transverse intersection of two conics
($\Delta_{22}$)) and two are unstable (a cusp ($\Delta_{13}$) and two
conics tangent at a point (${\Delta_4}$)).  We define a weak coupling
limit by a transition $\Delta_{22}\rightarrow \Delta_{4}$:
\begin{center}
 \includegraphics[scale=0.3]{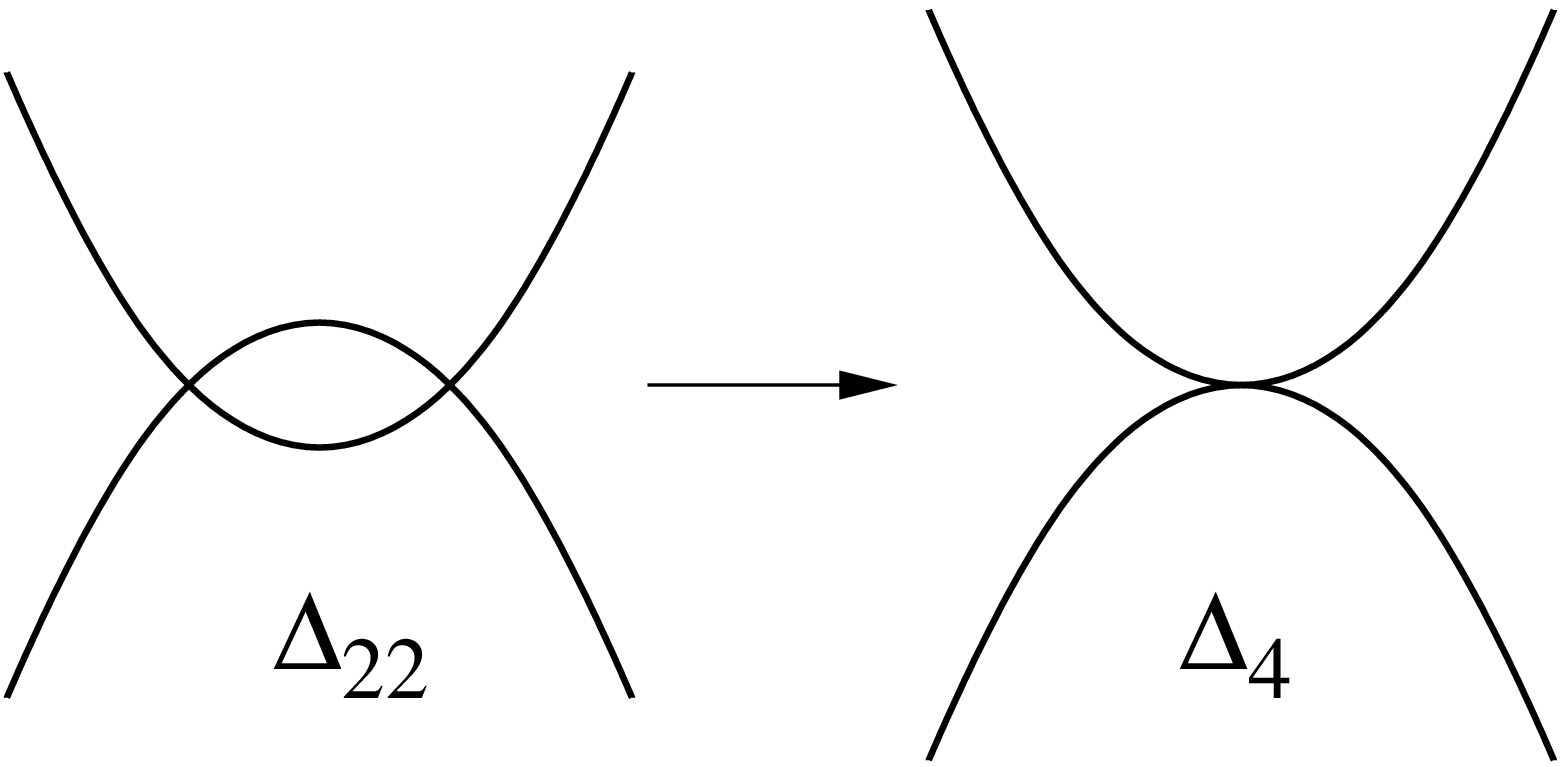}
\end{center}
$\Delta_{22}$ is given by $f=0$, $e^2=4g$ and $g\neq 0$ whereas
$\Delta_4$ is defined by $e=f=g=0$.  We get the following limit
\begin{equation*}
\begin{cases}
 e=-2 h \\
 f=C \phi\\
 g=h^2+ C\gamma
\end{cases}
\end{equation*}
We see indeed that at $C=0$ we are on $\Delta_{22}$ and when we
specialize to $C=h=0$ we are on $\Delta_4$.  We have at leading order
in $C$:
\begin{equation*}
\Delta\sim C^2 h^2 (\gamma^2-\phi^2 h), \quad j\sim
\frac{1}{C^2}\frac{h^4}{\gamma^2-\phi^2 h}.
\end{equation*}
Clearly when $C$ goes to zero away from $h\neq 0$ we are at the weak
coupling limit $j=\infty$.  But when we specialize to $C=h=0$, we get
an undetermination $j=\frac{0}{0}$.

\begin{figure}
\includegraphics[scale=0.4]{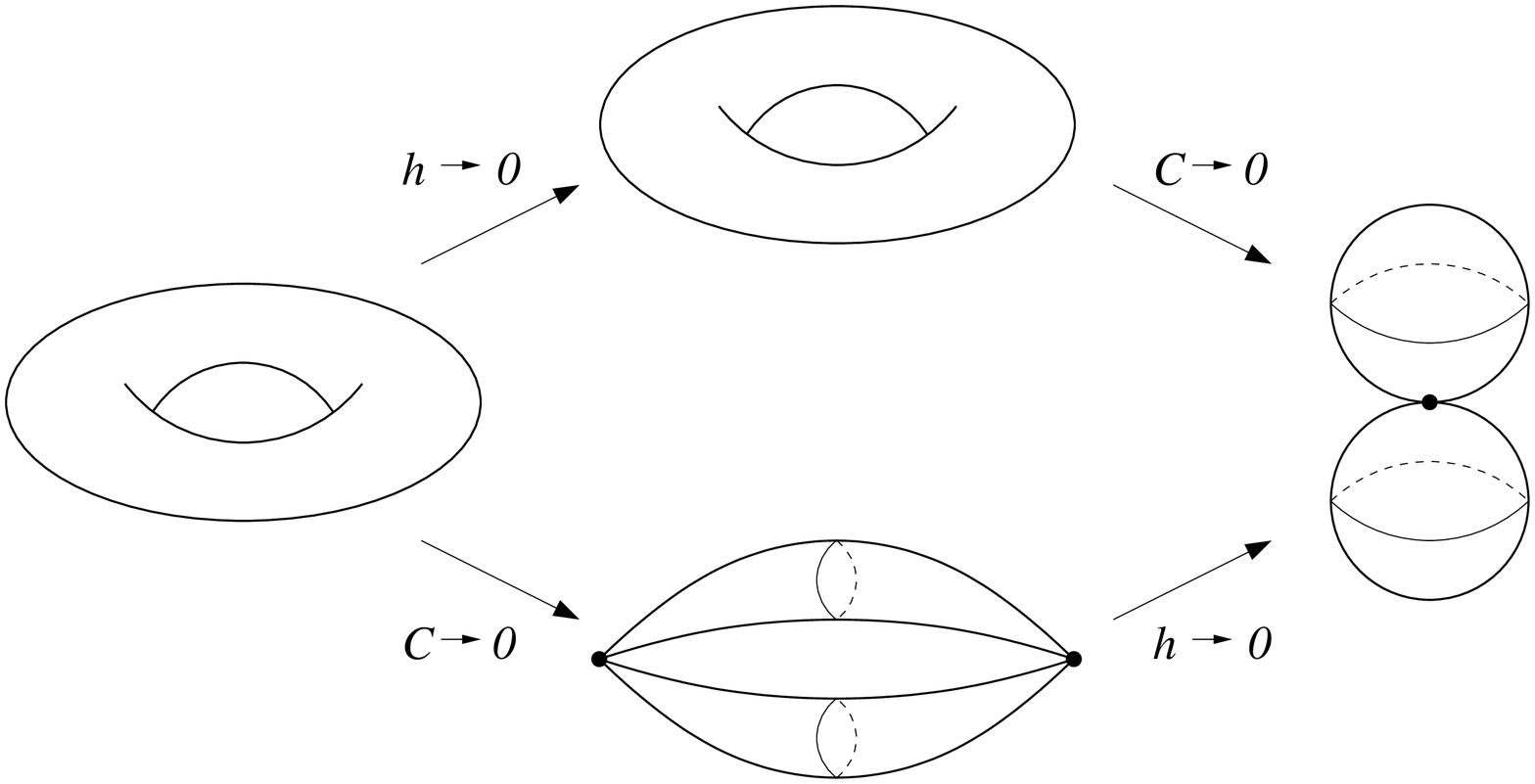}
\caption{{\bf Degeneration of the elliptic fiber as we take the weak
coupling limit $C\rightarrow 0$.}
This illustrates the case of the limit presented in \S\ref{IntroE71}
and \S\ref{IntroE6Lim}.  The limit $h\rightarrow 0$ is a
specialization to the orientifold locus $h=0$. As we take the weak
coupling limit ($C\rightarrow 0$), the elliptic curve reduces to the
union of two nonsingular rational curves intersecting transversally at
two distinct points. Topologically, this is two spheres meeting at two
distinct points. If we specialize to $h=0$ as we take $C\rightarrow
0$, the two rational curves become tangent to each other at a point.
In order to have a better understanding of the orientifold
configuration, we can get to $C=h=0$ by taking a different road.  If
we first specialize to $h=0$ before taking the weak coupling limit,
the elliptic fiber does not denegerate.  However, once we take the
weak coupling limit $C\rightarrow 0$ after taking $h=0$, all the
elliptic fibers flow to $e=f=g=0$ which corresponds to two  
2-spheres tangent at a point.
\label{E7WKL}} 
\end{figure}

The interpretation of this D-brane configuration is as follows.  In
the base, we have the orientifold plane $\underline{O}: h=0$ and a D7
brane-image-brane $\underline{D}:\gamma^2-\phi^2 h=0$.  In order to
appreciate that $\underline{D}$ corresponds indeed to a
brane-image-brane as seen in the base, we can move to the double cover
$X:\xi^2=h$, where the brane-image-brane will split into two
components. We get the following brane spectrum:
\begin{equation}\nonumber
O: \xi=0, \quad D=D_++D_-, 
\quad D_\pm :
\gamma\pm \phi \xi=0.
\end{equation}
We see that all the loci appearing here are smooth.
The F-theory type IIB tadpole matching without flux  gives 
\begin{equation}\nonumber
2\chi(Y)=4\chi(O)+\chi(D_+)+\chi(D_-).
\end{equation}
Since $D_+$ and $D_-$ are isomorphic, they have
the same Euler characteristic and therefore the tadpole matching
relation simplifies to
\begin{equation}\nonumber
\chi(Y)=2\chi(O)+\chi(D_+).
\end{equation}
Using the Sethi-Vafa-Witten relation for an $E_7$ fibration and the
adjunction formula to compute the Euler characteristic of $O$ and
$D_+$, it is easy to show that the tadpole matching relation holds
without any need for fluxes (cf.~\S\ref{E7224}).  Therefore the
brane-image-brane configuration does not break supersymmetry in
contrast to the special configuration of a brane-image-brane in a
Weierstrass model (see \S4.4 of \cite{Collinucci:2008pf}).

This tadpole relation holds at the level of Chern classes and for any 
dimension of the base and without the Calabi-Yau condition:
\begin{equation}\nonumber
\varphi_* c(Y)=2\rho_*c(O)+\rho_* c(D_+).
\end{equation}

\subsubsection{A  supersymmetric D7 Whitney - D7-brane-image-brane 
configuration}\label{IntroE7Lim2}

We can consider another limit defined by a transition
$\Delta_{112}\rightarrow \Delta_{13}$.  The details can be seen in
\S\ref{E711213}:
\begin{center}
 \includegraphics[scale=0.3]{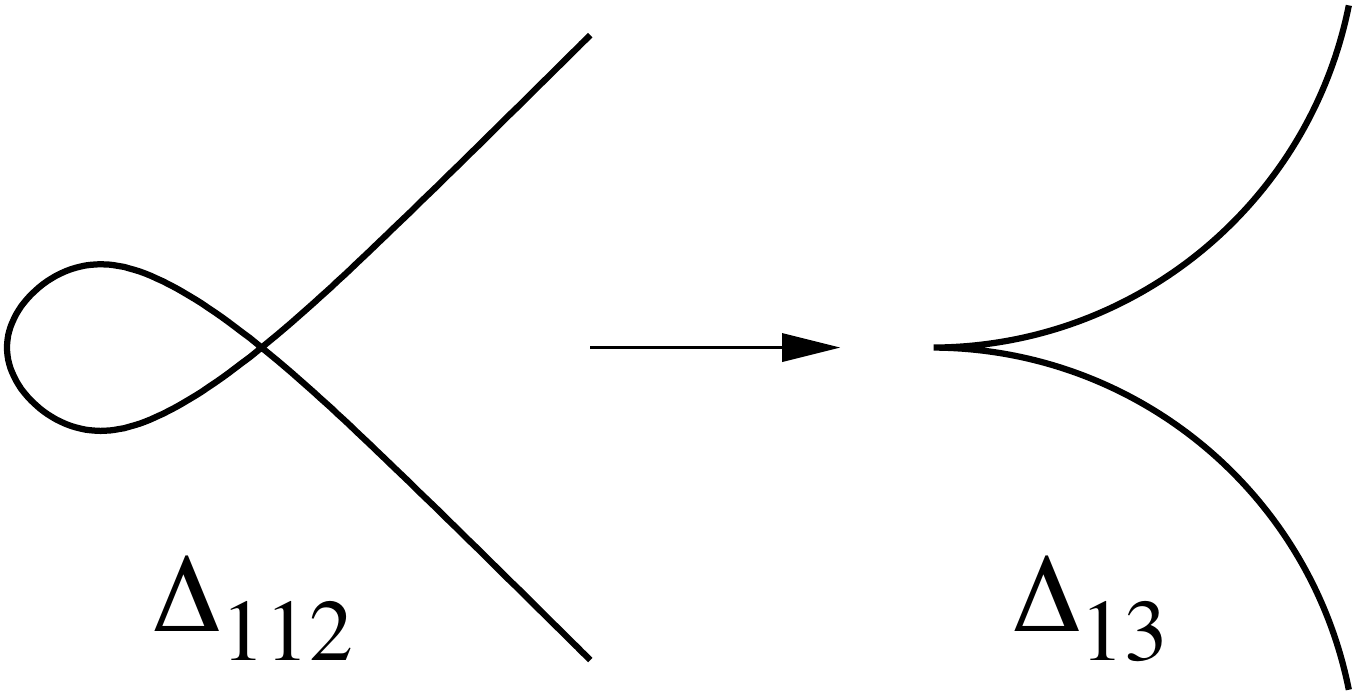}
\end{center}
The weak coupling limit is 
\begin{equation}\nonumber
\begin{cases}
e= - 6 k^2+ h\\
f=-2k (4 k^2-h)+ 2 C \phi\\
f=-k^2(3k^2-h)+ 2 C k \phi+ C^2 \gamma
\end{cases}
\end{equation}
We get at leading order in $C$
\begin{equation}\nonumber
\Delta\sim C^2 h^2 (h-4 k^2)(\phi^2- h \gamma), \quad j\sim\frac{1}{C^2}
\frac{h^4}{(h-4 k^2)(\phi^2-h \gamma)}.
\end{equation}
The D-brane spectrum is given by an orientifold plane
$\underline{O}:h=0$, a D7 brane-image-brane $\underline{D}_1: h-4 k^2$
and a singular D7 brane $\underline{D}_2:\phi^2-h \gamma$.  As we move
to the double cover $X:\xi^2=h$, we get
\begin{equation}\nonumber
O: \xi=0, \quad D_{1\pm}:\xi \pm 2 k=0, \quad 
D_2: \phi^2-\xi^2 \gamma=0.
\end{equation}
We see that $O$ and $D_{1\pm}$ are all smooth whereas $D_{2}$ has the
same singularities as in Sen's limit of a Weierstrass model.  The
D7-brane $D_{1+}$ and its orientifold image are in the same class as
$O$. Moreover, when $k$ vanishes, the brane-image-brane $D_\pm$ is on
top of the orientifold plane and $j\sim h^3$: $D_{1+}=D_{1-}=O$ as
$k\to 0$.

The F-theory/Type IIB  tadpole matching relation gives
\begin{equation}\nonumber
2\chi(Y)=4\chi(O)+\chi(D_{1+})+
\chi(D_{1-})+\chi_o(D_2).
\end{equation}
We can can simplify this relation by using $\chi(O)=
\chi(D_{1+})=\chi(D_{1-})$:
\begin{equation}\nonumber
2\chi(Y)=6\chi(O)+\chi_o(D_2).
\end{equation}
This relation does hold.  So we do not need fluxes here, and hence the
brane-image-brane configuration does not break supersymmetry.  In the
F-IIB tadpole matching relation, we have used the Euler characteristic
$\chi_o$ for $D_2$ since it has the singularities of a Whitney
umbrella. The fact that the relation holds (verified in
\S\ref{E711213}) is an additional non-trivial confirmation that this
is the right Euler characteristic to use in orientifold theories.

\subsubsection{Two  D7 brane-image-brane pairs in $E_6$}\label{IntroE6Lim}
An $E_6$ fibration admits a rich list of possible transitions from
semistable to unstable singular fibers.  Here, we consider the
specialization $\Delta_Q \rightarrow\Delta_P$ (with notation as in
\S\ref{SingE6}):
\begin{center}
 \includegraphics[scale=1]{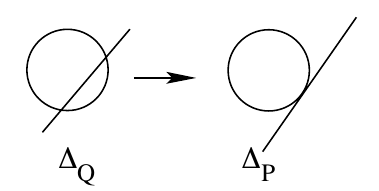}
\end{center}
The semistable fiber $\Delta_Q$ is a specialization of a nodal
curve. It corresponds to the transversal intersection of a conic with
a line.  The unstable curve $\Delta_P$ is a specialization of
$\Delta_Q$ where the line and the conic become tangent to each other.
The construction of this limit is given in details in section
\S\ref{E6lim}. The final result reads:
\begin{equation*}
\begin{cases}
d=6k \\
e = 9k^2+3h \\
f = 9k^2+3h+C\phi \\
g = 2k(5k^2+3h)+C(\gamma+k\phi)
\end{cases} \quad.
\end{equation*}
At leading order in $C$, we have 
\begin{equation*}
j\sim \frac 1{C^2}\cdot \frac{h^4}{(h+3k^2)(\gamma^2-h\phi^2)}.
\end{equation*}
We clearly reach the weak coupling limit when $C$ goes to zero. 
The limiting discriminant is given by 
\begin{equation*}
\Delta_h = h^2 (h+3k^2)(\gamma^2-h\phi^2).
\end{equation*}
Once we move to the double cover $X:\xi^2=h$, this corresponds to an
orientifold plane on $\xi=0$. A D7 brane-image-brane on
$D_{1\pm}:\gamma\pm \xi \phi=0$ and another one on $D_{2\pm} :\xi\pm
i\sqrt{3} k=0$.  All these branes are wrapping smooth divisors. The
brane-image-branes do not break supersymmetry since we can satisfy the
F-theory-type-IIB tadpole relation without any need for fluxes:
\begin{equation}\nonumber
2\chi(Y)=4\chi(O)+\chi(D_{1+})+\chi(D_{1-})+\chi(D_{2+})+\chi(D_{2-}).
\end{equation}
Since $[D_{1+}]=[D_{1-}]$ and $[D_{2+}]=[D_{2-}]=[O]$, we obtain the
simplified relation
\begin{equation}\nonumber
\chi(Y)=3\chi(O)+\chi(D_{1+}),
\end{equation}
verified in \S\ref{E6QP}.

\subsection{Singularities and the weak coupling limit of F-theory}
In the original $E_8$ limit considered by Sen, there is generally a
unique D7 brane, wrapping a singular divisor with the singularities of
a Whitney umbrella.  We have presented here two $E_7$ limits, one
admitting solely smooth divisors and another mixing smooth divisors
and divisors with Whitney umbrella singularities, and one $E_6$ limit
consisting of smooth divisors.  It is natural to ask for a
classification of all configurations of smooth divisors that can
appear generally for $E_8$, $E_7$, and $E_6$ fibrations.  The natural
constraint to consider is that the configuration should satisfy the
F-theory-type-IIB tadpole matching condition: this would ensure that
it can be supersymmetric even in the presence of brane-image-brane
pairs.

Surprisingly, this question has a very clean answer. Configurations
satisfying the tadpole matching condition on unrestricted bases form
an extremely short list: there are only three configurations of smooth
divisors satisfying this constraint (Theorem~\ref{nogothm}). Two of
these are precisely the $E_7$ and $E_6$ configuration we have realized
by weak coupling limits; the third one is not realized as such, and
cannot be realized in $E_6$, $E_7$, or $E_8$ fibrations.  In
particular, there are no configurations of smooth divisors satisfying
the tadpole matching conditions in $E_8$ fibrations. This indicates
that the appearance of the singular brane in Sen's limit cannot be
avoided.

There are three additional configurations of smooth divisors
satisfying the weaker requirement imposed by the tadpole matching
condition in the Calabi-Yau case, for a base of dimension~$3$: one in
$E_8$, one in $E_7$ and one in $E_6$. However, these extra three
configurations have not yet been realized through weak coupling
limits.  See \S\ref{Class.smooth} for details.


\section{Families of elliptic fibrations and their singular 
fibers}\label{Sinfib}
 
In this section, we define the elliptic fibrations of type $E_8$, $E_7$ and  
$E_6$ and study systematically their singular 
fibers (see Remark~\ref{fibcontext}).
We also introduce the notion of semistable and unstable fibers that will
play a crucial role in explaining the geometric interpretation of the
weak coupling limit.

We adopt the following terminology, after \cite{Klemm:1996ts}. 
Let $B$ be a nonsingular compact complex variety, endowed with a line
bundle $\cL$. We let $\cE=\cO\oplus \cL_1\oplus \cL_2$, with
$\cL_1=\cL_2=\cL$, and we consider the projectivization $\Pbb(\cE)$,
with tautological line bundle $\cO(1)$.  An {\em $E_6$ elliptic
fibration\/} over $B$ is a nonsingular hypersurface $Y$
of~$\Pbb(\cE)$, with equation
\begin{equation}
\label{E6eq}
x^3 + y^3= d\, xyz + e\, xz^2 + f\, yz^2 + g\, z^3\quad.
\end{equation}
Here, $z$, $x$, and $y$ are sections of $\cO(1)$, $\cO(1)\otimes \cL$,
and $\cO(1)\otimes \cL$, respectively, whose vanishing gives equations
for the embeddings $\Pbb(\cL_1\oplus \cL_2)\subseteq \Pbb(\cE)$,
$\Pbb(\cO\oplus \cL_2)\subseteq \Pbb(\cE)$, $\Pbb(\cO\oplus
\cL_1)\subseteq \Pbb(\cE)$, respectively.  Note that these three loci
have no common intersection; over each $p\in B$, the sections $x$,
$y$, $z$ serve as projective coordinates in the fiber of $\Pbb(\cE)$
over $p$. The powers of $\cL$ appearing in $\cE$ are chosen so that
\eqref{E6eq} is balanced; the same requirement then determines the
weights of $d,e,f,g$, so that the terms in \eqref{E6eq} are sections
of various line bundles as follows:
\begin{equation*}
\begin{tabular}{|c||c|}
\hline
$z$ & $\cO(1)$ \\
\hline
$x$ & $\cO(1)\otimes \cL$ \\
\hline
$y$ & $\cO(1)\otimes \cL$ \\
\hline
\end{tabular}
\quad,\quad
\begin{tabular}{|c||c|}
\hline
$d$ & $\cL$ \\
\hline
$e,f$ & $\cL^2$ \\
\hline
$g$ & $\cL^3$ \\
\hline
\end{tabular}\quad.
\end{equation*}
With these positions, $Y$ is the zero-set of a section of
$\cO(3)\otimes \cL^3$ in $\Pbb(\cE)$.  The bundle projection induces a
map $\varphi: Y \to B$; for every $p\in B$, the fiber
$\varphi^{-1}(p)$ is the cubic curve in the fiber $\Pbb^2$ of
$\Pbb(\cE)$ over $p$, defined by the equation \eqref{E6eq} in the
coordinates $x$, $y$, $z$.  The nonsingularity of $Y$ is guaranteed if
the sections $d,e,f,g$ are sufficiently general.

Similarly, an {\em $E_7$ elliptic fibration\/} over $B$ is a
nonsingular hypersurface $Y$ with equation
\begin{equation}
\label{E7eq}
y^2= x^4+e x^2 z^2+f x z^3 + g z^4
\end{equation}
in a bundle of weighted projective planes $\Pbb_{112}(\cO \oplus \cL
\oplus \cL^2)$ over $B$. Here, the terms appearing in \eqref{E7eq} are
sections of bundles as follows;
\begin{equation*}
\begin{tabular}{|c||c|}
\hline
$z$ & $\cO(1)$ \\
\hline
$x$ & $\cO(1)\otimes \cL$ \\
\hline
$y$ & $\cO(2)\otimes \cL^2$ \\
\hline
\end{tabular}
\quad,\quad
\begin{tabular}{|c||c|}
\hline
$e$ & $\cL^2$ \\
\hline
$f$ & $\cL^3$ \\
\hline
$g$ & $\cL^4$ \\
\hline
\end{tabular}\quad.
\end{equation*}
We note that while $\cO(2)$ is a line bundle on $\Pbb_{112}(\cO \oplus
\cL \oplus \cL^2)$ (it is the restriction of the tautological line
bundle via the natural embedding into $\Pbb(\cO\oplus \cL \oplus \cL^2
\oplus \cL^2)$), $\cO(1)$ only exists as a formal square root of
$\cO(2)$, due to the singularities of the weighted projective plane
$\Pbb_{112}$. However, $Y$ is disjoint from these singularities, so
this plays no role in our analysis of the geometry of $Y$, or in
intersection-theoretic computations.

The hypothesis of nonsingularity implies that general fibers of $E_6$
or $E_7$ fibrations, given by equations of type \eqref{E6eq},
\eqref{E7eq}, resp.~in $\Pbb^2$, $\Pbb_{112}$, resp., are elliptic
curves.  The subtleties of the geometry of these fibrations is encoded
in the {\em singular\/} fibers, whose type differs in the two types of
fibrations.

{\em $E_8$ fibrations\/} may also be defined similarly, by an equation
\begin{equation}
\label{E8eq}
y^2 = x^3 + f\, x z^4 + g\, z^6
\end{equation}
in a bundle of weighted projective planes 
$\Pbb_{123}(\cO\oplus \cL^2\oplus \cL^3) \to B$; 
$x$, $y$, etc.~are sections of line bundles as follows:
\begin{equation*}
\begin{tabular}{|c||c|}
\hline
$z$ & $\cO(1)$ \\
\hline
$x$ & $\cO(2)\otimes \cL^2$ \\
\hline
$y$ & $\cO(3)\otimes \cL^3$ \\
\hline
\end{tabular}
\quad,\quad
\begin{tabular}{|c||c|}
\hline
$f$ & $\cL^4$ \\
\hline
$g$ & $\cL^6$ \\
\hline
\end{tabular}\quad.
\end{equation*}
These fibrations are studied in \cite{AlEs}\footnote{In \cite{AlEs} we
place these fibrations in an ordinary projective bundle
$\Pbb(\cO\oplus \cL^3 \oplus \cL^2)$. This does not affect the
geometry of the fibration.}. The $E_8$ case is general in the sense
that every elliptic curve can be put in Weierstrass normal form; this
can be used to reduce the equations \eqref{E6eq}, \eqref{E7eq} to the
standard $E_8$ type, and it is useful in computing e.g.~the $j$
function and the discriminant locus of a fibration. However, this
operation affects the singular fibers: in the $E_8$ case, the singular
fibers may only be nodal (one singular point, with two distinct
tangent lines) or cuspidal (one singular point, a `double' tangent
line). See Figure~1 for representative pictures; the arrow denotes
that nodal curves can specialize to cuspidal curves in families.
\begin{figure}
\includegraphics[scale=0.4]{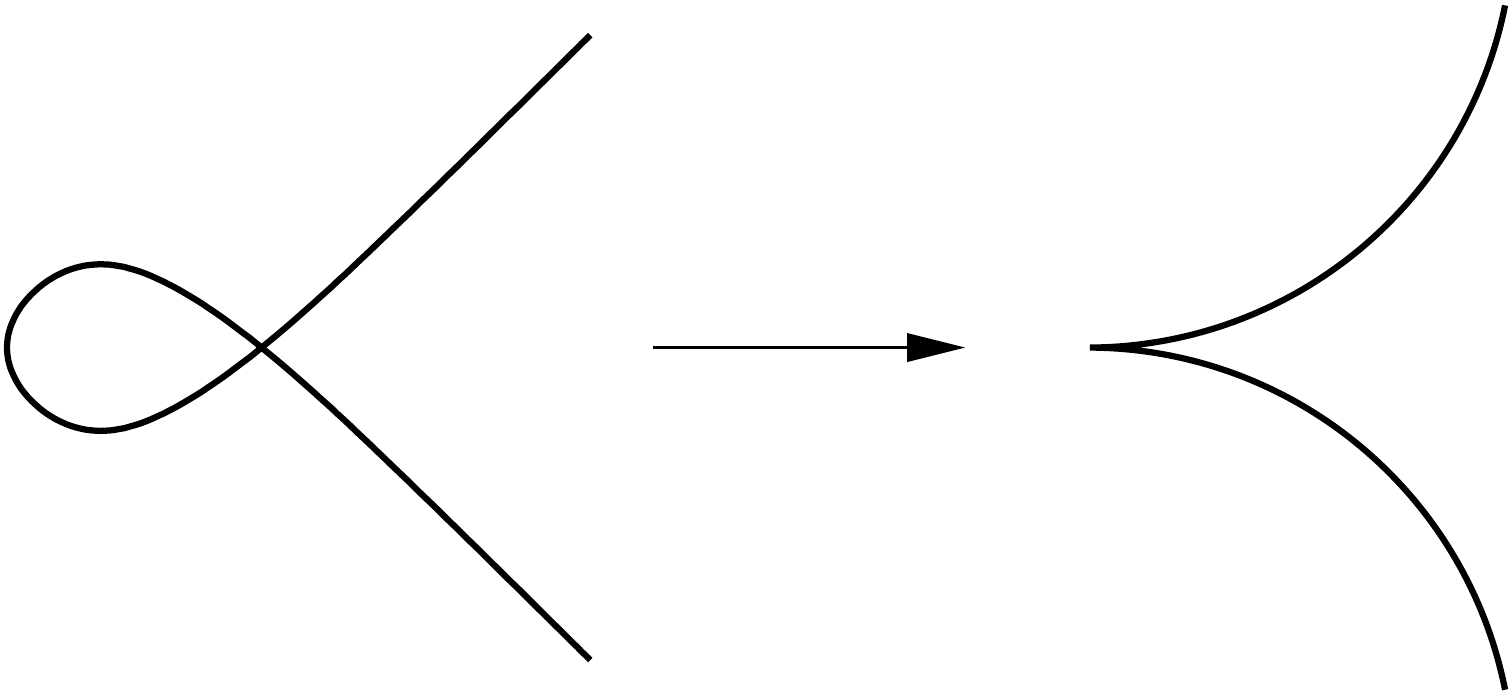}
\caption{Singularities of $E_8$ fibrations\label{E8pic}} 
\end{figure}
The discriminant has equation $4f^3+27g^2$, and the cuspidal locus has
equations $f=g=0$. The study of the corresponding stratification of
the discriminant is an ingredient in~\cite{AlEs}.

Fibrations of type $E_6$ and $E_7$ as defined above have a richer
catalog of possible singular fibers, and this information is key to
the weak coupling limits studied in this paper.  In the rest of this
section we classify singular fibers of $E_6$ and $E_7$ fibrations. 
The result is the summarized in tables  \ref{table.E7SingFib} and 
\ref{table.E6SingFib}.  

\subsection{Singularities of $E_7$ fibrations}\label{SingsE7}
Fibers of $E_7$ fibrations are obtained by specializing \eqref{E7eq},
that is, as curves with equation
\begin{equation}
\label{E7fiber}
y^2=x^4+ex^2z^2+fxz^3+gz^4
\end{equation} 
for fixed $e$, $f$, $g$. The Weierstrass model for this curve is
\begin{equation}
\label{WE7}
y^2+x^3+Fx+G =y^2+x^3+\left(-4g-\frac 13 e^2\right)x
+\left(-f^2+\frac 83 e g-\frac 2{27} e^3\right)=0\quad.
\end{equation}
Singularities of \eqref{E7fiber} are determined by the common
vanishing of the three partial derivatives in $x$, $y$, $z$; it
follows that they are necessarily contained in $y=0$, and in fact they
coincide with the multiple solutions of the restriction of the curve
to the line $y=0$, which has equation
\begin{equation}
\label{E7restr}
x^4-ex^2z^2-fxz^3-gz^4=0\quad.
\end{equation} 
This is a binary quartic form, which in general vanishes at four
distinct points in the projective line $\Pbb^1_{(x:z)}$. It acquires
multiplicities when its discriminant vanishes:
\begin{equation}
\label{E7disc}
\Delta:\quad 4e^3f^2+27f^4-16e^4g+128e^2g^2-144egf^2-256g^3=0\quad.
\end{equation}
This agrees with the discriminant $4F^3+27G^2=0$ of \eqref{WE7}, and
defines a hypersurface $\Delta$ of the base $B$.

To determine the different type of singular fibers, we note that the
multiplicities of the zeros of a quartic form \eqref{E7restr} can
occur in one of 5 different ways, determined by the five partitions of
the number 4. Thus there are four different types of forms
\eqref{E7restr} vanishing with multiplicities, and up to a change of
the coordinates $x$, $z$ they can be given as follows (listed together
with the corresponding partitions):
\begin{equation*}
\left\{
\begin{tabular}{ccc}
$1+1+2$ & : & $(x^2-z^2)x^2$ \\
$1+3$     & : & $(x+3z)(x-z)^3$ \\
$2+2$     & : & $(x^2-z^2)^2$ \\
$4$         & : & $x^4$
\end{tabular}\right.\quad.
\end{equation*}
Since a linear change of coordinates does not affect the singularity
type, this shows that there are four types of degenerate fibers in an
$E_7$ elliptic fibration. Listed in terms of a representative
equation, they are (cf.~Figure~\ref{E7pic}):
\begin{figure}
\includegraphics[scale=0.3]{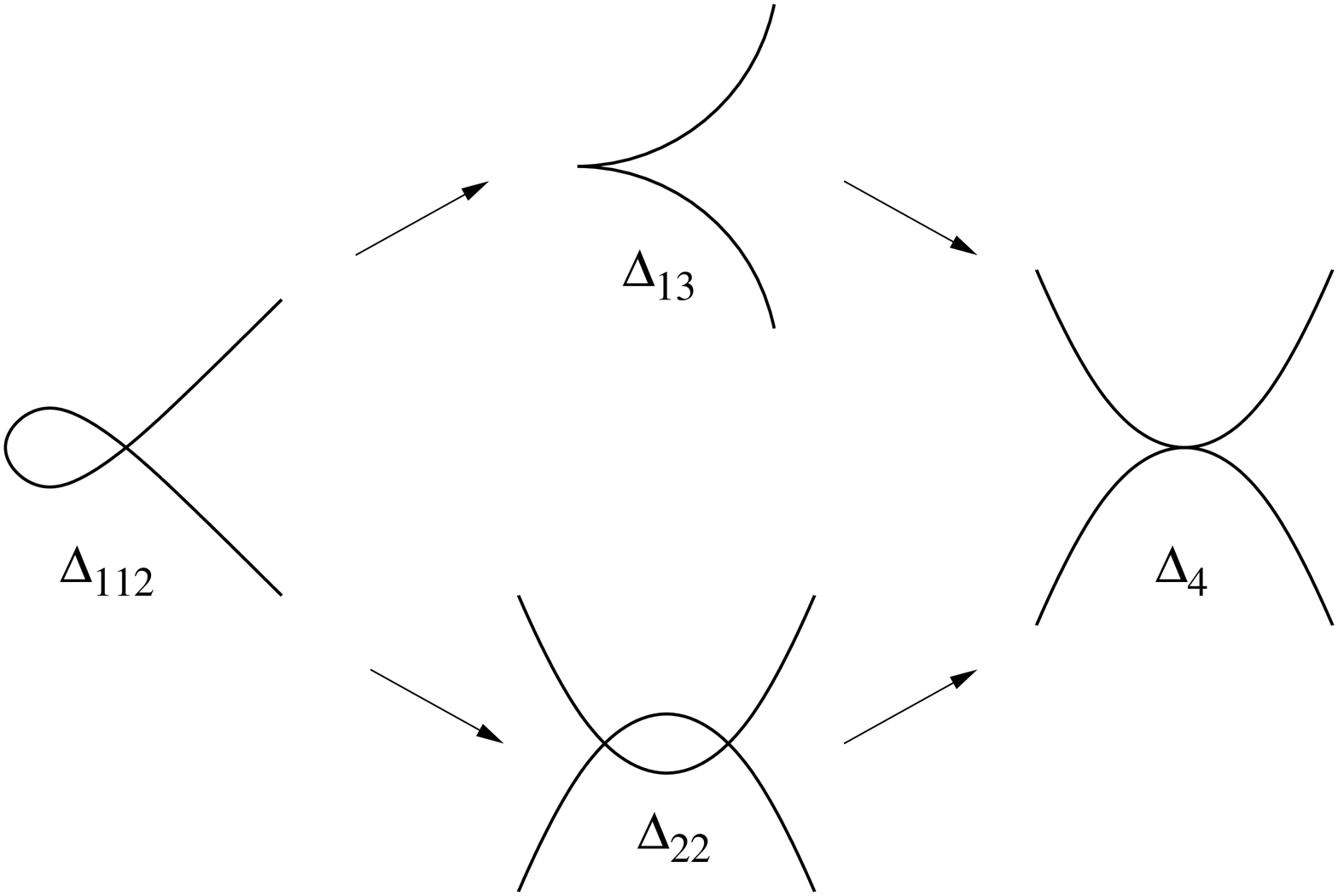}
\caption{Singularities of $E_7$ fibrations\label{E7pic}} 
\end{figure}
\begin{itemize}
\item $y^2=(x^2-z^2)x^2$,
a curve with a single singular point, a node;
\item $y^2=(x+3z)(x-z)^3$,
a curve with a single singular point, a cusp;
\item $y^2=(x^2-z^2)^2$,
the union of two nonsingular rational curves meeting transversally
at two points;
\item $y^2=x^4$, the union of two nonsingular rational curves, meeting
at one point and tangent to each other at that point.
\end{itemize}

These different types of singularities determine a stratification of
the discriminant~$\Delta$ into loci $\Delta_{112}$, $\Delta_{13}$,
$\Delta_{22}$, $\Delta_{4}$ (indexed according to the corresponding
partition), which is easy to determine explicitly. Indeed, it is easy
to check that the quartic form \eqref{E7restr} can be a fourth power
only if $e=f=g=0$:
\begin{equation}
\Delta_4:\begin{cases}
g=0 \\ f=0 \\ e=0
\end{cases}\quad;
\end{equation} 
it is of type $2+2$ if and only if it is 
\begin{equation}
(x^2-\alpha z^2)^2=x^4-2\alpha x^2 z^2+\alpha^2 z^4
\end{equation}
for some $\alpha\ne 0$, that is:
\begin{equation*}
\label{Delta22eq}
\Delta_{22}:\begin{cases}
g\ne 0 \\ f=0 \\ e^2=4g
\end{cases}\quad;
\end{equation*}
it is of type $3+1$ if and only if it is
\begin{equation*}
(x+3\alpha z)(x-\alpha z)^3=x^4-6\alpha^2 x^2 z^2+8\alpha^3 x z^3
-3\alpha^4 z^4
\end{equation*}
for some $\alpha\ne 0$:
\begin{equation}
\label{E713eq}
\Delta_{13}:\begin{cases} 
e\ne 0 \\
8e^3+27 f^2 = 0 \\
e^2+12 g=0
\end{cases}\quad;
\end{equation}
and it is of type $1+1+2$ if it satisfies \eqref{E7disc} but it does
not satisfy any of the conditions listed above:
$\Delta_{112}=\Delta\smallsetminus (\Delta_{22}\cup \Delta_{13}\cup
\Delta_4)$.

We can also note that $\Delta_{22}$ and $\Delta_{13}$ have 
codimension~$1$ in $\Delta$, while $\Delta_4$ has codimension~2. 
The intersection of the closures of $\Delta_{22}$ and $\Delta_{13}$ is 
$\Delta_4$; this intersection is not transversal.

Finally, this stratification may be compared with the stratification of
the discriminant of the Weierstrass model of the fibration. Stability 
considerations imply that the singular curves of the Weierstrass model
will be cuspidal over $\Delta_4$ and $\Delta_{13}$, and nodal over
$\Delta_{22}$ and $\Delta_{112}$; this can also be verified by an 
explicit computation.

\subsection{Singularities of $E_6$ fibrations\label{SingE6}}
The situation for $E_6$ fibrations is richer. In general, the possible
singularities of a cubic plane curves are easily classified: an {\em
irreducible\/} singular cubic is either nodal or cuspidal, and in fact
isomorphic via a linear transformation to either $y^2z=x^3+x^2z$ or
$y^2z=x^3$; representative drawings for these two singularities are
given in Figure~\ref{E8pic}. A {\em reducible\/} cubic must be the
union of a nonsingular conic and a line, or of three lines, and it is
easy to list the possible types of intersections and multiplicities
that can occur.  Figure~\ref{singcubs} summarizes the situation,
indicating
\begin{figure}
\includegraphics[scale=1]{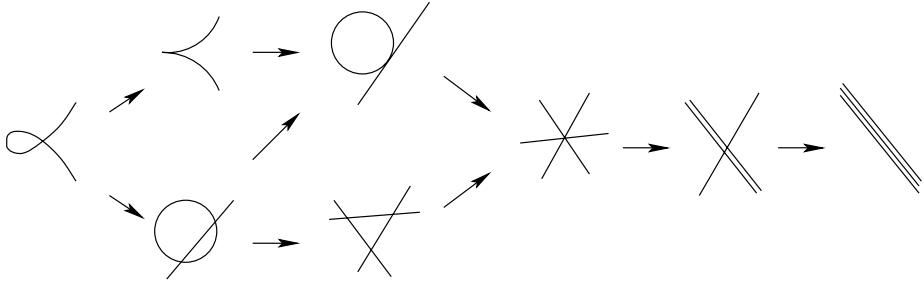}
\caption{Singularities of cubic curves\label{singcubs}} 
\end{figure}
specializations by arrows: for example, cubics decomposing as
unions of a conic and a tangent line are in the closure of the set
of cuspidal cubics, but `triangles' are not. 

We can use the following approach in classifying the singularities
that can appear by specializing the equation \eqref{E6eq} of an $E_6$
elliptic fibration: start from each type of singular cubic, and
determine the relation on the coefficients $d,e,f,g$ obtained by
imposing that \eqref{E6eq} is of the prescribed type.  Qualitatively,
the end-result of this analysis is represented graphically in
Figure~\ref{E6pic}. The discriminant of \eqref{E6eq} is stratified by
loci $\Delta_N$, $\Delta_C$, $\Delta_Q$, $\Delta_P$, $\Delta_T$,
$\Delta_S$.
\begin{figure}
\includegraphics[scale=1]{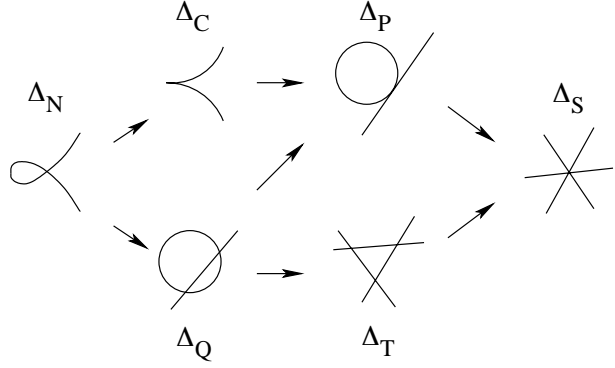}
\caption{Singular fibers of $E_6$ fibrations\label{E6pic}} 
\end{figure}
Each picture corresponds to a singular fiber, obtained over a point of
the named locus; as above, arrows denote specialization. We also let
$N=\overline{\Delta_N}, C=\overline{\Delta_C}$, etc.; thus,
$\Delta_N=N\smallsetminus (C\cup Q)$, etc. Equations for the six loci
$N,\dots,S$ in $B$ will be given below. The codimension of $N$ in $B$
is $1$; of $C$ and $Q$,~$2$; of $P$ and $T$, $3$; and $S$ has
codimension~$4$; for example, $S=\emptyset$ if $\dim B=3$ (and the
sections are sufficiently general).  The loci $Q$ and $P$ each consist
of three components. The three components of $Q$ meet precisely along
the locus $T$.  The three components of~$P$ likewise meet precisely
along $S$.

A more thorough analysis suggests that (if the sections $d$, $e$, $f$,
$g$ are sufficiently general) the discriminant $N$ is singular, with
multiplicity $2$ along $C$ and $Q$, multiplicity~$3$ along $P$
and~$T$, and multiplicity $4$ along $S$ if $\dim B\ge 4$.

The Weierstrass form for \eqref{E6eq} is{\small
\begin{multline}
y^2+x^3+Fx+G \\
=y^2+x^3+\left(-3ef+\frac 92 dg-\frac 1{48} d^4\right)x+
\left(-e^3-f^3+\frac{27}4 g^2+\frac 34 efd^2-\frac 58 gd^3-
\frac 1{864} d^6\right)=0;
\end{multline}}
its discriminant is $4F^3+27G^2$, giving the equation of $N$:{\small
\begin{equation*}
\boxed{
N:\,\, 4\left(-3ef+\frac 92 gd-\frac 1{48} d^4\right)^3
+27 \left(-e^3-f^3+\frac{27}4 g^2+\frac 34 efd^2-\frac 58 gd^3-\frac 1{864} d^6
\right)^2=0}
\end{equation*}}
Over a general point of this hypersurface $N$, \eqref{E6eq} defines an
irreducible curve with a node; to verify this, it suffices to produce
one such curve, and one may be obtained by setting $e=f=g=0$, $d=1$:
\begin{equation*}
x^3+y^3=xyz
\end{equation*}
is an irreducible curve with a single node at $(z:x:y)=(1:0:0)$.

The $j$-invariant $\sim\frac{F^3}{4F^3+27G^2}$ of a cubic curve is
defined (or $\infty$) if and only if the curve is not in the closure
of the cuspidal locus; it follows that the the closure $C$ of the
locus parametrizing cuspidal curves has equations
$F=G=0$. Alternately, one can impose that the tangent cone at a
singularity consists of a double line; this can be done by a Hessian
computation, and leads to the equation $F=0$; together with the
discriminant $4F^3+27G^2$, this again shows that $C$ has equations
$F=G=0$:
\begin{equation*}
\boxed{C:\quad
\left\{
\aligned 
-3ef+\frac 92 gd-\frac 1{48} d^4 &=0\\
-e^3-f^3+\frac{27}4 g^2+\frac 34 efd^2-\frac 58 gd^3-\frac 1{864} d^6 &=0
\endaligned
\right.}
\end{equation*}
To see that the general point of $C$ indeed corresponds to an
irreducible curve with an ordinary cusp, it suffices to produce one
curve of this type and satisfying the equations for~$C$; one such
curve is
\begin{equation*}
x^3+y^3=3xz^2-2z^3\quad.
\end{equation*}

To proceed further, we describe explicitly the other singular curves
represented in Figure~\ref{singcubs}, and find conditions on the
coefficients $d,e,f,g$ obtained by imposing that \eqref{E6eq} matches
the equations we obtain.

The general equation of the union of a line and a conic is:
\begin{equation*}
(x+\rho y+\alpha z)(x^2+\gamma x y+\delta y^2+\epsilon xz+\phi yz
+\beta z^2)=0\quad.
\end{equation*}
We may assume that the coefficients of $x$, $x^2$ in the two factors
are~$1$, since the product will have to match \eqref{E6eq}. For the
match to occur, it is necessary that the coefficients of $x^2y$,
$x^2z$, $xy^2$, $y^2z$ all vanish.  The first two constraints give
\begin{equation*}
\gamma=-\rho\quad,\quad \epsilon=-\alpha\quad;
\end{equation*}
the third then forces
\begin{equation*}
\delta=\rho^2\quad;
\end{equation*}
and the fourth translates into
\begin{equation*}
\rho(\phi+\alpha\rho)=0\quad.
\end{equation*}
At this point the coefficient of $y^3$ is $\rho^3$; matching \eqref{E6eq} 
forces $\rho^3=1$, and in particular $\rho\ne 0$, yielding
\begin{equation*}
\phi=-\alpha\rho\quad.
\end{equation*}
The conclusion is that these curves are in the form represented by
\eqref{E6eq} if and only if they are in fact
\begin{equation}
\label{lineandc}
(x+\rho y+\alpha z)(x^2-\rho x y-\alpha xz+\rho^2 y^2-\alpha \rho y z
+\beta z^2)=0\quad,
\end{equation}
with $\rho$ a cubic root of~$1$:
\begin{equation*}
\rho = 1\quad\text{or}\quad 
\frac{-1\pm i\sqrt 3}2\quad.
\end{equation*}
Expanding \eqref{lineandc} and matching with \eqref{E6eq} we find that
\begin{equation*}
\left\{
\aligned
d &=3\rho\alpha\\
e &=\alpha^2-\beta \\
f &=\rho(\alpha^2-\beta) \\
g &=-\alpha \beta
\endaligned
\right.\quad,
\end{equation*}
and eliminating $\alpha$, $\beta$ ($\rho$ is already determined up to a
finite choice) gives the equations
\begin{equation}
\label{Qeq}
\boxed{Q:\quad
\left\{
\aligned
f &=\rho e\\
g &=\frac{d(9\rho^2 e-d^2)}{27}
\endaligned
\right.\quad,\quad \rho^3=1}
\quad.
\end{equation}
The three allowed values for $\rho$ give three (clearly irreducible)
components of the locus~$Q$ over which the fibers of \eqref{E6eq}
consist of the union of a line and a conic. For example, taking
$\rho=1$ and $e=f=-1$, $d=g=0$, gives
\begin{equation*}
x^3+ y^3+xz^2+ yz^2=(x+y)(x^2-xy+y^2-z^2)=0\quad,
\end{equation*}
the union of a line and a nonsingular conic meeting transversally.
Similar examples may be given on the other two components, showing
that the general fiber over each of the three components is of the
same type.

Also note that from the above discussion it follows that, on $Q$, we
must have
\begin{equation*}
\alpha = \frac{\rho^2 d}{3} ,\quad
\beta =\frac{\rho d^2}9-e ,\quad
\gamma =-\rho,\quad
\delta =\rho^2 ,\quad
\epsilon = -\frac {\rho^2 d}3 ,\quad
\phi = -\frac{d}{3} ,\quad
(\rho^3 =1) .
\end{equation*}
The fiber over a general point of $Q$ consists of the union of the
line $x=-\rho y-\alpha z$ and the conic $x^2-\rho x y-\alpha xz+\rho^2
y^2-\alpha\rho yz +\beta z^2=0$; intersecting these two components
gives the polynomial in $x$, $z$:
\begin{equation*}
3\rho^2 y^2+3\alpha \rho yz+ (2\alpha^2+\beta)z^2\quad;
\end{equation*}
the line intersect the conic in a double point if this polynomial has
a double root, that is, when its discriminant vanishes. This is true
if and only if
\begin{equation*}
5\alpha^2 + 4\beta = 0\quad.
\end{equation*}
As shown above, on $Q$ this is equivalent to
\begin{equation*}
5\left(\frac{\rho^2 d}{3}\right)^2+4\left(\frac{\rho d^2}9-e\right)=0\quad,
\end{equation*}
that is,
\begin{equation*}
4e=\rho d^2\quad.
\end{equation*}
Putting this equation together with the equations for $Q$ gives
equations for $P$:
\begin{equation}
\label{Peq}
\boxed{P:\quad
\left\{
\aligned
f &=\rho e=\frac{\rho^2 d^2}4 \\
g &=\frac{5 d^3}{108}
\endaligned
\right.\quad,\quad \rho^3=1}
\end{equation}
It is easy to check that $P=C\cap Q$; this is not a transversal
intersection.  A typical point of~$P$ may be obtained by setting
$\rho=1$ and $d=-6$; solving for $e,f,g$ gives the curve
\begin{equation*}
x^3+y^3+6xyz-9xz^2-9yz^2+10z^3
=(x+y-2z)(x^2-xy+2xz+y^2+2yz-5z^2)=0
\end{equation*}
It is easy to check that the conic is nonsingular, and the line is
tangent to it at the point $(z:x:y)=(1:1:1)$.

The other way in which a curve consisting of a line union a
transversal conic may degenerate is as a triangle of nonconcurrent
lines; this happens precisely when the conic becomes singular. The
conic has equation
\begin{equation*}
x^2-\rho x y-\alpha xz+\rho^2 y^2-\alpha\rho yz+\beta z^2=0\quad,
\end{equation*}
and imposing that the three partials have a common root gives the
condition $\beta=\alpha^2$.  As seen above, on $Q$ this is equivalent
to $e=0$. Putting together with the equations for $Q$ gives the
equations for the locus $T$ parametrizing triangles:
\begin{equation*}
\boxed{T:\quad
\left\{
\aligned
e &=f=0\\
g &=-\frac{d^3}{27}
\endaligned
\right.}
\quad.
\end{equation*}
A typical fiber over a point of $T$ is obtained by taking $d=3$, which
gives
\begin{equation*}
x^3+y^3+z^3=3xyz\quad.
\end{equation*}
This factors (over $\Cbb$) as a product of three nonconcurrent lines.

The triangle may degenerate by having the vertices come together.  The
resulting locus~$S$ of `stars' may be viewed as the intersection of
$P$ and $T$, and hence has equations
\begin{equation*}
\boxed{S:\quad
d=e=f=g=0
}\quad.
\end{equation*}
In other words, the only singular fiber of this type is the curve
\begin{equation*}
x^3 + y^3 = 0\quad.
\end{equation*}

Since the star is isolated, the fibers cannot degenerate
further. Thus, the last two (nonreduced) types of singular cubics
shown in Figure~\ref{singcubs} cannot occur in the family given by
\eqref{E6eq}.  This completes the classification of singular fibers in
a fibration with equation~\eqref{E6eq}.

Finally, one can examine the behavior of fibers of the corresponding
Weierstrass model over the strata of the discriminant. Since $F$ and
$G$ vanish along $C$, these fibers are cuspidal curves over $\Delta_C$
and its specializations $\Delta_P$ and $\Delta_S$. Since $F\ne 0$
along the other loci, the fibers of the Weierstrass model are nodal
over $\Delta_N$, $\Delta_Q$, and $\Delta_T$.

\subsection{Semistable and unstable fibers}
In each of the cases considered in this section, we will refer to the
fibers over the closure of the cuspidal loci as {\em unstable,\/} and
to the other types as {\em semistable;\/} see Figures~\ref{unstable}
and~\ref{sstable}.
\begin{figure}
\includegraphics[scale=1]{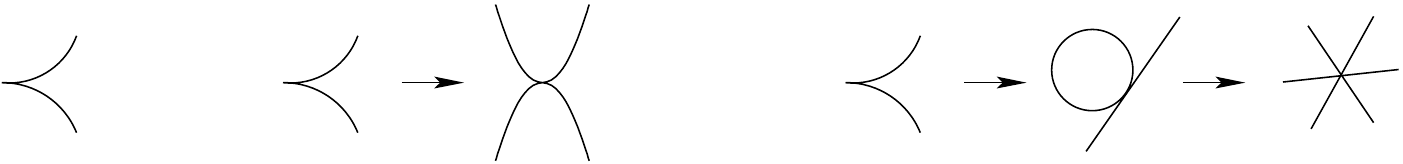}
\caption{Unstable fibers of $E_8$, $E_7$, $E_6$ fibrations\label{unstable}} 
\end{figure}
\begin{figure}
\includegraphics[scale=1]{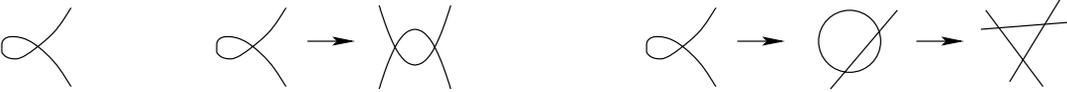}
\caption{Semistable fibers of $E_8$, $E_7$, $E_6$ fibrations\label{sstable}} 
\end{figure}

This terminology is borrowed from algebraic geometry. The
distinguishing features of the curves in Figure~\ref{unstable} is that
the $j$ invariant is {\em not\/} defined for these curves. These are
the curves for which the coefficient $F$ in the corresponding
Weierstrass model vanishes; since the discriminant $\Delta=4F^3+27G^2$
vanishes for all singular curves, the quotient $F^3/\Delta$ (which
equals $j$, up to a multiple) is undefined for unstable curves.  By
contrast, $F\ne 0$ for semistable curves; their $j$ invariant is
unambiguously defined to be $\infty$.

This distinction will be useful in \S\ref{Wcl}.


\section{Weak coupling limits}\label{Wcl}

The analysis carried out in \S\ref{Sinfib} provides us with a key
element of information in the construction of `weak coupling limits'
in the style of Sen's celebrated work (\cite{Sen:1997gv}). Our
strategy is to focus on the specialization of a type of singular fiber
to another in the situations encountered in \S\ref{Sinfib}, in a way
that we describe in general terms in~\S\ref{strate}. We precede this
description with a revisitation of Sen's $E_8$ limit, in~\S\ref{Sens};
we follow it in~\S\ref{E7lim} and~\S\ref{E6lim} with applications of
the strategy to the $E_7$ and $E_6$ environments, obtaining new weak
coupling limits.

In \S\ref{stratecomm} we clarify the reasons underlying
the choices inherent to our procedure.

\subsection{Sen's limit, revisited}\label{Sens}
Here we review the limit for $E_8$ fibrations considered by Ashoke Sen
in~\cite{Sen:1997gv}, in a way which emphasizes the general features
of the strategy we will apply to $E_7$ and $E_6$ in the rest of the
section.

As recalled in \S\ref{Sinfib}, $E_8$ fibrations may be defined by 
equation~\eqref{E8eq}:
\begin{equation*}
y^2 = x^3 + f\, x z^4 + g\, z^6
\end{equation*}
in a bundle of weighted projective planes $\Pbb_{123}(\cO\oplus
\cL^2\oplus \cL^3) \to B$.  Note that in the $E_8$ case we have only
one type of semistable fiber and one type of unstable fiber: nodal and
cuspidal curves, respectively. In terms of the parameters $f$ and $g$,
the fiber is a nodal curve when
\begin{equation}
\label{E8ss}
4f^3+27g^2=0\quad,\quad
f\ne 0
\end{equation}
while the unstable locus is
\begin{equation*}
f=g=0\quad.
\end{equation*}
We view this situation in a parameter space with coordinates $f$, $g$.
In this space, \eqref{E8ss} represents a cuspidal curve, which admits
the standard parametrization
\begin{equation*}
h\mapsto (f,g) = (-3h^2,-2h^3)\quad.
\end{equation*}
This motivates us to define a degenerate fibration $Y_h$ by setting
\begin{equation}
\label{E8Yh}
\begin{cases}
f = -3 h^2 \\
g = -2 h^3
\end{cases} \quad,
\end{equation}
or in divisor form
\begin{equation*}
Y_h:\quad
y^2 = x^3 - 3h^2\, x z^4 - 2h^3\, z^6 = (x+h z^2)^2\, (x-2h z^2)\quad.
\end{equation*}
Note that since $f$, $g$ are resp.~sections of $\cL^4$, $\cL^6$, this
can be achieved with $h$ a section of $\cL^2$; this will be one of the
requirements that we will pose in~\S\ref{strate}. Also note that
$f=g=0$ precisely if $h=0$; thus, the fibers of $Y_h$ over points of
$\uO$ (the hypersurface $h=0$ in $B$) are unstable (i.e., cuspidal),
and the fibers over points of $B\smallsetminus \uO$ are semistable
(i.e., nodal); this will also be a general feature of our
construction.

Next, we perturb $Y_h$ to get a general family $Y_h(C)$. We perturb
the assignment for~$f$ in \eqref{E8Yh} with a general choice of a
section $\eta$:
\begin{equation*}
f = -3h^2 + C\eta\quad,
\end{equation*}
where $\eta$ is a general section of $\cL^4$. Setting likewise
$g=-2h^3 +C \gamma$, one sees that $j\propto h^3$ at leading order in
$C$ unless $\gamma=h \eta$.  The case $j\propto h^3$ is interesting
(see \S\ref{needfornew}): it corresponds to a bound state of the
orientifold with a brane-image-brane on top of it.  We will consider
the general case of $j\sim h^{4-n}$, $n=0,1,2,3,4$ in
\S\ref{stratecomm}.  However, we will in general insist that $j\propto
h^4$, corresponding to a pure O7-orientifold.  This prompts us to set
$g=-2h^3 +C h \eta$ at first order in $C$, and adding a further term
to achieve generality we arrive at the description of $Y_h(C)$:
\begin{equation*}
\begin{cases}
f = -3 h^2 + C \eta\\
g = -2 h^3 + C h \eta + C^2 \chi
\end{cases} \quad,
\end{equation*}
with $h$, $\eta$, $\chi$ resp.~general sections of $\cL^2$, $\cL^4$,
$\cL^6$, resp. This is precisely the limit considered by Sen. With
these positions,
\begin{equation*}
j\sim \frac 1{C^2}\cdot \frac{h^4}{\eta^2+12 h \chi}
\end{equation*}
at leading order in $C$. Studying the limiting discriminant
$h^2(\eta^2+12 h \chi)$ and its pull-back to the double cover $X$ over
$B$ ramified along $\uO$ leads to interesting Euler characteristics
and Chern class identities, explored in \cite{AlEs}.

\subsection{The strategy}\label{strate}
We now abstract the key points of the procedure leading to Sen's limit
in \S\ref{Sens}. In order to produce weak coupling limits for the
elliptic fibrations considered in \S\ref{Sinfib} we propose the
following procedure:
\begin{itemize}
\item choose a specialization of a semistable singular fiber $S_1$
to an unstable one~$S_2$ among the types listed in~\S\ref{Sinfib};
\item realize the specialization by means of a section $h$ of $\cL^2$,
producing a degenerate family $Y_h$ for which the the fiber over the 
hypersurface $\uO$ defined by $h=0$ is of type $S_2$, and the general
fiber (over $B\smallsetminus \uO$) is of type $S_1$;
\item perturbe $Y_h$ to a general family $Y_h(C)$ depending on a
scalar parameter $C$, for which the $j$ invariant behaves at leading
order in $C$ as
\begin{equation*}
j\sim \frac 1{C^r}\cdot 
\frac{h^4}{\uD}
\end{equation*}
for some $r>0$.  Here, $\uD$ will be a component (or a collection of
components) of the limiting discriminant $\Delta_h$ of $Y_h(C)$ as
$C\to 0$ (that is, the leading term of the discriminant $\Delta_h(C)$
of $Y_h(C)$).  The other component of this limiting discriminant is
supported on $\uO$.
\end{itemize}

For example, $j\sim h^4$ ensures the right charge for a pure O7-brane.
In practice, satisfying the two basic requirements that $h$ be a
section of $\cL^2$ and that $j\sim h^4$ usually leads to nontrivial
constraints, and can be achieved only for certain choices of the
specialization $S_1\to S_2$.  Whenever we are successful in carrying
out this program, we can consider the double cover $X$ of $B$ ramified
along $\uO$. The case of greatest physical interest is the case in
which both $X$ and $Y$ are Calabi-Yau varieties, which corresponds to
$c_1(\cL)=c_1(TB)$.  In this case we will identify the ramification
locus $O$ of $X\to B$ as the support of an orientifold, and the
preimage $D$ of $\uD$ in $X$ as carrying D7 branes. Every such
situation leads then to the prediction of a tadpole relation between
the Euler characteristics of these loci. More details for these
constructions are given in~\S\ref{Tadrel}.
\smallskip

A few comments motivating more explicitly the specific requirements
listed here will be given in~\S\ref{stratecomm}.

\subsection{Two $E_7$ weak coupling limits}\label{E7lim}
We now apply the strategy reviewed in~\S\ref{strate} to $E_7$
elliptic fibrations, with equation \eqref{E7eq}:
\begin{equation*}
y^2= x^4+e x^2 z^2+f x z^3 + g z^4\quad.
\end{equation*}
The choice of a specialization from a semistable to an
unstable fiber is restricted to the list given in Figure~\ref{E7spec}.
\begin{figure}
\includegraphics[scale=0.25]{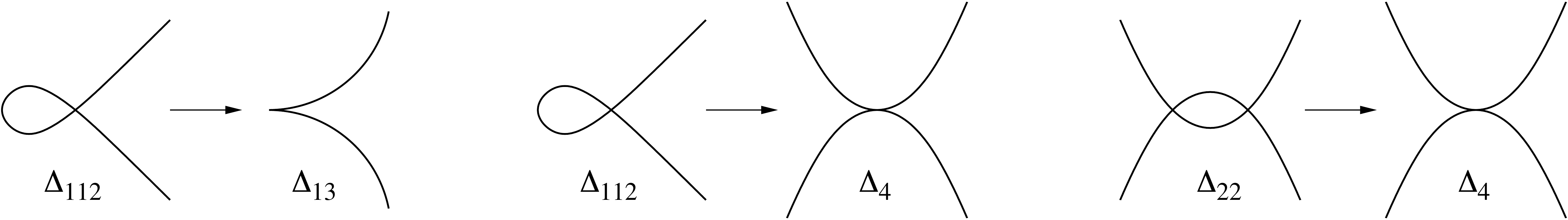}
\caption{Allowed specializations for $E_7$ fibrations\label{E7spec}} 
\end{figure}
We first choose to specialize $\Delta_{22}\to \Delta_4$. The closure of 
$\Delta_{22}$ has equations (cf.~\eqref{Delta22eq}):
\begin{equation*}
e^2 = 4g\quad,\quad f=0\quad.
\end{equation*}
Viewing $e$, $f$, $g$ as coordinates in a parameter space, these
equations describe a nonsingular conic curve, parametrized by
\begin{equation*}
h\mapsto (e,f,g) = (-2h,0,h^2)\quad.
\end{equation*}
This parametrization leads to the following description of the
degenerate fibration $Y_h$:
\begin{equation}
\label{E7Yh}
\begin{cases}
e = -2h \\
f = 0 \\
g = h^2
\end{cases} \quad,
\end{equation}
in divisor form
\begin{equation*}
Y_h:\quad
y^2 = x^4-2h x^2 z^2+ h^2 z^4=(x^2-h z^2)^2\quad.
\end{equation*}
We note that this is done by choosing $h$ to be a general section of
$\cL^2$, thus satisfying one of the basic requirements from
\S\ref{strate}; and that $g=0$ if and only if $h=0$, so that the
specialization $\Delta_4$ is attained precisely over $h=0$, also as
prescribed in \S\ref{strate}.

We next perturb \eqref{E7Yh} in the simplest way needed to produce a
general family $Y_h(C)$:
\begin{equation*}
\begin{cases}
e = -2h \\
f = C\phi  \\
g = h^2+C\gamma
\end{cases} \quad.
\end{equation*}
With this choice,
\begin{equation*}
j\sim \frac 1{C^2}\cdot \frac{h^4}{\gamma^2-\phi^2 h}
\end{equation*}
at leading order: thus, all the requirements listed in \S\ref{strate}
are satisfied. The limiting discriminant has equation
\begin{equation}
\label{E7Delh1}
h^2 \left(\gamma^2-\phi^2 h\right)=0\quad:
\end{equation}
with notation as in \S\ref{strate}, $\uD$ is the hypersurface
$\gamma^2-\phi^2 h=0$.  In the double cover $X:\xi^2=h$, this
hypersurface splits into two smooth components $D_\pm$ that are mapped
to each other by the orientifold involution, so that this hypersurface
corresponds to a brane-image-brane pair.  See \S\ref{IntroE71} and
\S\ref{E7224} for a description of the corresponding configuration of
O7 and D7 in the double cover, and for a precise identity of Euler
characteristics related to this configuration.
\smallskip

A second possible choice of specialization is $\Delta_{112} \to
\Delta_{13}$, leading to a different limit. We follow the same
procedure: in the parameter space $(e,f,g)$, the closure
of~$\Delta_{13}$ is parametrized by
\begin{equation*}
k \mapsto (e,f,g)=(-6k^2,-8k^3,-3k^4)
\end{equation*}
(see \eqref{E713eq}). We can extend this to a parametrization of 
$\overline{\Delta_{112}}$:
\begin{equation*}
(k,h) \mapsto (e,f,g)=(-6k^2+h,-2k(4k^2-h),-k^2(3k^2-h))
\end{equation*}
yielding the degenerate fibration $Y_h$:
\begin{equation*}
\begin{cases}
e=-6k^2+h \\
f=-2k(4k^2-h)\\
g=-k^2(3k^2-h)
\end{cases}\quad.
\end{equation*}
Here, $k$, $h$ are general sections of $\cL$, $\cL^2$, respectively.
The fiber is of type $\Delta_{112}$ over $B\smallsetminus \uO$,
and generally of type $\Delta_{13}$ over $\uO$, degenerating 
further to type $\Delta_4$ over $h=k=0$.

As in Sen's limit, perturbing~$Y_h$ in the simplest way by adding
linear terms in $C$ to $f$ and $g$ leads to a situation for which
$j\propto h^3$, unless a simple constraint between these terms is
satisfied.  We satisfy this constraint and add a $C^2$ term to
preserve generality, obtaining the following family $Y_h(C)$:
\begin{equation*}
\begin{cases}
e=-6k^2+h \\
f=-2k(4k^2-h)+2C\phi \\
g=-k^2(3k^2-h)+2C k \phi+C^2 \gamma
\end{cases}\quad.
\end{equation*}
For this family,
\begin{equation*}
j\sim \frac 1{C^2}\cdot \frac{h^4}{(h-4k^2)(\phi^2-h\gamma)}
\end{equation*}
at leading order in $C$, with limiting discriminant
\begin{equation}
\label{E7Delh}
\Delta_h = h^2(h-4k^2)(\phi^2-h\gamma)\quad.
\end{equation}
The geometry of this situation has similarities with Sen's limit, and
also leads to an Euler characteristic/Chern class identity, in which
the singularities of the support of~$\Delta_h$ play a key role; see
\S\ref{IntroE7Lim2} and \S\ref{E711213}.  In the double cover
$X:\xi^2=h$, the brane spectrum will consist on an orientifold, a
smooth brane-image-brane pair resulting from the splitting of $h-4
k^2$, and a singular `Whitney brane' (see \S\ref{Senswcl}) corresponding 
to the component
$(\phi^2-h \gamma)$ of the discriminant. These singularities are
typical of Sen's limit.

\subsection{An $E_6$ weak coupling limit}\label{E6lim}
The situation for $E_6$ fibrations is in principle richer. There are
several allowed specializations, shown in Figure~\ref{E6spec};
\begin{figure}
\includegraphics[scale=1]{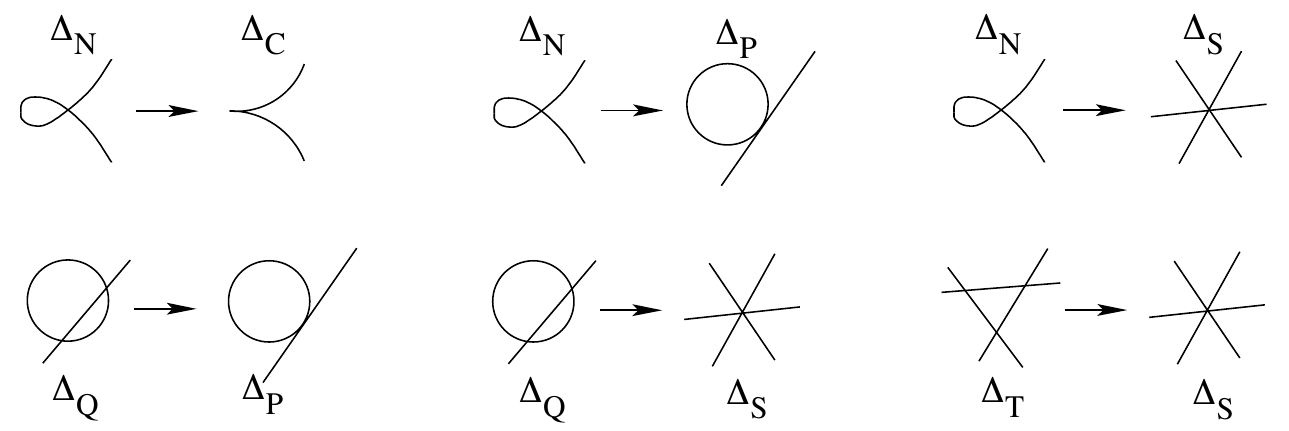}
\caption{Allowed specializations for $E_6$ fibrations\label{E6spec}} 
\end{figure}
we focus on the $\Delta_Q\to \Delta_P$ case. Recall that $Q$ and $P$
each consists of three components; according to \eqref{Qeq} (choosing
$\rho=1$), equations for one component of $Q$ are
\begin{equation}
\label{Qeqp}
f=e \quad,\quad
g =\frac{d(9 e-d^2)}{27}
\end{equation}
and equations for the corresponding component of $P$ are (from
\eqref{Peq})
\begin{equation}
\label{Peqp}
f=e=\frac{d^2}4 \quad,\quad
g =\frac{5d^3}{108}\quad.
\end{equation}
Again we take the parameters $d,e,f,g$ of the fibrations as
coordinates of a parameter space, and study the loci defined
by~\eqref{Qeqp} and~\eqref{Peqp} in this space.  We note that
\eqref{Peqp} can be parametrized as follows:
\begin{equation*}
k \mapsto (d,e,f,g) = (6k, 9k^2, 9k^2, 10k^2)\quad;
\end{equation*}
and this is easily extended to a parametrization of \eqref{Qeqp}:
\begin{equation*}
(k,h) \mapsto (d,e,f,g) = (6k, 9k^2+3h, 9k^2+3h, 2k(5k^2+3h))\quad.
\end{equation*}
Accordingly, we choose the degenerate fibration $Y_h$ to be given by
\begin{equation}
\label{E6Yh}
\begin{cases}
d=6k \\
e = 9k^2+3h \\
f = 9k^2+3h \\
g = 2k(5k^2+3h)
\end{cases} \quad.
\end{equation}
Here, $k$ is a general section of $\cL$, and $h$ is a general section
of $\cL^2$; once more, this fulfills one of the basic requirements of
\S\ref{strate}. The fibers over points on $\uO=\{h=0\}$ are of the
unstable type $\Delta_P$ (they degenerate further to type $\Delta_S$
where both $h$ and~$k$ vanish); over $B\smallsetminus \uO$, they are
of type $\Delta_Q$, as required.

Next, we perturb \eqref{E6Yh} in order to construct an associated
general family $Y_h(C)$. Since $d$ and $e$ may already be assumed to
be general (as $k$ and $h$ are general), we only need to perturb $f$
and $g$:
\begin{equation*}
\begin{cases}
d=6k \\
e = 9k^2+3h \\
f = 9k^2+3h+C\phi \\
g = 2k(5k^2+3h)+C\chi
\end{cases} \quad.
\end{equation*}
With these positions, the discriminant $\Delta_h(C)$ is
\[
\Delta_h(C)\sim C^2\, h^2(h+3k^2)((\chi-k \phi)^2-h\phi^2)
\]
at leading order in $C$. We set $\gamma=\chi-\phi k$, to simplify this
expression (without affecting geometric properties of the family), and
we obtain the following description for the general family $Y_h(C)$:
\begin{equation*}
\begin{cases}
d=6k \\
e = 9k^2+3h \\
f = 9k^2+3h+C\phi \\
g = 2k(5k^2+3h)+C(\gamma+k\phi)
\end{cases} \quad.
\end{equation*}
For this family,
\begin{equation*}
j\sim \frac 1{C^2}\cdot \frac{h^4}{(h+3k^2)(\gamma^2-h\phi^2)}
\end{equation*}
to leading order in $C$. Thus, all requirements set forth in
\S\ref{strate} are satisfied, with limiting discriminant
\begin{equation}
\label{E6Delh}
\Delta_h = h^2 (h+3k^2)(\gamma^2-h\phi^2)\quad.
\end{equation}
The D-brane spectrum was determined in \S\ref{IntroE6Lim} and consists
of an orientifold and two brane-image-brane pairs; see~\S\ref{E6QP}
for a discussion of the tadpole relation for this limit.

\subsection{Discussion}\label{stratecomm}
In this subsection we motivate the choices leading to the strategy
explained in \S\ref{strate}. These are principally the following:
\begin{itemize}
\item we specialize a {\em semistable\/} fiber $S_1$ to an {\em
unstable\/} fiber $S_2$;
\item the parameter $h$ controlling the degenerate fibration $Y_h$ is
a section of $\cL^2$; the fiber of this degenerate fibration is $S_1$
at all points of $B$ where $h$ does {\em not\/} vanish;
\item $j\propto h^4$ at leading order in $C$, up to a factor depending
on the other components of the limiting discriminant.
\end{itemize}

For $Y_h(C)$,
\begin{equation*}
j\propto \frac{F^3}{4 F^3+27 G^2}\quad,
\end{equation*}
where $F$ and $G$ are the parameters of the Weierstrass model of
$Y_h(C)$; $F$ is a section of $\cL^4$, and $G$ is a section of
$\cL^6$.  The term of lowest order in $C$ in the denominator defines
the limiting discriminant, a section of $\cL^{12}$. We factor it as
$h^a \uD$ for some $a\ge 0$ and a term $\uD$ which is not a multiple
of $h$.

The fact that the general degenerate fiber $S_1$ is semistable
guarantees that $F$ is not a multiple of $C$, so that generally $j\to
\infty$ as $C\to 0$:
\begin{equation*}
j\sim \frac 1{C^r}\cdot \frac{F^3|_{C=0}}{h^a \uD}
\end{equation*}
at leading order in $C$, for some $r>0$.  Since the fiber of $Y_h$ is
semistable where $h\ne 0$, it follows that $F|_{C=0}$ divides a power
of $h$, so in fact $F|_{C=0}$ equals a power of $h$ up to a constant
multiple.

In particular, a power of $h$ is a section of $\cL^4$; the most
natural way to achieve this is by assuming that $h$ is itself a
section of $\cL^c$ for some $c\,|\,4$: $c=1$, $2$, or $4$.  Now note
that since $C$ divides $4 F^3+27 G^2$, it follows that $G|_{C=0}$ is
also a power of $h$, and by the same token we see that $c\,|\,6$. Thus
necessarily $c=1$ or $2$.  As our basic geometric construction demands
that we take a {\em double\/} cover of $B$ ramified along the
hypersurface $\uO$ given by $h=0$, it is natural to assume that $c=2$:
this justifies our requirement that $h$ be a section of $\cL^2$.  Thus
\begin{equation*}
j\sim \frac 1{C^r}\cdot \frac{h^6}{h^a\uD}\quad.
\end{equation*}
Note that in particular $F|_{C=0}=0$ over points of the hypersurface
$\uO$ defined by $h=0$; thus, the special degenerate fiber $S_2$ is
necessarily unstable, as we prescribe.

Finally, we take into account the D7 tadpole relation
(\eqref{D7tadpole} in \S\ref{conscheck}): $\uD$ must be a section of
$\cL^8$ if $O$ corresponds to a pure orientifold, and as $h^a \uD$ is
a section of $\cL^{12}$ and $h$ is a section of $\cL$, this forces
$a=2$. Thus $j\propto h^4$ in the pure orientifold case, and this
justifies our last requirement.

It is in fact interesting to consider cases in which at $h=0$ we have
a bound state of one orientifold and $n$ brane-image-branes: one
example has already been illustrated in \S\ref{needfornew}, and
several will be listed in \S\ref{morelimits}.  In this case, $\uD$ is
a multiple of $h^n$, and $j\sim h^{4-n}$. The case $n=0$ gives $j\sim
h^4$ and corresponds to a pure orientifold.  The extreme case $n=4$
corresponds to $\Delta\sim h^6$, giving $4$ brane-image-branes on top
of the orientifold. This can be obtained as specialization of several limits 
listed in \S\ref{morelimits}.


\section{Tadpole relations}\label{Tadrel}
In this section we review intersection-theoretic aspects of the limits
found in \S\ref{Wcl}. As we will explain in \S\ref{tadmat}, tadpole
matching in type IIB and F-theory leads to identities relating the
Euler characteristic of the total space $Y$ of the fibration, and loci
in the double cover $X$ of $B$ ramifed along the hypersurface $\uO$
with equation $h=0$.  These loci are the preimages in $X$ of the
components of the limiting discriminant $\Delta_h$: the preimage $O$
of $\uO$, where the orientifold is located, and the preimage $D$ of
the other components $\uD$.

The identities predicted by the physics considerations of
\S\ref{tadmat} hold when $X$ and~$Y$ are Calabi-Yau varieties, of
dimension $3$ and~$4$, respectively. The results in this section will
show that these predictions have a much broader range of validity:
suitably interpreted, the identities will hold in arbitrary dimension,
and for varieties that are not necessarily Calabi-Yau.

We note here that the Calabi-Yau case corresponds to the case in which
the basic line bundle $\cL$ on $B$ is the anticanonical line bundle.
Here and in the following, we denote by $c_i(V)$ the $i$-th Chern
class $c_i(TV)$ of the tangent bundle of the variety $V$.

\begin{prop}\label{CYcond}
With $X$ and $Y$ as above, and with notation as in \S2 and \S3, $X$
and $Y$ are both Calabi-Yau varieties if $c_1(\cL)=c_1(B)$.
\end{prop}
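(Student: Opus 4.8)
The plan is to verify the Calabi--Yau condition for each variety separately by computing the canonical bundle and showing it is trivial, using the adjunction formula in each projective bundle together with the hypothesis $c_1(\cL) = c_1(B)$. Recall that a smooth projective variety $V$ is Calabi--Yau precisely when its canonical bundle $K_V$ is trivial, equivalently when $c_1(V) = 0$.

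First I would treat the double cover $X \to B$. Since $X$ is the ramified double cover defined by $\xi^2 = h$ with $h$ a section of $\cL^2$, it may be realized as a hypersurface in the total space of $\cL$ (or equivalently the relevant projective completion), and the branch locus $\uO$ has class $c_1(\cL^2) = 2c_1(\cL)$. The standard formula for the canonical bundle of a cyclic cover gives $K_X = \rho^*(K_B \otimes \cL)$, since the double cover branched along a section of $\cL^2$ twists the pulled-back canonical class by $\tfrac12$ of the branch divisor, namely by $c_1(\cL)$. Under the hypothesis $c_1(\cL) = c_1(B) = -c_1(K_B)$, we get $c_1(K_B \otimes \cL) = c_1(K_B) + c_1(\cL) = 0$, hence $K_X$ is trivial and $X$ is Calabi--Yau of dimension $\dim B$.

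Next I would treat the elliptic fibration $Y$. I would carry out the computation uniformly using the $E_6$ case as the model and noting the others are analogous: here $Y \subset \Pbb(\cE)$ with $\cE = \cO \oplus \cL \oplus \cL$, and $Y$ is the zero-locus of a section of $\cO(3) \otimes \cL^3$, as stated just before the proposition. Writing $\pi: \Pbb(\cE) \to B$ for the bundle projection and applying adjunction, $K_Y = \left(K_{\Pbb(\cE)} \otimes \cO(3) \otimes \cL^3\right)\big|_Y$. The relative Euler sequence yields $K_{\Pbb(\cE)/B} = \cO(-3) \otimes \pi^*(\det \cE)^{-1} = \cO(-3) \otimes \pi^*\cL^{-2}$, and combining with $K_{\Pbb(\cE)} = \pi^* K_B \otimes K_{\Pbb(\cE)/B}$ gives $K_{\Pbb(\cE)} = \pi^*(K_B \otimes \cL^{-2}) \otimes \cO(-3)$. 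Tensoring by $\cO(3) \otimes \pi^*\cL^3$ from the adjunction formula, the $\cO(3)$ factors cancel and we are left with $K_Y = \pi^*(K_B \otimes \cL)\big|_Y$. Again invoking $c_1(\cL) = c_1(B)$ gives $c_1(K_B \otimes \cL) = 0$, so $K_Y$ is trivial and $Y$ is Calabi--Yau.

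The one genuine subtlety, and the step I expect to require the most care, is the weighted-projective setting for the $E_7$ and $E_8$ cases: there the fiber is $\Pbb_{112}$ or $\Pbb_{123}$ rather than an honest $\Pbb^2$, so the naive relative Euler sequence and the bookkeeping of $\cO(1)$ versus $\cO(2)$ must be handled via the embedding into an ordinary projective bundle indicated in the excerpt (for $E_7$, the embedding of $\Pbb_{112}(\cO\oplus\cL\oplus\cL^2)$ into $\Pbb(\cO\oplus\cL\oplus\cL^2\oplus\cL^2)$). Since the paper already notes that $Y$ is disjoint from the singularities of the weighted fiber and that intersection-theoretic computations are unaffected, the same cancellation of the tautological twist against the adjunction twist goes through, and the computation of $\det\cE$ for the relevant rank-three or rank-four bundle yields in each case $K_Y = \pi^*(K_B \otimes \cL)\big|_Y$, completing the argument.
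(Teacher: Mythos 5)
Your proposal is correct and follows essentially the same route as the paper's own (footnoted) proof: adjunction in the projective bundle for $Y$, yielding $c_1(Y)=\varphi^*(c_1(B)-c_1(\cL))$, and the standard ramified double-cover formula for $X$, with your relative-Euler-sequence derivation of $K_{\Pbb(\cE)/B}$ being just an unpacking of the citation of B.5.8 in Fulton that the paper uses. Your explicit treatment of the weighted $\Pbb_{112}$ and $\Pbb_{123}$ cases is sound (and slightly more detailed than the paper, which illustrates only $E_6$ and leaves the rest to the reader).
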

In fact, $Y$ is a Calabi-Yau if $c_1(\cL)=c_1(B)$; $X$ is a Calabi-Yau
if $[\uO]=2c_1(B)$. Thus, in the Calabi-Yau case this gives a second
motivating reason to require $h$ to be a section of $\cL^2$, beside
the more general considerations of \S\ref{stratecomm}.

The proof of Proposition~\ref{CYcond} is standard \cite{Sen:1997gv},
so we leave it to the reader\footnote{As an illustration, consider the
$E_6$ case. Then $Y$ is defined by a divisor of class
$c_1(\cO(3)\otimes \cL^3)$ in the projectivization $\pi: \Pbb(\cE) \to
B$, where $\cE=\cO\oplus \cL\oplus \cL$.  By e.g., B.5.8 in
\cite{MR85k:14004},
\begin{equation*}
c_1(\Pbb(\cE))=\pi^*c_1(B)+\pi^* c_1(\cE)+3 c_1(\cO(1))
=\pi^* c_1(B)+2\pi^* c_1(\cL)+3 c_1(\cO(1))\quad;
\end{equation*}
by the adjunction formula, therefore, 
\begin{equation*}
c_1(Y)=(\varphi^*c_1(B)+\varphi^* c_1(\cE)+3 c_1(\cO(1)))-
(3 c_1(\cL)+3 c_1(\cO))
=\varphi^* (c_1(B)-c_1(\cL))\quad.
\end{equation*}
It follows that $Y$ is a Calabi-Yau if $c_1(\cL)=c_1(B)$. 

Concerning $X$, it is obtained as the double cover ramified over
$\uO$; it follows that $2c_1(X)$ agrees with the pull-back of
$2c_1(B)-[\uO]$, so $c_1(X)=0$ if $2c_1(B)=[\uO]$. As $h$ is a section
of $\cL^2$ (see \S\ref{stratecomm}), this shows that $X$ is a
Calabi-Yau when $c_1(\cL)=c_1(B)$.}.

While the case $c_1(\cL)=c_1(B)$ is therefore the most significant for
physical applications, we stress that the identities we will study in
this section will in fact hold without this additional hypothesis.

\subsection{F-theory-type-IIB tadpole matching condition}\label{tadmat}
F-theory is a strongly coupled version of type IIB string theory. It
realizes geometrically the $S$-duality group of type IIB.  One of the
strongest constraints in compactified theory is the cancellation of
the tadpole. This can be seen as a direct consequence of Gauss's law:
the total charge in a compact space has to vanish since the flux has
nowhere to go.

In the context of string theory with D-branes, the tadpole condition
requires the vanishing of all D-brane charges. This is complicated by
the fact that higher dimensional D-branes admit lower brane charges
due to Chern-Simons terms included to cancel anomalies. In the context
of type IIB with O3/O7 planes, this is especially important for D7
branes and O7 planes wrapping divisors of the Calabi-Yau
three-fold. They have an induced D3 charge given by the Euler
characteristic of the divisor they wrap. The D3 tadpole also has
contributions coming from world-volume fluxes located on the seven
branes.  If we ignore the world volume fluxes, the type IIB D3 tadpole
induced by the D7 branes and the O7 plane is:
\begin{equation}\nonumber
\text{Type IIB}:\quad N_{D3}=\frac{1}{2}\big(\sum_i  
\frac{4\chi(O_i)}{24}+\sum_j 
\frac{\chi(D_j)}{24}\big).
\end{equation}
In F-theory, the choice of the D7 branes is given by the geometry of
the Calabi-Yau four-fold. The D3 tadpole induced by the 7-branes is
given solely in terms of the Euler characteristic of the four-fold:
\begin{equation}\nonumber
\text{F-theory}:\quad N_{D3}=  \frac{\chi(Y)}{24}.
\end{equation}
Since the D3 branes are S-dual invariant, the D3-tadpole should be the
same in F-theory and in type IIB. In the absence of fluxes, this gives
a topological relation between the Euler characteristic of the
four-fold and the Euler characteristics of the orientifold planes and
the D7 branes. This is the {\em F-theory-type-IIB D3 tadpole matching
relation} \cite{Collinucci:2008pf, AlEs}:
\begin{equation}\nonumber
\text{F-theory-type IIB D3 matching}:\quad 2\chi(Y)=4\sum_i
\chi(O_i)+\sum_j \chi(D_j).
\end{equation}
This relation has been checked to hold in the case of Sen's weak
coupling of F-theory compactified on a Calabi-Yau four-fold given by a
Weierstrass model (elliptic fibration of type $E_8$), once the
singularities found in Sen's limit are taken into account as providing
a contribution to the corresponding Euler characteristic
\cite{Collinucci:2008pf, AlEs}.  An F-theory-type IIB tadpole mismatch
signals that one has to consider adding world-volume fluxes on the
D7-branes. This is reviewed in \cite{Collinucci:2008pf}.

Now that we have introduced several new weak coupling limits of
F-theory, it is natural to check the validity of this relation. Since
some of our limits lead to configurations involving only smooth
divisors, the F-theory tadpole matching condition is even more
constrained since there are no singularities providing contributions
to the Euler characteristic to correct a possible mismatch.

An interesting aspect of the tadpole matching condition in the
examples we have examined is that it
follows from a Chern class identity largely independent of the specific 
hypotheses leading to it:
 it has been
shown in the $E_8$-case to hold for elliptic fibrations of any
dimension and without imposing the Calabi-Yau condition, and it is
valid at the level of Chern classes.  We will see that all these
properties hold as well for the $E_6$ and $E_7$ examples obtained in
\S\ref{Wcl}.

The left-hand side of the F-theory-type-IIB D3 tadpole matching
relation involves the Euler characteristic of the elliptic
fibration. This can be expressed solely in terms of the geometry of
the base. A formula computing the Euler characteristic of an elliptic
fibered four-fold was first given by Sethi-Vafa-Witten
\cite{Sethi:1996es}.  It was then extended to $E_7$ and $E_6$
fibrations by Klemm, Lian, Roan and Yau \cite{Klemm:1996ts}.  An
equivalent formula for singular four-folds was obtained by Andreas and
Curio \cite{Andreas:1999ng}.  In the $E_8$ case, the relation valid
outside of the Calabi-Yau condition and at the level of Chern class
was derived in \cite{AlEs}.

In this section we will generalize Sethi-Vafa-Witten at the level of
the total Chern class for elliptic fibrations of type $E_6$, $E_7$,
$E_8$ of any dimension. The case of $E_8$ was already treated in
\cite{AlEs}.  We will then use the generalized Sethi-Vafa-Witten
formula to prove the F-theory-type-IIB tadpole matching condition for
the different limits we have obtained in the previous section. We will
see that it holds for the smooth configurations in $E_7$ and $E_6$,
and also for the $E_7$ limits involving Whitney umbrella
singularities.  For the latter we use the Euler characteristic
$\chi_o$ introduced in \cite{AlEs,Collinucci:2008pf} and its
corresponding Chern class.

The new elliptic fibrations we consider here allow for configurations
of smooth branes.  In \S\ref{Class.smooth} we provide a theorem which
classifies all possible types of smooth brane configurations
satisfying the tadpole matching condition in a universal sense, for
$E_8$, $E_7$, $E_6$ fibrations over a base of arbitrary dimension.  In
particular, we find that no such configurations can occur in the $E_8$
case, over an unrestricted base.  We observe that restricting to the
Calabi-Yau case in dimension~$3$ potentially allows for more types of
configurations of smooth branes satisfying the tadpole relation.

\subsection{Sethi-Vafa-Witten formulas}\label{SVWfors}
Let $Y\to B$ be an elliptic fibration of the kind considered in
\S\ref{Sinfib}.  Under the assumption that $Y$ is a Calabi-Yau
manifold, formulas for the Euler characteristic of $Y$ may be found in
\cite{Klemm:1996ts}, (3.11) and \S\S7~and~8; some of these formulas
were also given in \cite{Sethi:1996es}.
\begin{prop}[\cite{Sethi:1996es}, \cite{Klemm:1996ts}]\label{SVW}
Let $Y\to B$ be an elliptic fibration of type $E_6$, $E_7$, $E_8$, with
$Y$ a Calabi-Yau fourfold. Then
\begin{equation*}
\begin{cases}
E_6: & \chi(Y) =\int 12 c_1(B) c_2(B)+72 c_1(B)^3 \\
E_7: & \chi(Y) =\int 12 c_1(B) c_2(B)+144 c_1(B)^3 \\
E_8: & \chi(Y) =\int 12 c_1(B) c_2(B)+360 c_1(B)^3 
\end{cases}
\end{equation*}
\end{prop}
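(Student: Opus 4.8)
The plan is to compute $\chi(Y)$ for each of the three fibration types by pushing the total Chern class of $Y$ forward to the base $B$ and extracting the top-dimensional piece. The natural framework is the generalized Sethi-Vafa-Witten formula advertised in the introduction (Theorem~\ref{SVWup}), which asserts $\varphi_* c(Y)=(a+1)c(Z)$ for a divisor $Z$ closely related to the fibration, with $a=1,2,3$ in the $E_8$, $E_7$, $E_6$ cases respectively. First I would identify $Z$: since $Y$ sits as a hypersurface in the projectivized bundle $\Pbb(\cE)$ (or its weighted analogue), $Z$ should be the class of the section cutting out the distinguished locus, and its Chern class is computed by adjunction inside the total space of the bundle. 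The Euler characteristic $\chi(Y)=\int_Y c_{\dim Y}(Y)$ is then recovered by taking the degree of the top piece of $\varphi_* c(Y)=(a+1)\varphi_*c(Z)$, reducing everything to intersection numbers on $B$.

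The main computation is to evaluate this pushforward explicitly under the Calabi-Yau hypothesis $c_1(\cL)=c_1(B)$ (Proposition~\ref{CYcond}). I would begin from the formula for $c(Y)$ obtained by adjunction: $c(Y)=\varphi^*c(\Pbb(\cE))/(1+[Y])$ restricted to $Y$, where $[Y]$ is the class of the hypersurface ($\cO(3)\otimes\cL^3$ for $E_6$, $\cO(2)\otimes\cL^2$ after accounting for the fiber degrees in the weighted cases). Using the standard relation for Chern classes of projective bundles (e.g.\ B.5.8 in \cite{MR85k:14004}, already invoked in the footnote to Proposition~\ref{CYcond}), $c(\Pbb(\cE))$ is expressed in terms of $c_1(\cO(1))$, $\pi^*c(B)$, and $\pi^*c(\cE)$. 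Setting $\zeta=c_1(\cO(1))$ and imposing the Calabi-Yau condition $c_1(\cL)=c_1(B)$ throughout, I would push forward monomials in $\zeta$ using the Segre-class relations determined by the rank and splitting of $\cE$; only the coefficient of the relative fundamental class survives in $\varphi_*$.

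For a four-fold the relevant output is $\chi(Y)=\int_B(\text{degree-}3\text{ part of }\varphi_*c(Y))$, landing in $H^6(B)$ since $\dim B=3$. The answer is a combination of the two available top monomials $c_1(B)c_2(B)$ and $c_1(B)^3$, so the computation reduces to pinning down two numerical coefficients in each case. The coefficient of $c_1(B)c_2(B)$ turns out to be $12$ uniformly, reflecting the Euler characteristic $12$ of a generic elliptic fiber degenerating over the discriminant; the coefficient of $c_1(B)^3$ is what distinguishes the three cases, giving $72$, $144$, $360$ respectively and tracking the number $a+1$ together with the differing fiber-class data. The hard part will be bookkeeping the weighted-projective structure in the $E_7$ and $E_8$ cases, where $\cO(1)$ exists only as a formal square (resp.\ cube) root of an honest line bundle; as remarked in \S\ref{Sinfib}, $Y$ avoids the singular locus of the weighted fiber, so these formal classes can be manipulated consistently in the intersection ring and the pushforward computation goes through without genuine subtlety—one must simply carry the correct fiber degrees. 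Since this is precisely the content already established for $E_8$ in \cite{AlEs} and the $E_6$, $E_7$ analogues follow from Theorem~\ref{SVWup} by evaluating the top Chern class, I would present the three cases in parallel and read off the stated integrals.
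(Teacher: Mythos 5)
Your proposal is correct and takes essentially the same route as the paper, which obtains Proposition~\ref{SVW} exactly as you do: by taking degrees in the Chern-class identity $\varphi_*(c(Y))=m\cdot c(Z)$ of Theorem~\ref{SVWup} (itself proved by the adjunction-plus-Segre-class pushforward computation you sketch), together with $\int \varphi_*(c(Y))=\chi(Y)$, adjunction for the smooth divisor $Z$ in the class of $g=0$, and the Calabi-Yau condition of Proposition~\ref{CYcond}, which indeed yields the uniform coefficient $12$ on $c_1(B)c_2(B)$ and $72$, $144$, $360$ on $c_1(B)^3$. One parenthetical slip that does not affect your argument: in the weighted cases the hypersurface class of $Y$ is $c_1(\cO(4)\otimes\cL^4)$ for $E_7$ and $c_1(\cO(6)\otimes\cL^6)$ for $E_8$, not $\cO(2)\otimes\cL^2$, though the data you actually use ($m=2,3,4$ and $Z$ of class $6L$, $4L$, $3L$) is correct.
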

Here $c_i(B)$ denotes the Chern class $c_i(TB)$ of the tangent bundle,
and $\int$ denotes degree in~$B$.  These formulas suffice as one
ingredient in the verification of the relations arising from tadpole
matching in the limits considered in this paper, in the physically
more immediately significant case of Calabi-Yau fourfolds.

One point of our message is however that these relations, inspired by
the physically significant cases, have a much wider range of validity.
In order to realize this, we have to upgrade Proposition~\ref{SVW} to
smooth fibrations involving varieties of arbitrary dimension, and for which
$Y$ is not necessarily a Calabi-Yau. Also, we have to produce
identities at the level of total Chern classes; the Euler
characteristic is just the degree of the piece of dimension zero in
the total Chern class.

Surprisingly, the general formulas we obtain are in fact {\em
simpler\/} than the formulas recalled in Proposition~\ref{SVW}, in the
sense that they have a more conceptual geometric formulation, and they
do not directly involve random-looking coefficients. They extend
directly the observation (hidden in the formulas of
Proposition~\ref{SVW}) that the Euler characteristic of a fibration
$Y$ as in \S\ref{Sinfib} equals a fixed multiple of the Euler
characteristic of the hypersurface $g=0$ of $B$.  In this formulation,
the Calabi-Yau condition is irrelevant.

\begin{theorem}\label{SVWup}
Let $\varphi: Y\to B$ be an elliptic fibration of type $E_6$, $E_7$,
$E_8$, as in \S\ref{Sinfib}. Let $Z$ be a smooth hypersurface of $B$
in the same class as $g=0$ (with notation as in \eqref{E6eq},
\eqref{E7eq}, and~\eqref{E8eq}). Then
\begin{equation*}
\varphi_*(c(Y)) = m\cdot c(Z)
\end{equation*}
where $m=4,3,2$, resp.~for $Y$ of type $E_6$, $E_7$, $E_8$, resp.
\end{theorem}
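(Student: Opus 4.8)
The plan is to compute $\varphi_*(c(Y))$ directly using the structure of $Y$ as a hypersurface in the projective bundle $\Pbb(\cE)$ (or weighted projective bundle) over $B$, and to recognize the result as a multiple of $c(Z)$. The key tools are the Whitney formula applied to the relative Euler sequence of the bundle, the adjunction formula for the hypersurface $Y$, and the projection formula to push everything down to $B$. Let me describe this for the $E_6$ case, where $\pi\colon\Pbb(\cE)\to B$ with $\cE=\cO\oplus\cL\oplus\cL$ and $Y$ is a divisor of class $3H+3L$, writing $H=c_1(\cO(1))$ and $L=\pi^*c_1(\cL)$; the $E_7$ and $E_8$ cases follow the same template with the appropriate weighted bundles and divisor classes.

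First I would write $c(T\Pbb(\cE))=\pi^*c(TB)\cdot c(T_\pi)$, where the relative tangent bundle is governed by the Euler sequence $0\to\cO\to\pi^*\cE^\vee\otimes\cO(1)\to T_\pi\to 0$; this gives $c(T_\pi)=\prod(1+H+\pi^*c_1(\cL_i^\vee))$ over the summands of $\cE$, a completely explicit product. Then by adjunction $c(TY)=c(T\Pbb(\cE))|_Y/(1+[Y])$, so that
\begin{equation*}
c(Y)=\frac{\pi^*c(TB)\cdot c(T_\pi)}{1+[Y]}\Bigg|_Y\quad.
\end{equation*}
To push forward I would use that $\varphi_*(\alpha|_Y)=\pi_*(\alpha\cdot[Y])$ for a class $\alpha$ on $\Pbb(\cE)$, reducing the computation to a push-forward from the three-dimensional-fiber bundle down to $B$. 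The push-forward $\pi_*$ of powers of $H$ is controlled by the Segre classes of $\cE$ (equivalently, $\pi_*H^{r+k}=s_k(\cE)$ for fiber dimension $r$), so every term becomes an explicit polynomial in $c_1(\cL)$ and the Chern classes of $B$.

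The main obstacle — and the real content of the theorem — is that after this push-forward one obtains an a priori complicated class on $B$, and the assertion is that it collapses to exactly $m\cdot c(Z)$ where $[Z]$ is the class of $\{g=0\}$, namely $3c_1(\cL)$ in the $E_6$ case (and $m=4$). So the hard part is the algebraic identity showing that the rational expression $\frac{c(T_\pi)}{1+[Y]}$, once integrated over the fiber against $[Y]$, telescopes into $m\cdot c(Z)=m\cdot\frac{c(TB)}{1+[Z]}\cdot(\text{something})$. I expect this to come out of a clean factorization: after substituting $[Y]=3H+3L$ and expanding, the generating-function manipulation should produce a factor $\frac{1}{1+3c_1(\cL)}\cdot c(TB)$ matching $c(Z)=\frac{c(TB)}{1+[Z]}$ with $[Z]=3c_1(\cL)$, with all remaining dependence on $H$ summing to the constant $m$. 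The cleanest way to see this is probably to compute $\varphi_*c(Y)$ as a \emph{formal} power series in $c_1(\cL)$ and $c_i(B)$, verify it is divisible by $c(Z)$, and identify the quotient as the constant $m$ by checking the leading (degree-zero) term, which records the topological Euler number of the generic fiber (a smooth elliptic curve, hence $0$) together with the number of distinguished sections — consistent with the remark that $m=a+1$ where $a=3,2,1$. I would organize the final write-up so that the three cases are handled uniformly, isolating the single fiber-integral identity that distinguishes them only through the divisor class $[Y]$ and the bundle $\cE$.
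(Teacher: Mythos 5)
Your proposal follows the paper's proof essentially step for step: the paper likewise writes $c(Y)$ via the relative Euler sequence and adjunction for the hypersurface of class $3H+3L$ in $\Pbb(\cO\oplus\cL\oplus\cL)$, applies the projection formula, evaluates $\pi_*(H^i)$ through the Segre class $s(\cE)=1/(1+L)^2$, and reduces the theorem to the explicit rational-function identity $\pi_*\bigl((1+H)(1+L+H)^2(3H+3L)/(1+3H+3L)\bigr)=4\cdot\frac{3L}{1+3L}$, checked by elementary expansion, with the $E_7$ and $E_8$ cases following the same pattern in the weighted bundles. One small correction to your normalization step: the constant $m$ cannot be read off the degree-zero term, since both sides vanish there (the generic fiber is a smooth elliptic curve with $\chi=0$, and $c(Z)$ pushed to $B$ begins in degree one); it is pinned down only by comparing higher terms, e.g.\ the degree-one coefficients $m\cdot aL$ against the first nonzero coefficient of the fiber integral, which is exactly what the paper's term-by-term verification does.
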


A straightforward computation shows that Theorem~\ref{SVWup} implies
the formulas given in Proposition~\ref{SVW}; for this, use that $\int
\varphi_*(c(Y))=\int c(Y)=\chi(Y)$ and $\int c(Z)=\chi(Z)$, the
adjunction formula, and Proposition~\ref{CYcond}.

\begin{remark}\label{SVWrem}
The ingredients appearing in Theorem~\ref{SVWup} admit the following
compelling geometric description. The fibrations considered here have
$m-1$ distinguished sections, obtained by intersecting with $z=0$ the
loci specified by \eqref{E6eq}, \eqref{E7eq}, \eqref{E8eq}.  Over the
divisor $Z$ with equation $g=0$ there is in fact one more section,
corresponding to the point $(z:x:y) =(1:0:0)$ in each fiber. The
content of Theorem~\ref{SVWup} is that the complement of these $m$
sections of $Z$ in $Y$ contributes $0$ to the push-forward of
$c(Y)$. This is intriguing, because the fibers of the projection from
this complement to $Y$ do not all have vanishing Euler characteristic.
\end{remark}

\begin{proof}[Proof of Theorem~\ref{SVWup}]
This is again a relatively straightforward computation, using
adjunction and standard intersection-theoretic calculus.  We will
sketch the $E_6$ case, as an illustration. Recall that $Y$ is defined
by equation~\eqref{E6eq} in $\Pbb(\cO\oplus \cL \oplus \cL)$.  Denote
by $\pi$ the projection from this bundle to $B$; also, denote by $L$,
$H$, resp., the classes $c_1(\cL)$, $c_1(\cO(1))$, resp., and their
pull-backs.  Standard arguments give
\begin{equation*}
c(Y)=\frac{(1+H)(1+L+H)^2}{1+3H+3L}(3H+3L)\, \pi^*(c(B))
\end{equation*}
as a class in $\Pbb(\cO\oplus \cL\oplus \cL)$, and hence by the
projection formula
\begin{equation*}
\varphi_*(c(Y))=\pi_*\left(\frac{(1+H)(1+L+H)^2}{1+3H+3L}(3H+3L)
\right) c(B)\quad.
\end{equation*}
On the other hand, $g$ is a section of $\cL^3$ in the $E_6$ case; thus
\begin{equation*}
c(Z)=\frac{3L}{1+3L}\, c(B)\quad.
\end{equation*}
Thus, to prove the statement in the $E_6$ case it suffices to show that
\begin{equation}
\label{SVWid}
\pi_*\left(\frac{(1+H)(1+L+H)^2}{1+3H+3L}(3H+3L)\right)
=4\cdot \frac{3L}{1+3L}\quad.
\end{equation}
Again by the projection formula, performing this verification amounts
to computing the push forwards $\pi_*(H^i)$. For these, note that
\begin{equation*}
\pi_*(1+H+H^2+\cdots)=s(\cO\oplus \cL\oplus \cL)=
\frac 1{(1+L)^2}\cap [B]  =(1-2L+3L^2-4L^3+\cdots) \cap [B]:
\end{equation*}
the first equality holds by definition of Segre class (\S3.1 in
\cite{MR85k:14004}), and the second holds because the Segre class is
the inverse of the Chern class (\S3.2 in \cite{MR85k:14004}).
Matching terms of like dimension, we see that $\pi_*$ acts as follows:
\begin{equation*}
1 \mapsto 0;\quad
H \mapsto 0;\quad
H^2\mapsto 1;\quad
H^3\mapsto -2L;\quad
H^4\mapsto 3L^2;\quad
\text{etc.}
\end{equation*}
This reduces \eqref{SVWid} to an elementary verification.

The proofs in the $E_7$ and $E_8$ case follow the same
pattern\footnote{In \cite{AlEs} we gave a different argument for the
$E_8$ case, based on the study of the explicit contributions of
singular fibers.}.
\end{proof}

\subsection{Euler characteristic and Chern class identities}\label{Chernids}
We now move on to the verification of the tadpole relation for the
limits considered in \S\ref{Wcl}. The $E_8$ case of Sen's limit is
treated in detail in \cite{AlEs}, and we briefly review the result
here for the convenience of the reader. The second $E_7$ limit
encountered in \S\ref{E7lim} has similarities with Sen's from the
geometric point of view; we will show that the same notion of Euler
characteristic (and Chern class) used in \S\ref{Senstad} leads to a
correct identity in this case. For the other two limits obtained in
\S\ref{Wcl} the support of the branes is nonsingular, so the
verification of the identities is (even) more straightforward.

The identities involve the fibration $Y$ and hypersurfaces in the
double cover $X$ of $B$ ramified along $\uO$; we take $h=\xi^2$ as an
equation for $X$ in the total space of $\cL$; $\xi=0$ defines the zero
section in the total space, and the ramification locus $O$ on $X$.
The orientifold in the theory is supported on $O$. The various branes
are supported on the other components $D$ of the preimage of the
limiting discriminant~$\Delta_h$.

\subsubsection{Sen's $E_8$ limit.}\label{Senstad}
Pulling back the equation of $\Delta_h$ to $X$ gives
\begin{equation*}
\xi^4(\eta^2+12 \xi^2 \chi)\quad:
\end{equation*}
we find one $D7$-brane $D$ supported on the hypersurface $\eta^2+12
\xi^2\chi=0$.  This is a singular variety, with a Whitney umbrella
structure along the singular locus $\eta=\xi=0$, pinched along
$\chi=\eta=\xi=0$ (see Figure~\ref{Senspic}): thus, $D$ is a Whitney
D7 brane.
\begin{figure}
\includegraphics[scale=.6]{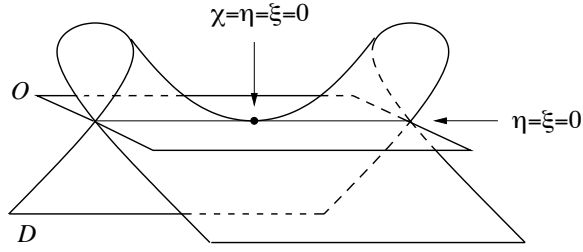}
\caption{Sen's $E_8$ limit\label{Senspic}} 
\end{figure}
In \cite{AlEs} we prove that 
\begin{equation*}
2\chi(Y)=4\chi(O)+\chi_o(D)\quad,
\end{equation*}
where $\chi_o(D)$ is defined as the Euler charactistic of the
normalization $\oD$ of $D$, corrected by a contribution of the pinch
locus $S$. More precisely (\cite{AlEs}, Corollary~4.7),
\begin{equation}
\label{tadrelE8chi}
2\chi(Y)=4\chi(O)+\chi(\oD)-\chi(S)\quad;
\end{equation}
in the case $\dim B=3$, the locus $S$ consists of isolated points, and
$\chi(S)$ is simply the number of such points. However,
\eqref{tadrelE8chi} holds regardless of the dimension of $B$. In fact,
\eqref{tadrelE8chi} is just the degree of the term of dimension~$0$ in
an identity involving the total Chern classes of these loci:
\begin{equation}
\label{tadrelE8}
2\varphi_*(c(Y))=4 \rho_*(c(O))+\pi_*(c(\oD))-i_*(c(S))\quad,
\end{equation}
where the push-forwards all bring the named classes to the homology of
$B$ (\cite{AlEs}, Theorem~4.6).

\subsubsection{The $\Delta_{112} \to \Delta_{13}$ $E_7$ limit}\label{E711213}
The limiting configuration for the second $E_7$ limit obtained in
\S\ref{E7lim} has points of similarity with Sen's limit. In this case,
$\Delta_h$ is defined by~\eqref{E7Delh},
\begin{equation*}
h^2(h-4k^2)(\phi^2-h\gamma)\quad,
\end{equation*}
and hence it pulls back to
\begin{equation*}
\xi^4(\xi-2k)(\xi+2k)(\phi^2-\xi^2\gamma)\quad.
\end{equation*}
Thus, we find the orientifold $O$, together with a brane $D_1$
splitting into a brane-image-brane, and a Whitney D7-brane $D_2$.  The
situation is analogous to that in Sen's limit, but two additional
smooth hypersurfaces $D_{1+}$, $D_{1-}$ are present, which are
interchanged by the symmetry $\xi\mapsto -\xi$. Note that these two
hypersurfaces are in the same class as $\xi$.

\begin{claim}\label{E7tadrelchi}
$2\chi(Y)=4\chi(O)+\chi(D_{1+})+\chi(D_{1-})+\chi_o(D_2)\quad.$
\end{claim}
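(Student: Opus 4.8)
The plan is to prove Claim~\ref{E7tadrelchi} by establishing the stronger Chern class identity, exactly as in Sen's $E_8$ case \eqref{tadrelE8}, and then extracting the degree-zero term. Concretely, I would aim to show
\begin{equation*}
2\varphi_*(c(Y)) = 4\rho_*(c(O)) + \rho_*(c(D_{1+})) + \rho_*(c(D_{1-}))
+ \pi_*(c(\oD_2)) - i_*(c(S_2))\quad,
\end{equation*}
where $\oD_2$ is the normalization of the Whitney brane $D_2$ and $S_2$ is its pinch locus. The claim for Euler characteristics then follows by taking degrees, using $\chi_o(D_2) = \chi(\oD_2) - \chi(S_2)$ as in \eqref{tadrelE8chi}. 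All classes live in the homology of $B$ after push-forward, so the identity is a computation entirely in terms of $c(B)$, $L = c_1(\cL)$, and the classes of the relevant divisors.

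\textbf{Computing each contribution.}
First I would apply Theorem~\ref{SVWup} to rewrite the left-hand side: $\varphi_*(c(Y)) = 3\,c(Z)$, where $Z$ is a smooth hypersurface in the class of $g=0$, i.e.~the class $4L$ (since $g$ is a section of $\cL^4$ in the $E_7$ case). By adjunction, $c(Z) = \frac{4L}{1+4L}\,c(B)$, giving a closed form for $2\varphi_*(c(Y))$. Next I would compute each term on the right. The orientifold $O$ and the two brane-image-brane components $D_{1\pm}$ are all \emph{smooth}, so each is handled by adjunction together with the push-forward $\rho_*$ from the double cover $X \to B$; here I would use that $O$ lies in the class of $\xi=0$ (pulling back to $L$ on $B$ after accounting for the degree-$2$ cover), and that $D_{1\pm}$ are both in the class of $\xi$, as noted just before the claim. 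The delicate term is the Whitney brane $D_2: \phi^2 - \xi^2\gamma = 0$: here I would invoke the analysis of \cite{AlEs}, since $D_2$ has exactly the same local structure ($u^2 - v^2 w = 0$) as the brane in Sen's limit, with singular locus $\phi=\xi=0$ pinched along $\phi=\xi=\gamma=0$. The classes of $\phi$, $\gamma$ follow from their bundle assignments, so $\pi_*(c(\oD_2))$ and $i_*(c(S_2))$ are computed by the same normalization formulas used to establish \eqref{tadrelE8}.

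\textbf{The main obstacle and how to close.}
The hard part will be bookkeeping the splitting of the discriminant correctly and verifying that the two smooth factors $(\xi-2k)$, $(\xi+2k)$ contribute in a way that matches the $E_7$ Sethi-Vafa-Witten coefficient $m=3$ rather than the $E_8$ coefficient $m=2$. In Sen's $E_8$ limit there is only the single Whitney brane, whereas here the extra brane-image-brane pair must absorb precisely the difference between the $E_7$ and $E_8$ left-hand sides. I would therefore organize the computation so that the smooth contributions $4\rho_*(c(O)) + \rho_*(c(D_{1+})) + \rho_*(c(D_{1-}))$ are grouped and compared against the Whitney-brane piece, reducing the whole identity to an elementary polynomial verification in $L$ and $k = c_1(\cL)$ after all push-forwards are evaluated. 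Since $D_{1+}$ and $D_{1-}$ are isomorphic (both in the class of $\xi$), their Chern classes coincide, which further simplifies the final check to matching coefficients of like-dimensional terms, exactly the style of reduction used at the end of the proof of Theorem~\ref{SVWup}.
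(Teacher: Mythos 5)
Your proposal follows essentially the same route as the paper's own verification in \S\ref{E711213}: you state the same Chern class identity \eqref{E7tadrel}, evaluate the left-hand side by Theorem~\ref{SVWup} (giving $3\,c(Z)$ with $Z$ of class $4L$), treat the smooth components $O$, $D_{1\pm}$ by adjunction and push-forward (all of class $L$, so they combine to $6\,c(\uO)$), handle the Whitney brane via the normalization and Chern--Schwartz--MacPherson class computations of \cite{AlEs} exactly as in \eqref{tadrelE8}, and reduce everything to a rational-function identity in $L$, which is precisely the paper's argument. Two cosmetic quibbles only: $k$ is a section of $\cL$, not the class $c_1(\cL)$ (that class is $L$, and the final check involves $L$ alone), and the heuristic that the brane-image-brane pair ``absorbs the $E_7$/$E_8$ difference'' is not literally how the terms match (the Whitney component here has class $6L$, unlike the class-$8L$ brane in Sen's limit), but neither affects the correctness of the computation you outline.
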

Here, $\chi_o(D_2)$ is defined in precisely the same fashion as in
Sen's limit.  The Euler characteristics of $D_{1+}$ and $D_{1-}$ are
ordinary Euler characteristics, since these components are
smooth. They in fact equal $\chi(O)$.

Claim~\ref{E7tadrelchi} is obtained by taking the degree of the term
of dimension~$0$ in the following identity of Chern classes, which
holds in arbitrary dimension:
\begin{equation}
\label{E7tadrel}
2\varphi_*(c(Y))=\rho_*(4c(O)+c(D_+)+c(D_-))+\pi_*(c(\oD))-i_*(c(S))
\quad,
\end{equation}
where again $S$ denotes the pinch locus, and the push-forwards map
the classes to the homology of $B$.

To verify this identity, first note $O$, $D_+$, and $D_-$ are smooth
and have the same class in $X$; thus $4c(O)+c(D_+)+c(D_-)=6 c(O)$.
Since $O$ maps isomorphically to $\uO$ in $B$,
\begin{equation*}
\rho_*(4 c(O)+c(D_+)+c(D_-))=6\, c(\uO)=6\cdot \frac{2L}{1+2L}\cdot c(B)
\quad.
\end{equation*}
Concerning the other terms, the normalization $\oD$ of the Whitney
umbrella maps generically $2$-to-$1$ onto the corresponding component
$\uD'$ of $\Delta_h$, with equation
\begin{equation*}
\phi^2-h\gamma=0\quad,
\end{equation*}
and maps $1$-to-$1$ over the pinch locus $S$. It follows that
\begin{equation*}
\pi_*(c(\oD))-i_*(c(S))=2 \csm(\uD')-2 c(S)\quad,
\end{equation*}
where $\csm(\uD')$ is the Chern-Schwartz-MacPherson class of $\uD'$
(see \S4.2 of \cite{AlEs}). The class $\csm(\uD')$ is computed in
Lemma~4.4 of \cite{AlEs}; considering that $\uD'$ has class $6L$, and
the class of $S$ is $(2L)(3L)(4L)$, we get
\begin{equation*}
\csm(\uD')=\left(\frac{6L}{1+6L} - \frac 1{1+6L} \frac{(2L)(3L)(4L)}
{(1+2L)(1+3L)(1+4L)}\right)c(B)\quad,
\end{equation*}
and hence $\pi_*(c(\oD))-i_*(c(S))$ equals{\small
\begin{multline}
2\left(\frac{6L}{1+6L} - \frac 1{1+6L} \frac{(2L)(3L)(4L)}
{(1+2L)(1+3L)(1+4L)}\right)c(B)-2 \frac{(2L)(3L)(4L)}
{(1+2L)(1+3L)(1+4L)}c(B)\\
=\frac{12L}{(1+2L)(1+4L)}c(B)
\end{multline}}
Summarizing, the right-hand side of \eqref{E7tadrel} equals
\begin{equation}
\label{rhsE7}
6\cdot \frac{2L}{1+2L}\, c(B) + \frac{12L}{(1+2L)(1+4L)}\,c(B)
=\frac{24L}{1+4L}\,c(B)\quad.
\end{equation}
On the other hand, Theorem~\ref{SVWup} shows
\begin{equation*}
\varphi_*(c(Y))=3\cdot c(Z)\quad,
\end{equation*}
where $Z$ is defined by $g=0$, and hence has class $4L$. Thus, the
left-hand side of \eqref{E7tadrel} is
\begin{equation}
\label{lhsE7}
2\cdot 3\cdot \frac {4L}{1+4L}\,c(B)\quad.
\end{equation}
The expressions \eqref{rhsE7} and~\eqref{lhsE7} match, completing the
proof of \eqref{E7tadrel} and verifying Claim~\ref{E7tadrelchi}.

\subsubsection{The $\Delta_{22} \to \Delta_{4}$ $E_7$ limit}\label{E7224}
The first limit obtained in \S\ref{E7lim} has a simpler geometric
description.  The limiting discriminant is defined by~\eqref{E7Delh1}:
\begin{equation*}
h^2 \left(\gamma^2-\phi^2 h\right)
\end{equation*}
and pulls back on $X$ to
\begin{equation*}
\xi^4 \left(\gamma+\phi \xi\right)\left(\gamma-\phi \xi\right)\quad.
\end{equation*}
In this case, the whole brane contribution to the configuration is
split into an image-brane-image pair.  We let $D_+$, $D_-$ denote
these two components. All three components $O$, $D_+$, $D_-$ are
smooth, of class $L$, $4L$, $4L$ respectively.
\begin{figure}
\includegraphics[scale=.5]{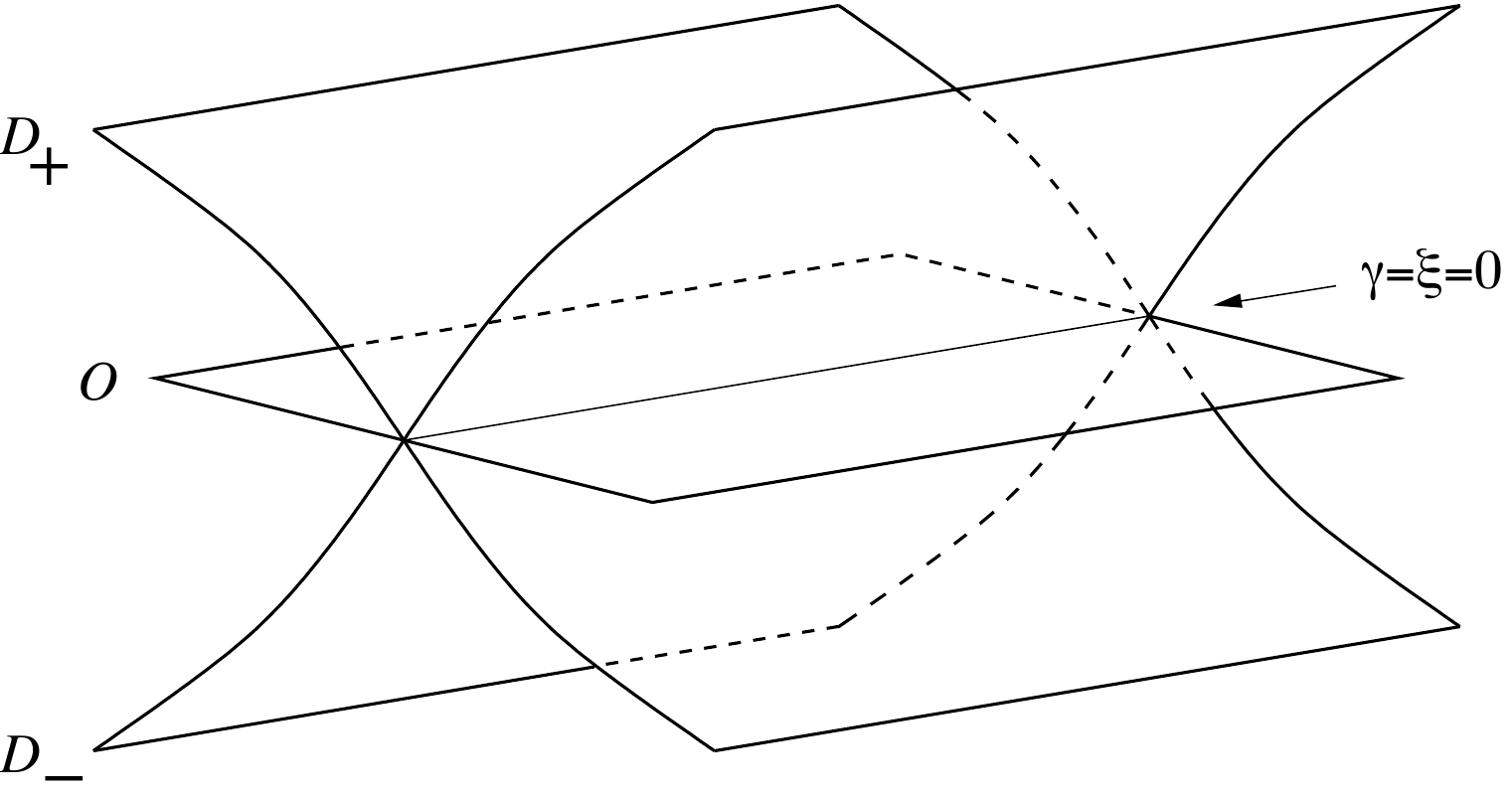}
\caption{$E_7$ limit, $\Delta_{22}\to \Delta_4$\label{E7lpic}} 
\end{figure}
The hypersurfaces $D_+$ and~$D_-$ are tangent to each other along
$\xi=\gamma=\phi=0$, and otherwise meet transversally.

\begin{claim}\label{E7tadrelchi1}
$2 \chi(Y)=4 \chi(O)+\chi(D_+)+\chi(D_-)$.
\end{claim}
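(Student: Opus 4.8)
The plan is to promote Claim~\ref{E7tadrelchi1} to an identity of total Chern classes in the homology of $B$,
\begin{equation*}
2\varphi_*(c(Y))=\rho_*\left(4c(O)+c(D_+)+c(D_-)\right),
\end{equation*}
and to recover the stated Euler-characteristic identity by extracting the term of dimension~$0$, exactly as in \S\ref{E711213}. This limit is in fact easier to treat than the $\Delta_{112}\to\Delta_{13}$ case, since the three loci $O$, $D_+$, $D_-$ are all \emph{smooth}: there is no Whitney umbrella in $X$, and hence no normalization or Chern--Schwartz--MacPherson correction to keep track of. Because I would handle $D_+$ and $D_-$ separately, the tangency along $\xi=\gamma=\phi=0$ plays no role. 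For the left-hand side I would invoke Theorem~\ref{SVWup} with $m=3$: as $g$ is a section of $\cL^4$ in the $E_7$ case, the divisor $Z$ has class $4L$, so by adjunction $\varphi_*(c(Y))=3\,c(Z)=3\cdot\frac{4L}{1+4L}\,c(B)$, and the left-hand side equals $\frac{24L}{1+4L}\,c(B)$.

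For the right-hand side I would compute the three push-forwards by adjunction in $X$ together with the Chern class of the branched double cover $\rho:X\to B$. Here $O=\{\xi=0\}$ has class $L$ in $X$, while $D_\pm=\{\gamma\pm\phi\xi=0\}$ have class $4L$ (since $\gamma$ and $\phi\xi$ are both sections of $\cL^4$); the involution $\xi\mapsto-\xi$ exchanges $D_+$ and $D_-$, so $\rho_*(c(D_+))=\rho_*(c(D_-))$. The orientifold term is as before: $O$ maps isomorphically to $\uO$ (of class $2L$), so $\rho_*(c(O))=c(\uO)=\frac{2L}{1+2L}c(B)$. For the brane term I would use the standard formula $c(X)=\rho^*\!\left(\frac{1+L}{1+2L}c(B)\right)$ for a double cover branched along a divisor of class $2L$, whence by adjunction and the projection formula
\begin{equation*}
\rho_*(c(D_+))=\rho_*\left(\frac{4\rho^*L}{1+4\rho^*L}\,c(X)\right)
=\rho_*\rho^*\!\left(\frac{4L(1+L)}{(1+4L)(1+2L)}\,c(B)\right)
=\frac{8L(1+L)}{(1+4L)(1+2L)}\,c(B),
\end{equation*}
using $\rho_*\rho^*=2$. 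As a consistency check, this agrees with $\rho_*[D_+]=[\uD]=8L$, since $\rho$ restricts to a generically one-to-one map $D_+\to\uD$.

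Assembling the right-hand side, I would then verify
\begin{equation*}
\frac{8L}{1+2L}\,c(B)+2\cdot\frac{8L(1+L)}{(1+4L)(1+2L)}\,c(B)
=\frac{24L(1+2L)}{(1+2L)(1+4L)}\,c(B)=\frac{24L}{1+4L}\,c(B),
\end{equation*}
which matches the left-hand side and establishes the Chern class identity; Claim~\ref{E7tadrelchi1} then follows by taking the degree of the dimension-$0$ part. The only point requiring care is the Chern class of the double cover and the bookkeeping of the push-forward; once $c(X)=\rho^*\!\left(\frac{1+L}{1+2L}c(B)\right)$ is established --- which I would re-derive, or cross-check by confirming that it reproduces $\rho_*(c(O))=\frac{2L}{1+2L}c(B)$ --- the remainder is the elementary algebraic identity displayed above, and I anticipate no real obstacle.
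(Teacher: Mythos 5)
Your proposal is correct and follows essentially the same route as the paper's verification in \S\ref{E7224}: you promote the claim to the Chern class identity \eqref{E7tadrel1}, compute the left-hand side via Theorem~\ref{SVWup} with $m=3$ and $Z$ of class $4L$, and evaluate $\rho_*(c(O))$ and $\rho_*(c(D_\pm))$ by adjunction, the double-cover formula $c(X)=\frac{1+L}{1+2L}\,\rho^*(c(TB))\cap[X]$, and the projection formula (your $\rho_*\rho^*=2$ is the paper's $\cap\,(2[B])$), arriving at the same matching $\frac{24L}{1+4L}\,c(B)$ on both sides. The only additions beyond the paper's argument are harmless cross-checks, such as confirming $\rho_*[D_+]=[\uD]=8L$.
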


Again, this identity is just a manifestation of an identity of Chern
classes, holding in any dimension and without any Calabi-Yau
restriction:
\begin{equation}
\label{E7tadrel1}
2\varphi_*(c(Y))=\rho_*\left( 4 c(O)+c(D_+)+c(D_-)\right)\quad.
\end{equation}

Verifying \eqref{E7tadrel1} is essentially straightforward. Since $O$
maps isomorphically to $\uO$,
\begin{equation*}
\rho_* c(O)=c(\uO)=\frac{2L}{1+2L} c(B)\quad.
\end{equation*}
Each of $D_+$ and $D_-$ is a smooth hypersurface in $X$, of class $4L$,
hence
\begin{equation*}
c(D_{\pm})=\frac{4L}{1+4L} c(X)=\frac{4L}{1+4L} \cdot \frac{1+L}{1+2L} 
\cdot \rho^*(c(TB)) \cap[X]
\quad.
\end{equation*}
By the projection formula:
\begin{multline}
\label{pfDpm}
\rho_* c(D_{\pm})=\frac{4L}{1+4L} c(X)=\frac{4L(1+L)}{(1+2L)(1+4L)}  
c(TB) \cap {(2[B])}\\
=\frac{8L(1+L)}{(1+2L)(1+4L)} c(B)
\quad.
\end{multline}
Therefore, the right-hand side of \eqref{E7tadrel1} equals
\begin{equation*}
\left(4\cdot \frac{2L}{1+2L} + 
2 \cdot \frac{8L(1+L)}{(1+2L)(1+4L)}\right) c(B)
=\frac{24 L}{1+4L} \, c(B)\quad.
\end{equation*}
This equals the left-hand side by Theorem~\ref{SVWup}
(cf.~\eqref{lhsE7}), completing the verification.

\subsubsection{The $\Delta_Q \to \Delta_P$ $E_6$ limit}\label{E6QP}
Our last example is the $E_6$ limit found in \S\ref{E6lim}. The
limiting discriminant is defined by~\eqref{E6Delh}:
\begin{equation*}
h^2 (h+3k^2)(\gamma^2-h\phi^2)\quad.
\end{equation*}
and pulls back to
\begin{equation*}
\xi^4 (\gamma+\xi \phi)(\gamma-\xi \phi)(\xi+i\sqrt 3k)(\xi-i\sqrt 3k)\quad.
\end{equation*}
This configuration consists of the orientifold $O$ and of {\em two\/}
brane-image-brane pairs: a first pair $D_{1\pm}$, with components of
class~$3L$, and a second pair $D_{2\pm}$, with components of
class~$L$. The second pair is transversal, while $D_{1+}$ and $D_{1-}$
are tangent along $\xi=\gamma=\phi =0$.

\begin{claim}\label{E6tadrelchi}
$2 \chi(Y)=4 \chi(O)+\chi(D_{1+})+\chi(D_{1-})+\chi(D_{2+})+\chi(D_{2-})\quad.$
\end{claim}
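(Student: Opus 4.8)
The plan is to deduce Claim~\ref{E6tadrelchi} as the degree-zero part of a Chern class identity valid in arbitrary dimension and without any Calabi--Yau hypothesis, exactly in the spirit of \eqref{E7tadrel1}. Concretely, I would prove
\begin{equation*}
2\varphi_*(c(Y))=\rho_*\bigl(4c(O)+c(D_{1+})+c(D_{1-})+c(D_{2+})+c(D_{2-})\bigr)\quad,
\end{equation*}
and then extract the term of dimension~$0$, using $\int\varphi_*(c(Y))=\chi(Y)$ and $\int\rho_*(c(V))=\chi(V)$ for each smooth component $V$. Because every divisor in this limit is nonsingular (in contrast with \S\ref{Senstad} and \S\ref{E711213}), no Whitney-umbrella correction enters: there is no $\chi_o$, no Chern--Schwartz--MacPherson class, and no pinch-locus term. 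This places the computation on the same footing as the $\Delta_{22}\to\Delta_4$ case of \S\ref{E7224}, and the verification is essentially a bookkeeping of classes in the double cover $X$.

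First I would read off the components and their classes. Pulling back the limiting discriminant \eqref{E6Delh} to $X:\xi^2=h$ gives the factorization
\begin{equation*}
\xi^4(\gamma+\xi\phi)(\gamma-\xi\phi)(\xi+i\sqrt3\,k)(\xi-i\sqrt3\,k)\quad,
\end{equation*}
so that $O:\xi=0$ has class $L$, the pair $D_{1\pm}:\gamma\pm\xi\phi=0$ has class $3L$ (since $\gamma$ is a section of $\cL^3$ and $\xi\phi$ of $\cL\otimes\cL^2$), and the pair $D_{2\pm}:\xi\pm i\sqrt3\,k=0$ has class $L$. To evaluate the right-hand side I would use three ingredients: that $O$ maps isomorphically onto the smooth ramification divisor $\uO$ of class $2L$, so $\rho_*c(O)=c(\uO)=\frac{2L}{1+2L}c(B)$; that each $D_{2\pm}$ is likewise a section of $X\to B$ mapping isomorphically onto a smooth divisor of class $2L$, giving $\rho_*c(D_{2\pm})=\frac{2L}{1+2L}c(B)$; and for the genuinely two-sheeted pair $D_{1\pm}$, the adjunction formula in $X$ together with $c(X)=\frac{1+L}{1+2L}\rho^*(c(TB))\cap[X]$ and the projection formula $\rho_*\bigl(\rho^*\alpha\cap[X]\bigr)=2\alpha\cap[B]$ yield $\rho_*c(D_{1\pm})=\frac{6L(1+L)}{(1+2L)(1+3L)}c(B)$. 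Summing, the right-hand side becomes
\begin{equation*}
\Bigl(\frac{12L}{1+2L}+\frac{12L(1+L)}{(1+2L)(1+3L)}\Bigr)c(B)\quad.
\end{equation*}

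For the left-hand side I would invoke Theorem~\ref{SVWup} in the $E_6$ case ($m=4$): since $g$ is a section of $\cL^3$, the reference hypersurface $Z$ has class $3L$, whence $\varphi_*(c(Y))=4\cdot\frac{3L}{1+3L}c(B)$ and $2\varphi_*(c(Y))=\frac{24L}{1+3L}c(B)$. It then remains only to check the elementary identity
\begin{equation*}
\frac{12L}{1+2L}+\frac{12L(1+L)}{(1+2L)(1+3L)}=\frac{12L}{1+2L}\cdot\frac{2+4L}{1+3L}=\frac{24L}{1+3L}\quad,
\end{equation*}
where the last step uses $2+4L=2(1+2L)$. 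The two sides agree, proving the Chern class identity and hence Claim~\ref{E6tadrelchi}. I do not anticipate a genuine obstacle here: the only subtlety is the correct identification of the split components $D_{1\pm}$, $D_{2\pm}$ and their classes, and the distinction between the components that are sections of $X\to B$ (where $\rho_*c$ is computed directly) and the pair $D_{1\pm}$ (where adjunction in $X$ is needed because each sheet is a smooth resolution of a singular base divisor). As a consistency remark, since $[D_{2\pm}]=[O]$ one has $\chi(D_{2\pm})=\chi(O)$, which reduces the degree-zero part to the simplified relation $\chi(Y)=3\chi(O)+\chi(D_{1+})$ stated in \S\ref{IntroE6Lim}.
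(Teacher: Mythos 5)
Your proposal is correct and takes essentially the same route as the paper's own verification in \S\ref{E6QP}: you prove the identical Chern class identity \eqref{E6tadrel}, compute $\rho_*c(O)$ and $\rho_*c(D_{2\pm})$ as classes of smooth divisors of class $2L$ in $B$, handle $D_{1\pm}$ by adjunction in $X$ together with $c(X)=\frac{1+L}{1+2L}\,\rho^*(c(TB))\cap[X]$ and the degree-$2$ projection formula (the method of \S\ref{E7224}, exactly as the paper does), and match against Theorem~\ref{SVWup} with $m=4$ and $Z$ of class $3L$. All the class identifications and the final rational-function check agree with the paper, so nothing further is needed.
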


Once more, this identity of Euler characteristics is simply a
numerical avatar of a general identity of Chern classes, which holds
in arbitrary dimension and regardless of Calabi-Yau restrictions:
\begin{equation}
\label{E6tadrel}
2\varphi_*(c(Y))=\rho_*\left(4c(O)+c(D_{1+})+c(D_{1-})+c(D_{2+})+c(D_{2-})\right)\quad.
\end{equation}

The verification of \eqref{E6tadrel} follows the same guidelines as
the verification of \eqref{E7tadrel1} given above. Since $O$,
$D_{2\pm}$ are smooth hypersurfaces of the same class, mapping
isomorphically to $\uO$,
\begin{equation*}
\rho_*(4c(O)+c(D_{2+})+c(D_{2-}))=6 \rho_*(c(O))=6\cdot \frac {2L}{1+2L} c(B)
\quad.
\end{equation*}
The components $D_{1\pm}$ have class $3L$ in $X$, and their
contribution can be computed by the same method used for $D_\pm$ in
\S\ref{E7224}: this gives
\begin{equation*}
\rho_* c(D_{1\pm})=\frac{3L(1+L)}{(1+2L)(1+3L)}  c(TB)
\cap {(2[B])}
=\frac{6L(1+L)}{(1+2L)(1+3L)} c(B)
\quad.
\end{equation*}  
Therefore, the right-hand side of \eqref{E6tadrel} equals
\begin{equation}
\label{rhsE6}
\left( 6\cdot \frac {2L}{1+2L}+2\cdot \frac{6L(1+L)}{(1+2L)(1+3L)}\right) 
c(B) =\frac{24 L}{1+3L}\, c(B)\quad.
\end{equation}
By Theorem~\ref{SVWup}, the left-hand side of \eqref{E6tadrel} equals
\begin{equation*}
\varphi_*(c(Y))=4\cdot c(Z)\quad,
\end{equation*}
where $Z$ is given by $g=0$. In the $E_6$ case $g$ is a section of
$\cL^3$, so this gives
\begin{equation*}
2 \varphi_*(c(Y))=2\cdot 4\cdot \frac{3L}{1+3L}
\end{equation*}
with the same result as \eqref{rhsE6}, concluding the verification.

\subsection{A classification of configurations of smooth 
branes}\label{Class.smooth}

The method used in \S\ref{Chernids} to verify the tadpole relations
can be used to explore the set of possible orientifold/brane
configurations satisfying the tadpole relation, under the assumption
that the supports of all branes are {\em nonsingular.\/}

For this, we propose the Ansatz that the tadpole relation, while
originally obtained under specific hypotheses satisfied in the
motivating physical setting, reflects a {\em universal\/} identity,
holding regardless of the specific choice of $L$ and of any
restriction posed on the nonsingular base variety $B$. This is the
case in the examples examined in \S\ref{Chernids}. While the original
tadpole relation is a specific statement about the geometry of
Calabi-Yau fibrations, the universal tadpole relation is a formal
identity of Euler characteristics, computed as functions of a variable 
$L$ and of the Chern classes $c_1(B), c_2(B), \dots$ of the base.

The verifications carried out in \S\ref{Chernids} show that the
tadpole relation holds universally in the examples obtained in
\S\ref{Wcl}.  In the $\Delta_{22} \to \Delta_4$ $E_7$ case and in the
$\Delta_Q \to \Delta_P$ $E_6$ case, the branes are supported on {\em
nonsingular\/} components of the limiting discriminant. We can ask
whether there are other such `nonsingular' configurations satisfying
the universal tadpole relation.  In the other two situations
encountered in \S\ref{Chernids}, a {\em singular\/} brane appears. Is
this a necessity, or a pathology?

Informally, we find the following:
\begin{claim}\label{nogoclaim}
\begin{itemize}
\item There are {\em no\/} configurations of smooth branes satisfying
the universal tadpole relation for $E_8$ fibrations;
\item The only configuration of smooth branes satisfying the universal
tadpole relation for $E_7$ fibrations is the one arising from the
$\Delta_{22} \to \Delta_4$ specialization, examined in \S\ref{E7224};
\item The only configuration of smooth branes satisfying the universal
tadpole relation for $E_6$ fibrations is the one arising from the
$\Delta_{Q} \to \Delta_P$ specialization, examined in \S\ref{E6QP}.
\end{itemize}
\end{claim}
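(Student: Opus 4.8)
The plan is to turn the universal tadpole relation into a single identity of rational functions in the formal variable $L=c_1(\cL)$, and then read off all solutions by a pole/residue analysis followed by an elementary integrality check. The geometry enters only through two inputs already available: the Sethi--Vafa--Witten formula of Theorem~\ref{SVWup}, and the universal Chern-class pushforward of a smooth brane computed in \S\ref{Chernids}.

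First I would record the one computation that does all the work. For any smooth hypersurface $D$ of class $\beta L$ in $X$, adjunction gives $c(D)=\frac{\beta L}{1+\beta L}c(X)$, and combining $c(X)=\frac{1+L}{1+2L}\rho^*c(TB)\cap[X]$ with the projection formula and $\rho_*[X]=2[B]$ yields
\[
\rho_* c(D)=\frac{2\beta L(1+L)}{(1+\beta L)(1+2L)}\,c(B).
\]
The key point is that this depends only on the integer $\beta$, not on how $D$ maps to $B$; the orientifold $O$ and the components $D_{2\pm}$ ($\beta=1$), $D_\pm$ ($\beta=4$), $D_{1\pm}$ ($\beta=3$) treated in \S\ref{Chernids} are all special cases. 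On the other side, Theorem~\ref{SVWup} gives $\varphi_*c(Y)=m\,c(Z)$ with $Z$ of class $\alpha L$, so $2\varphi_*c(Y)=\frac{2m\alpha L}{1+\alpha L}c(B)=\frac{24L}{1+\alpha L}c(B)$, using $m\alpha=12$ in each case ($(m,\alpha)=(2,6),(3,4),(4,3)$ for $E_8,E_7,E_6$).

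Next I would encode a putative smooth configuration as the orientifold $O$ (class $L$, multiplicity $4$) together with branes of classes $\beta_1 L,\dots,\beta_n L$ with $\beta_i\in\mathbb{Z}_{>0}$. Since $c(B)$ is invertible and $L$ is free in the universal Ansatz, the relation $2\varphi_*c(Y)=\rho_*(4c(O)+\sum_i c(D_i))$ becomes the scalar identity $\frac{8L}{1+2L}+\sum_i\frac{2\beta_i L(1+L)}{(1+\beta_i L)(1+2L)}=\frac{24L}{1+\alpha L}$. Multiplying by $\frac{1+2L}{2L}$ and using $\frac{\beta(1+L)}{1+\beta L}=\beta-\frac{\beta(\beta-1)L}{1+\beta L}$, the constant term produces the D7 tadpole $\sum_i\beta_i=8$ at no cost, and the rest collapses to
\[
\sum_i\frac{\beta_i(\beta_i-1)}{1+\beta_i L}=\frac{12(\alpha-2)}{1+\alpha L}.
\]

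The last identity is the crux, and this is the step I expect to carry the weight. Each summand is a simple fraction with a pole at $L=-1/\beta_i$ and strictly positive weight $\beta_i(\beta_i-1)$ once $\beta_i\ge2$, while branes with $\beta_i=1$ drop out entirely. Because all residues are positive there is no cancellation, so the pole set on the left must reduce to the single pole $L=-1/\alpha$ on the right: every $\beta_i\ge2$ is forced to equal $\alpha$, and at least one must (as $\alpha\ne2$ keeps the right side nonzero). Writing $p$ for the number of branes of class $\alpha L$ and $q$ for those of class $L$, residue matching gives $p\,\alpha(\alpha-1)=12(\alpha-2)$ while the D7 relation gives $p\alpha+q=8$. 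Thus $p=\frac{12(\alpha-2)}{\alpha(\alpha-1)}$, which equals $\tfrac85$ for $\alpha=6$ (no solution, proving the $E_8$ statement), $2$ for $\alpha=4$ with $q=0$ (the pair of class-$4L$ branes of \S\ref{E7224}), and $2$ for $\alpha=3$ with $q=2$ (the classes $3L,3L,L,L$ of \S\ref{E6QP}). This matches the three cases exactly. The one point needing care is the justification that brane classes are positive integer multiples of $L$---so that the poles land at distinct rational points and the integrality obstruction for $E_8$ actually bites---which I would deduce from the fact that the branes are reduced components of the limiting discriminant, itself a section of a power of $\cL$.
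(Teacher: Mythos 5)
Your proposal is correct and takes essentially the same route as the paper: combining Theorem~\ref{SVWup} with the universal pushforward $\rho_*c(D)=\frac{2\beta L(1+L)}{(1+\beta L)(1+2L)}\,c(B)$ for a smooth hypersurface of class $\beta L$ in $X$, you reduce the universal tadpole relation to precisely the rational-function identity \eqref{STRpf} in the formal variable $L$, which is exactly how Theorem~\ref{nogothm} is established. The paper leaves the final verification of that identity to the reader, and your partial-fraction/positive-residue analysis is a clean way to carry it out; since the $\beta_i=1$ terms drop out of your key identity, it applies verbatim to the paper's slightly more general setting (which does not presuppose four class-$L$ orientifold components), and allowing $\alpha=2$ there also recovers the third, non-geometric solution $(1,\dots,1)$ listed in Theorem~\ref{nogothm}.
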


Thus, the appearance of the singular $D7$-brane in Sen's limit and in
the $E_7$ limit studied in \S\ref{E711213} is not accidental.

Note that the Calabi-Yau condition is not used in this statement. It
is interesting to study configurations that satisfy the tadpole
relation universally but subject to a Calabi-Yau hypothesis; this
narrows the scope of the tadpole relation, so it can potentially be
satisfied by more configurations. We will discuss this case at the end
of the section.

We formalize Claim~\ref{nogoclaim} as follows. We consider
configurations of (not necessarily distinct) hypersurfaces
$D_1,\dots,D_r$ of the double cover $X$ of $B$ ramified along $O$.  We
assume that $D_i$ has class $a_i L$, and that $\sum a_i =12$: this is
the case if the divisors appear as components of a limit discriminant.
Typically, the orientifold $\xi=0$ corresponds to $4$ components
$D_i$, with $a_i=1$, but we do not impose this as a necessary
condition.

{\em Assume all hypersurfaces $D_i$ are nonsingular.\/} A strong 
tadpole relation then takes the form
\begin{equation}
\label{preSTR}
2 \varphi_* c(Y) = \sum_{i=1}^r \rho_*(c(D_i))\quad.
\end{equation}
Again, typically $D_1=D_2=D_3=D_4$ would be the support of the
orientifold $O$, so this relation, after taking degrees, would amount
to
\begin{equation}
\label{STRchi}
2\chi(Y) = 4 \chi(O) + \sum_{i>4} \chi(D_i)\quad,
\end{equation}
the ordinary tadpole relation. Applying Theorem~\ref{SVWup},
\eqref{preSTR} may be rewritten
\begin{equation}
\label{STR}
2 m\, c(Z) = \sum_{i=1}^r \rho_*(c(D_i))\quad,
\end{equation}
where $Z$ is a nonsingular hypersurface in $B$, of class $a L$. We
have $(m,a)=(2,6)$, $(3,4)$, $(4,3)$ resp.~in the $E_8$, $E_7$, $E_6$
case, resp. We say that \eqref{STR} is `universally satisfied' if the
corresponding identity depending on $a$, $a_i$ is satisfied as a
formal identity of Chern classes, independently of the choice of $L$
  or $B$.

The precise version of Claim~\ref{nogoclaim} is as follows.

\begin{theorem}\label{nogothm}
Assume all hypersurfaces $D_i$ are nonsingular, and $D_i$ has class
$a_iL$, with $a_i>0$ integers and $\sum_i a_i=12$; assume $Z$ is also
nonsingular, and has class $aL$. Then \eqref{STR} is universally
satisfied only in the following three cases:
\[
\begin{tabular}{||c|c||}
\hline
$(m,a)$ & $(a_1,a_2,\dots)$ \\
\hline\hline
$(3,4)$ & $(1,1,1,1,4,4)$ \\
\hline
$(4,3)$ & $(1,1,1,1,1,1,3,3)$ \\
\hline
$(6,2)$ & $(1,\dots,1)$ \\
\hline
\end{tabular}
\]
In fact, the weaker requirement that \eqref{STRchi} is universally
satisfied for $\dim B=3$ suffices to draw the same conclusion.
\end{theorem}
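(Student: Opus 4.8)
The plan is to convert the geometric statement into a single identity of rational functions in one variable, and then into an elementary Diophantine problem about partitions of~$12$. First I would record the two Chern-class inputs. Writing $L=c_1(\cL)$, a nonsingular hypersurface $Z\subset B$ of class $aL$ satisfies $c(Z)=\frac{aL}{1+aL}c(B)$ by adjunction, while a nonsingular $D_i\subset X$ of class $a_iL$ pushes forward to $B$ as $\rho_*c(D_i)=\frac{2a_iL(1+L)}{(1+2L)(1+a_iL)}c(B)$; the latter is exactly the double-cover computation already carried out in \S\ref{E7224} (cf.~\eqref{pfDpm}), which is valid for any class $a_iL$ and in any dimension. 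Substituting these into \eqref{STR} and cancelling the invertible factor $c(B)$ turns the universal relation into the requirement that
\[
2m\,\frac{aL}{1+aL}=\sum_{i=1}^r \frac{2a_iL(1+L)}{(1+2L)(1+a_iL)}
\]
holds identically in the formal variable $L$.

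Next I would extract the numerical content by comparing the coefficients of $L$, $L^2$, $L^3$ on the two sides. Using the standing hypothesis $\sum_i a_i=12$, this gives exactly
\[
ma=12,\qquad \textstyle\sum_i a_i^2=12(a-1),\qquad \sum_i a_i^3=12(a^2-a-1).
\]
I would stress that these three equations are precisely what the \emph{weaker} requirement of the theorem delivers: for $\dim B=3$ the degree-zero part of \eqref{STR}, integrated over $B$, involves only $\int L^3$, $\int L^2c_1(B)$, $\int Lc_2(B)$ (the constant terms of both sides vanish), and testing on $B=\Pbb^3$ with $\cL=\cO(d)$ for varying $d$ forces the $L,L^2,L^3$ coefficients to agree. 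Thus the full Chern-class identity and its dimension-three Euler-characteristic shadow impose the \emph{same} constraints, which accounts for the final sentence of the theorem.

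Finally I would solve this system, treating $m$ and $a$ as free parameters. Since $ma=12$, $a$ divides $12$; the value $a=1$ is impossible because $\sum a_i^2\ge\sum a_i=12>0$. For each remaining $a\in\{2,3,4,6,12\}$ the relation $\sum a_i^2=12(a-1)$ bounds the largest part via $a_i^2\le\sum a_i^2$, leaving only finitely many partitions of $12$, and the cubic moment $\sum a_i^3=12(a^2-a-1)$ then selects among them; this elimination produces a solution exactly for $a=2,3,4$, uniquely $(1^{12})$, $(1^6,3^2)$, $(1^4,4^2)$, with $m=6,4,3$, matching the table. To close the argument I would verify that these three do satisfy the \emph{full} identity: for $(1^{12})$ the displayed identity reads $\frac{12}{1+2L}=\frac{12}{1+2L}$, while the other two are exactly the configurations already checked in \S\ref{E6QP} and \S\ref{E7224}; since the full identity implies its dimension-three shadow, nothing else can satisfy \eqref{STR} either. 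I expect the genuine obstacle to be this elimination step for the large divisors: for $a=6$ (and $a=12$) the quadratic moment alone still admits spurious partitions such as $(7,3,1,1)$, and excluding them truly requires the cubic moment $\sum a_i^3$. The difficulty there is bookkeeping rather than conceptual, so the main care will go into organizing the finite case analysis cleanly.
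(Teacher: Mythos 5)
Your proposal is correct and follows essentially the same route as the paper: your displayed identity is exactly the paper's reduction \eqref{STRpf} (up to the common factor of $2$), obtained by the same adjunction/push-forward computations, and the paper likewise proves the theorem by requiring this to hold as an identity of rational functions in $L$, noting that matching in degree~$3$ (your coefficient equations $ma=12$, $\sum_i a_i^2=12(a-1)$, $\sum_i a_i^3=12(a^2-a-1)$) already pins down the same list. Your power-sum elimination over $a\in\{2,3,4,6,12\}$ and the closing verification of the full identity for the three surviving partitions simply make explicit the finite check that the paper leaves to the reader.
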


The first and second case reproduce the configurations appearing in
\S\ref{E7224} and~\ref{E6QP}: the orientifold (of class $L$) appears
with multiplicity~$4$ in both cases; in the $E_7$ case, a
brane-image-brane appears with components of class $4L$; in the $E_6$
case, two brane-image-brane pairs appear, of class $L$ and $3L$,
respectively.

No case corresponding to $E_8$ fibrations appears in the list. We do
not have a geometric interpretation for the third case displayed in
Theorem~\ref{nogothm}.

To prove Theorem~\ref{nogothm}, rewrite \eqref{STR} in terms of $L$
and $c(B)$: by essentially the same arguments used in
\S\ref{Chernids}, this yields
\begin{equation}
\label{STRpf}
m\cdot \frac{aL}{1+aL}
=\left(\sum_{i=1}^r \frac{a_i L}{1+a_i L}\right) \frac{1+L}{1+2L}
\end{equation}
up to a common factor of $c(B)$, which may be discarded as the
identity is required to hold universally. For the same reason,
\eqref{STRpf} must hold as an identity of rational functions in the
indeterminate~$L$.  Theorem~\ref{nogothm} is proved by verifying that
\eqref{STRpf} is satisfied in this sense only in the three cases
listed in the statement.

In fact, requiring that $\sum_i a_i=12$, multiplying both sides of
\eqref{STRpf} by $c(B)$, and imposing a match in degree~$3$ leads to
precisely the same list of cases.  Thus, the weaker requirement that
the tadpole relation should be universally satisfied at the level of
Euler characteristics for $\dim B=3$ (in the form~\eqref{STRchi})
constrains the situation in precisely the same way.

Carrying out this last verification under the additional assumption
that $L=c_1(TB)$ explores the possible configurations in the more
physically significant Calabi-Yau case
(cf.~Proposition~\ref{CYcond}). This produces three more
possibilities:
\[
\begin{tabular}{||c|c||}
\hline
$(m,a)$ & $(a_1,a_2,\dots)$ \\
\hline\hline
$(2,6)$ & $(1,1,1,1,1,7)$ \\
\hline
$(3,4)$ & $(1,1,1,1,1,1,1,5)$ \\
\hline
$(4,3)$ & $(1,2,2,2,2,3)$ \\
\hline
\end{tabular}
\]
The first case is the $E_8$ configuration described in \cite{AlEs},
\S3.  We stress that these configurations satisfy the tadpole relation
only in the Calabi-Yau case; this is a weaker requirement than that
imposed in Theorem~\ref{nogothm}.

It would be interesting to provide a geometric realization of the  
third configuration listed in Theorem~\ref{nogothm}.


\section{A visit to the zoo}\label{morelimits}

The requirement that $j\propto h^4$ adopted in \S\ref{Wcl} corresponds
to a pure O7 orientifold plane with no D7-brane on top of it.  As we
have observed at the end of \S\ref{stratecomm}, this requirement may
be weakened to $j\propto h^{4-n}$ for $0\le n\le 4$, in order to
accommodate bound states of the orientifold plane with $n$ pairs of D7
brane-image-brane on top of it. These configurations also lead to
Chern class identities.  For example the $E_6$ limit presented in
\S\ref{IntroE6Lim} and constructed in detail in \S\ref{E6lim}
specializes to a configuration with $j\propto h^3$ when $k=0$. Indeed,
in that case the D7 brane-image-brane $D_{2\pm}: \xi\pm i\sqrt{3} k=0$
reduces to $D_{2\pm }:\xi=0$ and therefore they are wrapping the same
locus as the orientifold plane. This explains the behavior $j\propto
h^3$.

There are other mechanisms producing different limits, corresponding
to different ways to satisfy the generality hypothesis (often by
introducing terms with higher $C$-weight).  Such operations can often
be interpreted as specializations of other known limits, and change
the underlying geometry in a controlled fashion. Again, it is often
possible to provide a physical interpretation for the resulting
configurations.  However, limits arising in this way are often
necessarily non-supersymmetric. 
Here we simply list several families $Y_h(C)$, with the corresponding
leading order in~$j$. 
The discriminant can be recovered from the $j$-invariant. If the $j$-invariant 
is given as an irreducible fraction $j\sim \frac{h^{4-n}}{C^r \Delta'}$, the 
discriminant is recovered as $\Delta= h^{2+n}\Delta'$. 
Note that several limits in the list presented here do not satisfy the double 
intersection properties. 

\subsection{$E_8$ limits}

\[
\begin{tabular}{||c| l  |c||}
\hline
\hline
 & & \\
Type & \multicolumn{1}{|c|}{ Realization} & $j$-invariant \\
& & \\
\hline
\hline
\multirow{4}{*}{\includegraphics[scale=0.25]{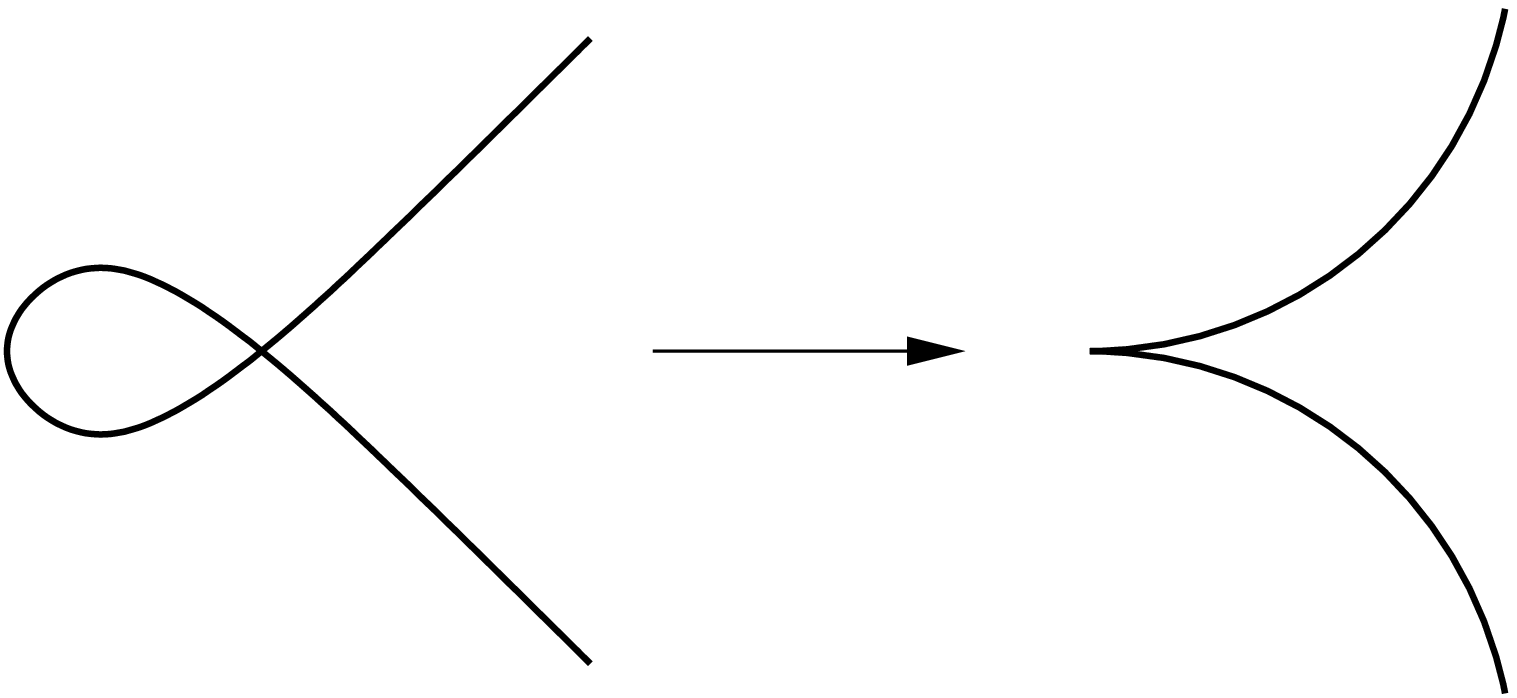}} 
& 
$\aligned
f &=-3 h^2+C \phi \\
g&=-2h^3 +C \gamma
\endaligned$
& $\displaystyle{\frac 1{C}\cdot \frac{h^3}{h\phi-\gamma}}$ \\
\cline{2-3}
& $\aligned
f &=-3 h^2 +C \phi \\
g &=-2h^3 +C h\phi +C^3 \gamma
\endaligned$
& $\displaystyle{\frac 1{C^2}\cdot \frac{h^4}{\phi^2}}$ \\
\hline
\end{tabular}
\]

\subsection{$E_7$ limits}

\[
\begin{tabular}{||c| l  |c||}
\hline
\hline
 & & \\
Type & \multicolumn{1}{|c|}{ Realization} & $j$-invariant \\
& & \\
\hline
\hline
\multirow{5}{*}{\includegraphics[scale=0.3]{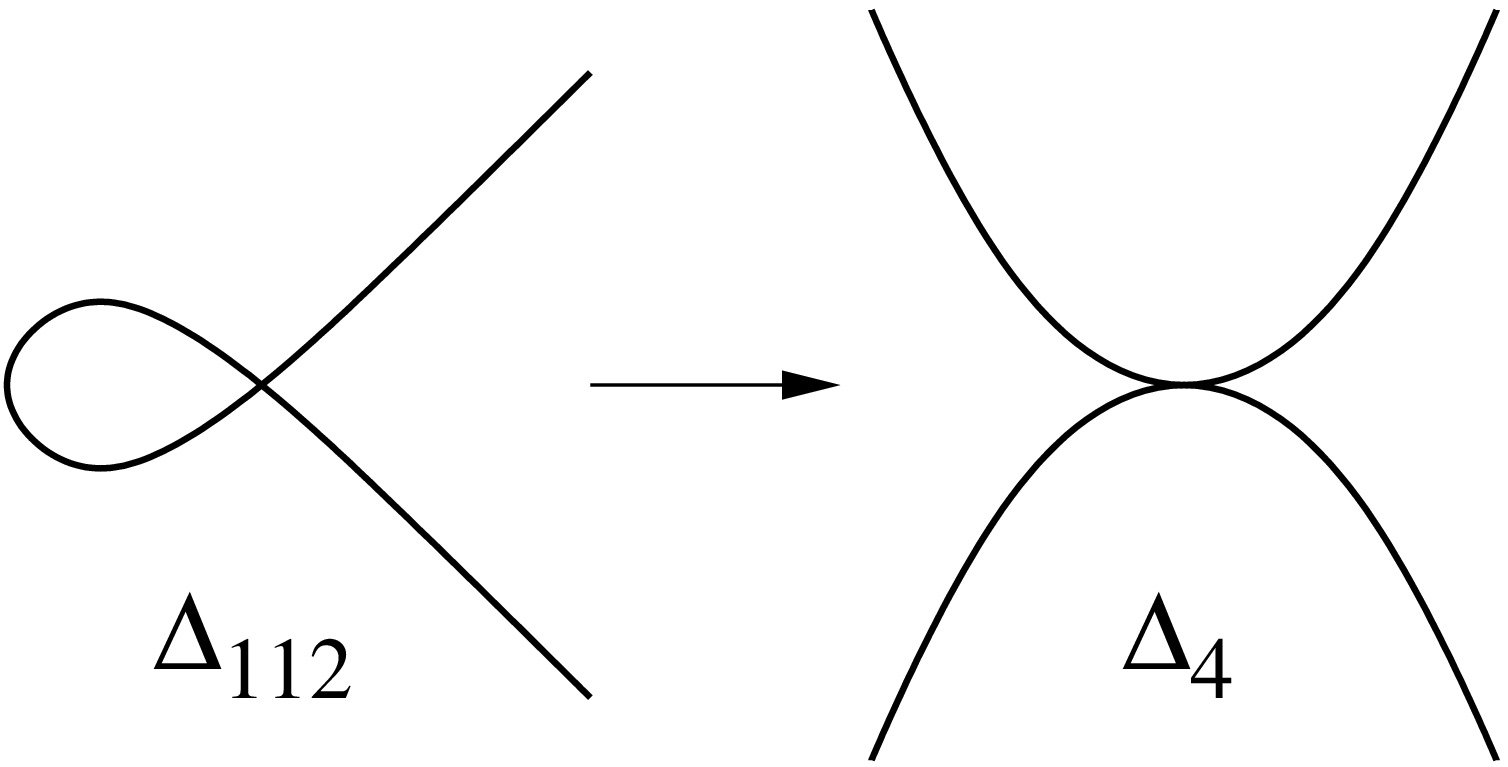}} 
& 
$\aligned 
e=&h \\
f= &C \phi \\
g= &C \gamma
\endaligned
$
& 
$\displaystyle{\frac 1{C}\cdot \frac{h^2}{\gamma}}$\\
\cline{2-3}
& 
$
\aligned
e&=h \\
f&= 2 C \phi \\
g&= C^2 \gamma
\endaligned$
& $\displaystyle{\frac 1{C^2}\cdot \frac{h^3}{h \gamma-\phi^2}}$\\
\hline
\hline
\multirow{9}{*}{ \includegraphics[scale=0.3]{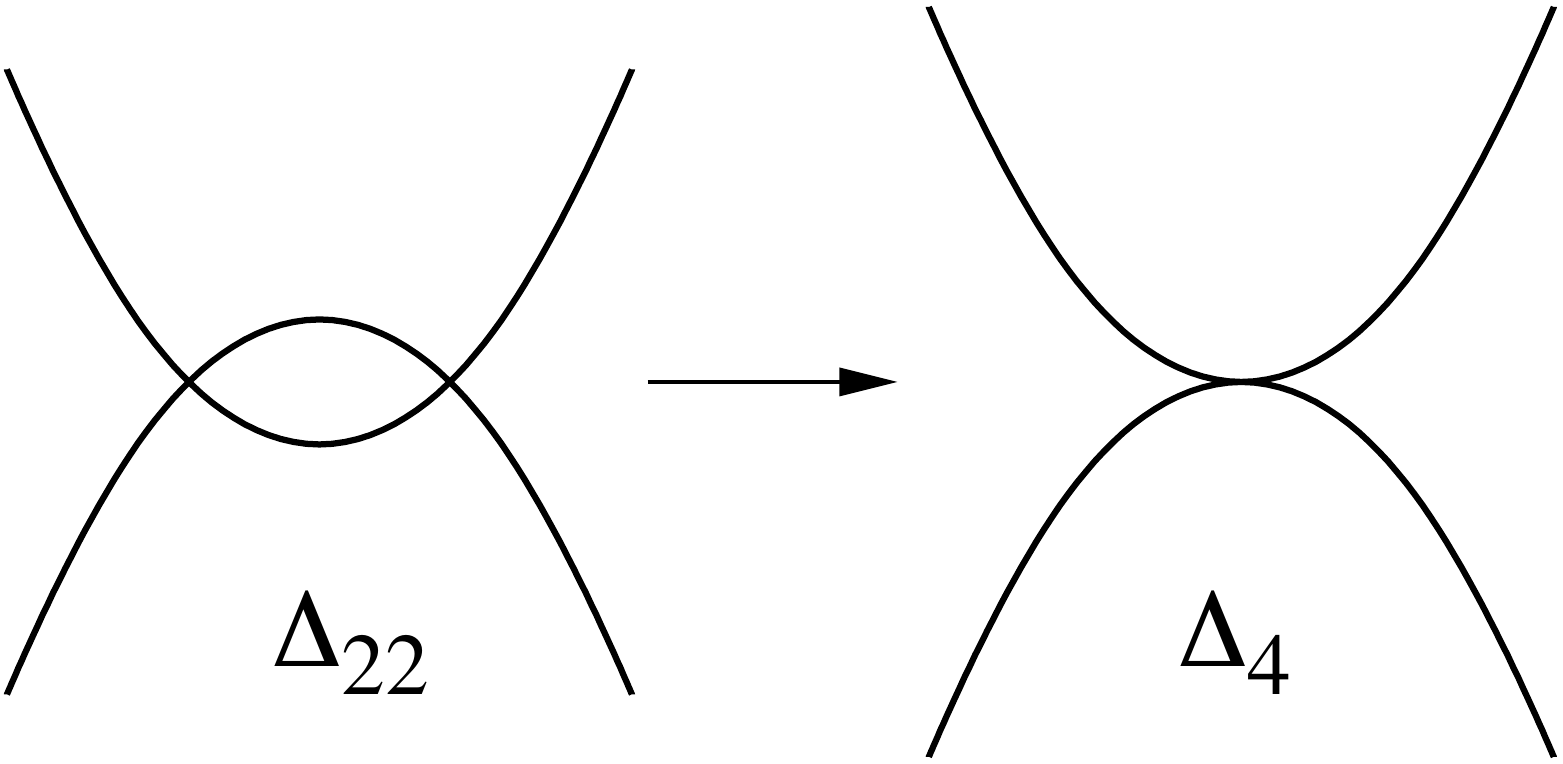}} & 
$
\aligned
e&=-2h \\
f&= Ch\eta+C^2 \phi \\
g&= h^2+Ch^2+C^2 \gamma
\endaligned$
& $\displaystyle{\frac 1{C^2}\cdot \frac{h}{(h-\eta^2)}}$\\
\cline{2-3}
& 
$
\aligned
e&=-2h \\
f&= Ch\eta+C^2 \phi \\
g&= h^2+C \gamma
\endaligned$
& $\displaystyle{\frac 1{C^2}\cdot \frac{h^4}{(\eta^2h^3-\gamma^2)}}$\\
\cline{2-3}
& 
$
\aligned
e&=2h \\
f&= C^2 \phi \\
g&= h^2+C \gamma
\endaligned$
& $\displaystyle{\frac 1{C^2}\cdot \frac{h^4}{\gamma^2}}$\\
\hline
\hline
\multirow{20}{*}{  \includegraphics[scale=0.35]{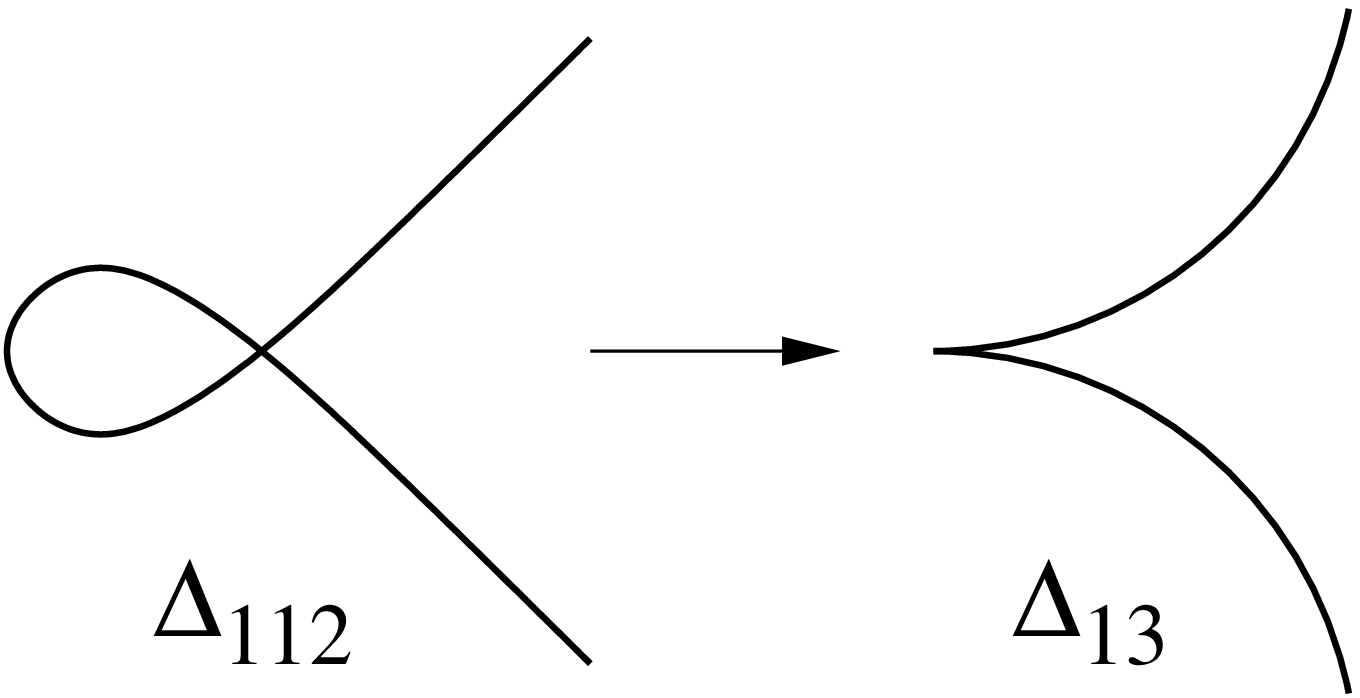}}
 &
  $
\aligned
e&=4 h-6k^2 \\
f&= 8k(h-k^2)+ C h \eta+C^2 \phi \\
g&= k^2(4h-3k^2)+Ch^2 +C^2\gamma
\endaligned$
& $\displaystyle{\frac 1{C}\cdot \frac{h^3}{(h-k^2)(h-k\eta)}}$\\
\cline{2-3}
& 
$
\aligned
e&=4 h-6k^2 \\
f&= 8k(h-k^2)+ C \phi \\
g&= k^2(4h-3k^2)+Ch^2 +C^2\gamma
\endaligned$
& $\displaystyle{\frac 1{C}\cdot \frac{h^3}{(h-k^2)(h^2-k\phi)}}$\\
\cline{2-3}
& 
$
\aligned
e&=4 h-6k^2 \\
f&= 8k(h-k^2)+ C h \eta+C^2 \phi \\
g&= k^2(4h-3k^2)+C\gamma
\endaligned$
& $\displaystyle{\frac 1{C}\cdot \frac{h^3}{(h-k^2)(\gamma-kh\eta)}}$\\
\cline{2-3}
& 
$
\aligned
e&=4 h-6k^2 \\
f&= 8k(h-k^2)+ C \phi \\
g&= k^2(4h-3k^2)+C \gamma
\endaligned$
& $\displaystyle{\frac 1{C}\cdot \frac{h^3}{(h-k^2)(\gamma-k\phi)}}$\\
\cline{2-3}
& 
$
\aligned
e&=4 h-6k^2 \\
f&= 8k(h-k^2)+ C \phi \\
g&= k^2(4h-3k^2)+C^3 \gamma
\endaligned$
& $\displaystyle{\frac 1{C}\cdot \frac{h^3}{k\phi(h-k^2)}}$\\
\cline{2-3}
 & 
$
\aligned
e&=4 h-6k^2 \\
f&= 8k(h-k^2)+ C \phi \\
g&= k^2(4h-3k^2)+Ck\phi+C^3 \gamma
\endaligned$
& 
$\displaystyle{\frac 1{C^2}\cdot \frac{h^4}{\phi^2(h-k^2)}}$\\
\hline
\end{tabular}
\]

\newpage
\subsection{$E_6$ limits}

\begin{center}
\[
\begin{tabular}{||c| l |c||}
\hline
\hline
 & & \\
Type & \multicolumn{1}{|c|}{ Realization} & $j$-invariant \\
& & \\
\hline
\hline
\multirow{12}{*}{  \includegraphics[scale=1.2]{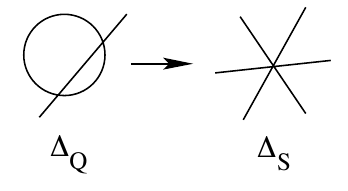}}
 & $
\aligned
d&=C\delta \\
e&=3h \\
f&= 3h+C \phi\\
g&= C\delta h+C^2 \gamma
\endaligned$
& $\displaystyle{\frac 1{C^2}\cdot \frac{h^2}{\phi^2}}$\\
\cline{2-3}
 & $
\aligned
d&=C\delta \\
e&=3h \\
f&= 3h+C \phi\\
g&= C(\delta+\eta) h+C^2 \gamma
\endaligned$
& $\displaystyle{\frac 1{C^2}\cdot \frac{h^2}{\phi^2-h\eta^2}}$\\
\cline{2-3}
 & $
\aligned
d&=C\delta \\
e&=3h \\
f&= 3h+C \phi\\
g&= C(h \delta + \gamma)
\endaligned$
& $\displaystyle{\frac 1{C^2}\cdot \frac{h^3}{\gamma^2-h \phi^2}}$\\
\hline
\hline
\multirow{16}{*}{  \includegraphics[scale=1.2]{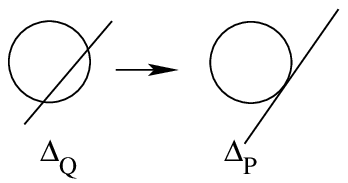}}
& $
\aligned
d&=6 k \\
e&=9k^2+3h \\
f&= 9k^2+3h +C\phi \\
g&= 2k(5k^2+3h)\\
&\   \  +C(h\eta+k\phi)+C^2\gamma
\endaligned$
& $\displaystyle{\frac 1{C^2}\cdot \frac{h^3}{(h\eta^2-\phi^2)(h +3k^2)}}$\\
\cline{2-3}
 & $
\aligned
d&=6 k \\
e&=9k^2+3h \\
f&= 9k^2+3h +C^2\phi \\
g&= 2k(5k^2+3h)+C\gamma
\endaligned$
& $\displaystyle{\frac 1{C^2}\cdot \frac{h^4}{\gamma^2 (h +3k^2)}}$\\
\cline{2-3}
& $
\aligned
d&=6 k \\
e&=9k^2+3h \\
f&= 9k^2+3h +C^2\phi \\
g&= 2k(5k^2+3h)\\
&\  \  +C(h\eta+k\phi)+C^2\gamma
\endaligned$
& $\displaystyle{\frac 1{C^2}\cdot \frac{h^4}{(h +3k^2)(h\eta+k\phi)^2}}$\\
\cline{2-3}
 & $
\aligned
d&=6 k \\
e&=9k^2+3h \\
f&= 9k^2+3h +C\phi \\
g&= 2k(5k^2+3h)+C^2\gamma
\endaligned$
& $\displaystyle{\frac 1{C^2}\cdot \frac{h^4}{\phi^2(h +3k^2)(h-k^2)}}$\\
\hline
\end{tabular}
\]
\end{center}

\bigskip


%
\newpage
\section{Conclusions and discussions}\label{Final.Conclusion}

In this paper, we have constructed new type IIB orientifold weak
coupling limits of F-theory.  This was done by considering elliptic
fibrations which are not in Weierstrass form.  The derivation of the
limits was streamlined by first introducing a geometric interpretation
of Sen's weak coupling limit in terms of a transition from semistable
singular fibers to unstable singular fibers.  This leads to a simple
algorithm to construct new limits from new families of elliptic
fibrations.  We would like to conclude with a discussion on different
issues raised by the existence of the new weak coupling limits and
with a look at future directions.

\subsubsection*{\bf New F-theory lift of type IIB and F-theory geometric 
engineering} 

The new weak coupling limits presented in this paper can
be used `in reverse' as different possible F-theory lifts for type IIB
orientifold compactifications.  This provides new tools for model
builders. The geometric intuition behind the construction of a given
weak coupling limit is a valuable asset for F-theory geometric
engineering.  Since each limit comes with a natural brane spectrum,
one can avoid intricate geometric tunings by starting with a family
that will more naturally lead to a brane spectrum close to the one we
would like to get in type IIB. For example, Sen's weak coupling limit
of a Weierstrass model favors type IIB configurations where branes
recombined into a unique irreducible and singular component, which we
refer to as a `Whitney D7-brane' since its singularities are
reminiscent of the Whitney umbrella $u^2-v^2 w=0$ in
$\mathbb{C}^3$. Naively, one might therefore conclude that F-theory
does not at all favor brane-image-brane pairs, in contrast to
assumptions generally made in the type IIB orientifold literature not
related to F-theory \cite{Jockers:2004yj}.  However, the existence of
new weak coupling limits shows that Sen's limit is not the unique weak
coupling limit of F-theory and other limits might allow supersymmetric
brane-image-brane configurations. This is exactly the case of elliptic
fibrations of type $E_6$ and $E_7$ since they naturally favors smooth
brane-image-brane pairs as illustrated in \S\ref{IntroExamples}.  The
D-brane deconstruction point of view \cite{Collinucci:2008pf}, which
is based on K-theory and is independent of taking a weak coupling
limit of F-theory, gives the general structure of D7-brane
singularities.  Brane-image-brane pairs and Whitney D7-branes are just
particular realizations of this general structure.

\subsubsection*{\bf  F-theory away from Weierstrass models}
Since any elliptic fibration is associated by means of a birational
transformation to a Weierstrass model, each of the new limits
presented here leads to a new weak coupling limit for a Weierstrass
model as well.  However, the resulting Weierstrass model is not
necessarily smooth; its Euler characteristic is different from the
Euler characteristic of the (smooth) fibration it is modeling, and
also different from the Euler characteristic of a smooth $E_8$
fibration on the same base and endowed with the same line bundle. In
particular, for F-theory compactified on a four-fold this implies that
all these models will in general have different D3 tadpoles. Elliptic
fibrations which are not in Weierstrass form have to be considered as
physically different, and interesting in their own right.  For
example, we show that supersymmetric brane-image-brane configurations
are allowed in F-theory for smooth $E_6$ and $E_7$ elliptic fibrations
whereas they are generally excluded for smooth Weierstrass models.

\subsubsection*{\bf F-theory-type-IIB D3 tadpole matching}
The duality between type IIB and F-theory requires that the D3 tadpole
is the same in both theory since D3 branes are invariant under
S-duality.  In the absence of fluxes, this leads to a simple relation
between the Euler characteristics of the four-fold and of the divisors
on which the branes and orientifold planes are wrapped.  As explained
in \cite{Collinucci:2008pf}, this relation can be used as an
indication for turning on world volume fluxes for certain
configuration of D-branes for which the F-theory-type-IIB D3 tadpole
matching does not hold in the absence of fluxes.  However, such fluxes
can induce a violation of the D-term constraints, for example for a
brane-image-brane pair except if the pair coincides with the
orientifold plane.  For this reason, brane-image-brane pairs in Sen's
limit of a Weierstrass model are generally non-supersymmetric
\cite{Collinucci:2008pf} since they require fluxes to satisfy the
F-theory-type IIB D3 tadpole relation. Here, we have constructed
limits leading to supersymmetric brane-image-brane
configurations. This was only possible by considering families for
which the F-theory-type-IIB D3 tadpole matching relation holds without
any need for fluxes, thanks to the fact that the Euler characteristic
of these elliptic fibrations (as computed in Proposition~\ref{SVW},
for $c_1(B)^3>0$) is lower than the one of a smooth Weierstrass model
constructed on the same base (in dimension~$3$, if $c_1(B)^3>0$). This
is a simple example on how the choice of an appropriate family of
elliptic fibrations is an important ingredient in the definition of a
F-theory lift of a type IIB orientifold compactification.

\subsubsection*{\bf   Singular branes and the Euler characteristic 
$\mathbf{\chi_o}$} 
Several of the limits we have constructed here admit only branes
wrapping smooth divisors in contrast to the case of Sen's limit of a
general Weierstrass model, which generally leads to singular divisors
of the Whitney umbrella type. We have also obtained a configuration
that mixes a Whitney brane and smooth brane-image-brane pairs.  We
have explicitly checked that when the divisor is singular, the Euler
chactacteristic $\chi_o$ introduced in \cite{Collinucci:2008pf, AlEs}
is the appropriate one to use since it ensures the matching of the D3
tadpole in type IIB and in F-theory.  We have presented a list of all
smooth configurations that would satisfy the F-theory-type-IIB D3
tadpole relation. Several elements of that list have been constructed
in this paper, but others have not, suggesting that there are still
interesting weak coupling limit to be discovered.

\subsubsection*{\bf  Double intersection property and D-brane motion}
The new limits presented here (in \S\ref{morelimits}) have some
intersection properties that do not match those of Sen's limit.  For
example, one would expect a 
 D-brane to intersect an orientifold with even multiplicity.  This is always
the case for Withney D7 branes, the typical example being Sen's limit
of a Weierstrass model. The double intersection property also holds
for brane-image-brane pairs.  The double intersection property is
supported by a probe argument which would imply that it should always
hold for O$7^-$ branes characterized by having a negative D7 charge
and orthogonal gauge group, when a stack of branes is on top of them  
\cite{Collinucci:2008pf}.  However, several of the limits listed in \S\ref{morelimits}
do not satisfy this property. Shall we disregard
such limits or do they correspond to some exotic type of branes and/or
orientifold planes?  Another aspect of Sen's limit of a Weierstrass
model is that the motion of the D7 branes is obstructed by the
structure of the singularities of the D7 brane locus
\cite{Braun:2008ua,Collinucci:2008pf}. Such obstructions are less
sever in $E_6$ and $E_7$ elliptic fibrations since one can have smooth
brane configurations.

\subsubsection*{ \bf String dualities and the  structure of elliptic fibrations}
The F-theory-type-IIB D3 tadpole matching is inspired by string theory
duality. However, the corresponding relation between Euler
characteristics is true in a more general context than its origin
might indicate. We have shown that for each family we have considered
there is a generalization of the F-theory-type-IIB D3 tadpole matching
condition valid at the level of the total Chern classes and
independent from the dimension of the base and the Calabi-Yau
condition.  In fact, in all the examples we have examined, when the
tadpole relation holds it does so as a formal identity in the Chern
classes of the base and of the chosen line bundle, independent of
finer geometric features of the varieties involved in the limit.

\subsubsection*{ \bf Open questions  and future directions}
There are several aspects of the constructions presented here that beg
for additional analysis.  For example, one would like to know if all
the smooth configurations presented in \S\ref{Class.smooth}, which
satisfy the F-theory -type-IIB D3 tadpole matching condition, can be
realized by a weak coupling limit.  It would also be interesting to
enlarge the possible F-theory lift of a type IIB configuration by
studying other families of elliptic fibrations than the $E_8$, $E_7$
and $E_6$ families studied here.  For example, the familly of elliptic
fibration of type $D_5$ in the notation of \cite{Klemm:1996ts} would
be a natural starting point.  It would also be important to determine
the relevance of the limits which lead to brane configurations which
do not satisfy the double intersection property that requires that a
D7 brane always have a double intersection with an O$7^-$ orientifold
plane.  We note that a careful classification of all types of
orientifold planes has been announced in \cite{Distler:2009ri}.  It is
clear that the D3 tadpole is not preserved by birational
transformations since the Euler characteristic can change if one of
the elliptic fibrations is singular or is not a Calabi-Yau. It would
be interesting to study the possible gauge groups associated with each
family. In particular, a discussion of the presence of exceptional
groups for families that are not in Weierstrass form would be
interesting in view of the central role of exceptional groups in
F-theory local model building of Grand Unified Theories (GUTs).

\vspace{1 cm}

{\em Acknowledgments.}  P.A.~thanks the Max-Planck-Institut f\"ur
Mathematik in Bonn for the support in the month of July 2009, when
most of this paper was written. He also thanks S.-T.~Yau, F.~Denef,
and the Jefferson Laboratory at Harvard for the hospitality during a
visit in May 2009, when this work was begun.  M.E. would like to
thanks D.~Bakary, F.~Denef, A.~Strominger, A.~Tomasiello and S.-T.~Yau
for their support.  He also thanks A.~P.~Braun, F.~Denef, A.~Klemm,
H.~Omer, A.~Tomasiello, M.~Wijnholt and H.~Yavartanoo for interesting
discussions.  M.E. is partially funded by a DOE grant  DE-FG02-91ER40654.

\end{document}